\documentclass[letterpaper,12pt]{article}
\usepackage{src/pascal_paper}
\usepackage[section]{placeins}
\usepackage{afterpage}
\usepackage{booktabs}

\usepackage{tikz}
\usetikzlibrary{shapes,snakes}

\definecolor{seq0}{HTML}{A2ABAB}
\definecolor{seq1}{HTML}{7D869C}
\definecolor{seq2}{HTML}{586994}

\newcommand{\myTitle}{Identifying Causal Effects in Information Provision Experiments}

\newcommand{\myName}{Dylan Balla-Elliott}

\newcommand{\thisMonth}{\ifcase \month \or January\or February\or March\or April\or May \or June\or July\or August\or September\or October\or November\or December\fi~ \number \year}

\hypersetup{
pdftitle = {\myTitle},
pdfauthor = {\myName}
}

\addbibresource{src/references.bib}

\DefineBibliographyStrings{english}{
  backrefpage  = {p. },
  backrefpages = {p. },
}

\renewcommand\footnotesize{\small}

\onehalfspacing
\available{pdfs.dballaelliott.com/info_iv.pdf}

\begin{document}
\addtocontents{toc}{\protect\setcounter{tocdepth}{-1}}

\setlength{\abovedisplayskip}{.25em}
\setlength{\belowdisplayskip}{.25em}

\title{\myTitle}
\author{\myName \thanks{\texttt{dbe@uchicago.edu} University of Chicago, Kenneth C. Griffin Department of Economics. \\
Thanks especially to Zo\"e Cullen, Ricardo Perez-Truglia,  Alex Torgovitsky, and Max Tabord-Meehan for early feedback and also to Magne Mogstad, Julia Gilman, Santiago Lacouture, Max Maydanchik, Isaac Norwich, Francesco Ruggieri, Sofia Shchukina, Alex Weinberg, Jun Wong, Itzhak Rasooly, Vod Vilfort, Whitney Zhang and many conference and seminar participants at the University of Chicago, UChicago Booth School of Business, and Purdue for helpful comments and suggestions. I am also indebted to Armona, Cantoni, Coibion, Fuster, Kumar, Gorodnichenko, Roth, Settele, Wiswall, Wohlfart, Yang, Yuchtman, Zafar, and Zhang for their useful replication packages.
    This material is based on work supported by the National Science Foundation Graduate Research Fellowship under Grant No. DGE 1746045. Any opinion, findings, and conclusions or recommendations expressed in this material are those of the authors and do not necessarily reflect the views of the National Science Foundation.
    A companion R package is available at \texttt{dballaelliott.github.io/lls/}.
}
}

\date{\thisMonth}

\begin{titlepage}\maketitle
Standard estimators in information provision experiments place more weight on individuals who update their beliefs more in response to new information.
This paper shows that, in practice, these individuals who update the most have the weakest causal effects of beliefs on outcomes.
Standard estimators therefore understate these causal effects.
I propose an alternative local least squares (LLS) estimator that recovers a representative unweighted average effect in a broad class of learning rate models that generalize Bayesian updating.
I reanalyze six published studies. In five, estimates of the causal effects of beliefs on outcomes increase; in two, they more than double.

\medskip
\noindent JEL CODES: C26, C9, D83, D9
 \end{titlepage}

\clearpage
\clearpage

\newcommand{\panelDate}{20250716}
\newcommand{\activeDate}{20250716}
\newcommand{\passiveDate}{20250716}

\newcommand{\Cov}{\cov}
\newcommand{\Var}{\var}

Information provision experiments have become a standard tool for studying the causal effects of beliefs \citep{wiswallDeterminantsCollege15, bottanChoosingYour22, jensenPerceivedReturns10}. But standard panel and two-stage least squares (TSLS) estimators systematically misrepresent average effects because they overweight individuals who update their beliefs the most. This matters because individuals whose beliefs most strongly affect their choices tend to update their beliefs the least, perhaps because they already sought out information before the experiment began. I propose a local linear slopes (LLS) estimator that weights all individuals equally. In five of six recent studies I reanalyze, LLS yields substantially larger estimates; in two cases the estimates more than double.

This paper is about experiments that study the causal effects of beliefs: how beliefs affect behavior, policy preferences, and even other beliefs.
In these experiments, researchers vary the information (\q{signal}) shown to participants, then estimate the effect of beliefs on behavior using panel or TSLS regressions. It is well known that such estimators target weighted averages of individual causal effects.\footnote{The weighted average interpretation of TSLS follows from \citet{imbensIdentificationEstimation94}. Similar results apply to difference-in-differences and other settings \citep{sunEstimatingDynamic20, goodman-baconDifferenceindifferencesVariation21, callawayDifferenceinDifferencesMultiple21}.}
In information provision experiments, these weights are proportional to the first-stage effect of information on beliefs.

Strong dependence between belief updating and belief effects makes panel and TSLS estimators substantially misrepresent average effects. When belief updating is negatively correlated with belief effects, standard panel or TSLS estimators can severely understate the average effect. The central empirical finding of this paper is that belief effects and belief updating are systematically negatively correlated: individuals whose beliefs most strongly affect their choices tend to update their beliefs least when provided new information.

I therefore propose a local least squares (LLS) estimator that consistently estimates an unweighted average effect, even when there is strong dependence between belief updating and belief effects. Researchers may prefer targeting an unweighted average as it is a representative summary of heterogeneous effects. \footnote{In an early application, \citet{guentherPoliticalRepresentation25} use results from the working paper version of this paper \citep{infoiv_v4}. They use LLS because unequal weights cause \q{2SLS [to] substantially misrepresent average effects} and so \q{we adopt [LLS] which identifies the unweighted average effect (p. 18-19)}}
This estimator can be applied to panel, active control, and passive control experiments.\footnote{
The LLS estimator applies immediately in the panel experiment. In experiments with active control groups, the LLS estimator identifies an unweighted average under a learning rate updating assumption. In experiments with passive control groups, the LLS estimator identifies the unweighted average when the variance of the prior is elicited in addition to the mean and the learning rate comes from Bayesian updating. An alternative approach with a passive control imposes the strong assumption that covariates are sufficiently rich to predict the belief update and that there is no residual variation in beliefs that cannot be predicted (i.e. \q{selection on observables}.)}

I apply the LLS estimator to six recent information provision studies published in leading economics journals.\footnote{
These applications span diverse contexts: college major choice \citep{wiswallDeterminantsCollege15}, housing investment \citep{armonaHomePrice19}, gender policy preferences \citep{setteleHowBeliefs22}, household \citep{rothHowExpectations20} and firm \citep{kumarEffectMacroeconomic23} responses to macroeconomic uncertainty, and protest participation \citep{cantoniProtestsStrategic19}. These six studies include examples of within-person panel experiments, and between person experiments with both active and passive control groups.
}
In five of these six applications, the LLS estimates are meaningfully larger than the panel or TSLS estimates. In two cases the estimates more than double.
To study mechanisms, I show how LLS can also be used to estimate effects of beliefs on outcomes conditional on the learning rate.
Empirically, belief effects are generally larger for the groups with smaller learning rates.
A simple model of endogenous information acquisition can rationalize this pattern. People whose beliefs strongly affect their decisions are incentived to form precise priors; when researchers provide new information, they update only modestly. When beliefs matter less, people start with noisier priors and update more.\footnote{\citet{mackowiakRationalInattentionForthcoming} consider a similar model with rational inattention before and during the experiment. Since they argue that the rational inattention dynamics \textit{before} the experiment dominate, their results are consistent with the model proposed in Appendix \ref{sec:endog_info_model} that does not include rational inattention during the experiment. }

The identification arguments in this paper use results in correlated random coefficients models from \citet{mastenIdentificationInstrumental16} and \citet{grahamIdentificationEstimation12}, generalized here to a nonparametric potential outcomes framework. \citet{vz_aeri} study TSLS in information provision experiments and provide conditions under which TSLS targets \textit{some non-negatively weighted average}. This paper proposes an alternative to TSLS that targets the \textit{equally-weighted average}.

The remainder of this paper is organized as follows.
Section \ref{sec:setup-notation} develops the conceptual framework.
Section \ref{sec:lit_weights} shows that standard panel and TSLS estimators target weighted averages of individual slopes; panel regressions have negative weights.
Section \ref{sec:lls} proposes the LLS estimator, which identifies an unweighted average.
Section \ref{sec:comparing_estimators} shows that under linearity, the unweighted average can be used to extrapolate.
Section \ref{sec:application} shows that attenuation is empirically widespread.
Section \ref{sec:conclusion} concludes.
 
\section{Conceptual Framework and Identifying Assumptions} \label{sec:setup-notation}

This paper is about experiments that study how beliefs affect behavior.
I analyze three leading experimental designs: panel experiments that compare the same individual before and after information provision, active control experiments that compare individuals receiving different signals, and passive control experiments that compare treated individuals to an untreated control group.\footnote{In between-subject experiments (with active or passive controls), I will focus on experimental designs where the information treatment is quantitative, for example \q{\textit{12 percent of the US population are immigrants}} \citep{hopkinsMutedConsequences19,grigorieffDoesInformation20} and not treatments that are qualitative, for example \textit{\q{[t]he chances of a poor kid staying poor as an adult are extremely large}} \citep{alesinaIntergenerationalMobility18}. The results for within-person (panel) experiments extend to qualitative or other kinds of signals.}

The identification argument follows a simple causal chain: treatment assignment $Z$ determines the signal $S$ shown to participants, which affects their beliefs $X$, which in turn affects outcomes $Y$. This $Z \rightarrow S \rightarrow X \rightarrow Y$ structure allows us to study how exogenous variation in information provision translates into belief changes and ultimately behavioral responses.
I formalize this causal chain in three parts: the outcome equation that links beliefs to behavior, the experimental designs that generate exogenous variation in beliefs, and the identifying assumptions that permit causal inference.

\subsection{Potential Outcomes}

The outcome equation allows for arbitrary heterogeneity in how beliefs affect outcomes:
\begin{align}
    Y_i &= G_i(X_i) \label{eq:outcome}
\end{align}
where $Y_i$ is the outcome or behavior of interest, $X_i$ is the belief, and $G_i(\cdot)$ is the individual-specific response function. The function $G_i(\cdot)$ generates potential outcomes: $Y_i(x) = G_i(x)$, where $Y_i(x)$ is $i$'s potential outcome when beliefs are exogenously set to $x$. This formulation places no restriction on  treatment effect heterogeneity; agents can differ both in their average responsiveness to beliefs and in the shape of their response functions.

We assume that beliefs $X_i$ are endogenous in the sense that $\E[Y_i \mid X_i = x] \neq \E[G_i(x)]$ for at least some $x$. This says that the difference in outcomes at two values of $X$ is not a causal effect.\footnote{If $G_i(x) = c x + U_i$ this is a familiar expression of endogeneity bias $\E[U_i \mid X_i] \neq \E[U_i]$.} This occurs when unobserved determinants of outcomes also affect beliefs.

\subsection{Experimental Designs} \label{eq:experiment_setup}
This paper considers three broad classes of information provision experiments. The first design uses within-person panel variation.

\begin{quote}

\textbf{Panel:}
The panel design uses contrasts within-individual before and after the information treatment. The {first-stage} variation in beliefs induced by treatment is the individual difference between beliefs before and after the information treatment.
\end{quote}

The second and third designs use between-person variation, but differ in the construction of the control group.

\begin{quote}
\textbf{Active Control:}
    The active control design uses contrasts between individuals who see a \q{high} signal and those who see a \q{low} signal. The {first-stage} variation in beliefs induced by treatment is the individual difference between potential beliefs if shown the \q{high} signal instead of the \q{low} signal.
\end{quote}

\begin{quote}
\textbf{Passive Control:}
The passive control design uses contrasts between individuals who recieve a signal and those who do not. The {first-stage} variation in beliefs induced by treatment is the individual difference between potential beliefs if shown the signal instead of not being shown the signal.
\end{quote}

Within and between person designs use different kinds of identifying variation and rely on qualitatively different kinds of identifying assumptions. The within person design uses the panel structure on the outcome and does not rely on any assumption on {how} people update beliefs in response to new information. In contrast, between person designs use assumptions on belief updating to match treatment units to the appropriate control units.\footnote{
In principle, panel and active control designs could be combined by eliciting pre-treatment outcomes in an active control experiment. Exploring the identification implications of such hybrid designs is beyond the scope of this paper but is an interesting direction for future research.
}

\subsection{Panel Identifying Assumption}

The identifying assumption is that outcomes follow a \q{panel} form. Let time $t$ have two periods, denoting pre ($t=0$) and post ($t=1$) information provision. Then, let
\begin{equation}
Y_{it} =  G_i(X_{it}) + \gamma_t  \label{eq:panel_outcome}
\end{equation}
The response function $G_i(\cdot)$ is time-invariant but arbitrarily heterogeneous across individuals; the time effects $\gamma_t$ are additively separable. This is a nonparametric generalization of the standard panel model used in the literature (e.g. \cite{wiswallDeterminantsCollege15,armonaHomePrice19}). The special case $G_i(x) = \tau_i x + U_i$ generates the classic linear panel model $Y_{it} =  \tau_i X_i + U_i + \gamma_t$ with heterogeneous treatment effects.

The identifying assumption that different changes in outcomes are due only to different changes in beliefs.\footnote{The time trend $\gamma_t$ is commonplace in empirical practice \citep{armonaHomePrice19,wiswallDeterminantsCollege15}. This allows for all respondents to, for example, respond with a higher number when the outcome is re-elicited, perhaps because of salience or other behavioral factors. The time trend $\gamma_t$ can be interacted with observables $W_i$ to allow for these time trends to vary across observables, like the prior belief. Without a time trend, the model implies that outcomes should not change when beliefs do not change: $\E \bs{\Delta Y_i | \Delta X_i = 0} = 0$. This restriction is testable in the data.}
There are no assumptions on how beliefs are updated; researchers who do not wish to place structure on belief updating may find the panel design particularly appealing.

\subsection{Active and Passive Control Identifying Assumption}

In active and passive control experiments, the relationship between beliefs and outcomes is completely flexible. The identifying assumption is that belief updating follows a simple {learning rate} structure. This includes the workhorse linear updating or \q{signal averaging} models like Bayesian updating. Randomization to a particular signal generates variation in posterior beliefs through this learning rate updating.

\subsubsection{Learning Rate Belief Updating} \label{sec_setup_updating}

For exposition, the main text uses the familiar linear form throughout; potential beliefs are a linear function of the prior $X_i^0$ and an experimental signal $s$:
\begin{align}
        X_i(s) & = \alpha_i \bp{s - X_i^0} + X^0_i \label{eq:beliefs_po_signal}
\end{align}
The heterogeneous coefficient on the signal $\alpha_i$ is often called the learning rate. In this model, posterior beliefs are a weighted average of the prior and the signal, with weight $\alpha_i$ on the signal. This updating rule is widely used in applied work and fits observed belief changes well in information provision experiments.\footnote{See for example \citet{cavalloInflationExpectations17, cullenHowMuch22, giaccobassoWhereMy22, cullenIncreasingDemand23,fusterExpectationsEndogenous22}.}

This linear updating rule is often microfounded in a normal-normal Bayesian updating, but it also arises in several other behavioral models.This class of linear updating models includes rational inattention \citep{fusterExpectationsEndogenous22}, base-rate neglect, over-reaction, under-reaction \citep{gretherBayesRule80} and anchoring on the prior or signal \citep{gabaixBehavioralInattention19}.
\footnote{
Linearity in belief updating can be relaxed as long as differences in updating are still driven only by the learning rate. Nonlinear learning rate models take the form $X_i(s) = \alpha_i f(s, X_i^0) + X^0_i$, where $f(\cdot, X_i^0)$ is any function monotonic in the signal with $f(X_i^0, X_i^0) = 0$. For example, $f$ could be a nonlinear \q{dampener} that discounts signals further away from the prior. Or, it could be asymmetric around zero so that people respond more to signals of a particular sign.
The remainder of the paper uses the linear updating rule with $f(s, X_i^0) = s-X_i^0$ due to its overwhelming popularity in practice and because it can be microfounded in many popular models of belief updating.}
See Appendix \ref{sec:learning_rate_models} for further discussion.

\subsubsection{Randomization and Potential Beliefs}

Denote treatment arms by $Z_i$. In the active and passive control designs, assume that the researcher randomizes over two arms $Z_i \in \{A,B\}$. In the active design, arm $A$ will be the treatment arm that receives the \q{high} signal and arm $B$ will be the treatment arm that receives the \q{low} signal. In the passive design, arm $A$ will be the treatment arm that receives a signal and arm $B$ will be the control arm that does not receive a signal.
Finally, $S_i(z)$ is the signal that is shown to individual $i$ in treatment arm $z$.\footnote{In the panel design, the researcher may randomly assign $Z_i$ in the same way, or may chose to show the information to all participants. If the panel design includes a treatment arm that receives no information, denote that arm with $B$. Since the panel design uses within-person contrasts, identification does not come from randomization across people. Thus it is sufficient to work with the realized signal $S_i$.}

Treatment is assigned randomly in the sense that $Z_i$ is independent of the potential outcomes: the outcome function $G_i(\cdot)$, the prior $X^0_i$, the potential signals $S_i(\cdot)$, and the learning rate $\alpha_i$.
\footnote{While the treatment $Z_i$ will be randomly assigned, it is important to note that the realized signal $S_i(Z_i)$ can generally vary across individuals endogenously.
In \citet{bottanBettingHouse22}, $S_i(A)$ and $S_i(B)$ are high and low estimates of the home value and thus the realized signal is only randomly assigned conditional on the potential signals.}
In passive designs, treatment arm $B$ does not receive any signal. For the sake of completeness, define $S_i(B) \equiv X^0_i$ in passive designs. It will be convenient to work with the following shorthand where potential beliefs are directly a function of the treatment assignment $z$. In a slight abuse of notation, we redefine
\begin{align}
X_i(z) &\equiv   X_i(S_i(z)) = \alpha_i \bp{S_i(z) - X_i^0} + X^0_i
 \label{eq:beliefs_PO}
\end{align}

We will use this equation for potential beliefs along with the potential outcome equations \eqref{eq:outcome} and \eqref{eq:panel_outcome} to study common empirical specifications.

\section{Standard Panel and TSLS Estimators} \label{sec:lit_weights}

The following three sections compare standard estimators to a local least squares (LLS) alternative. Standard estimators weight individuals by their belief updating; LLS weights all individuals equally. When belief updates are negatively correlated with causal effects, standard estimators understate the average effect.
The current section begins by introducing the individual slopes, which are the causal building block of all the estimators considered in this paper, and then shows that standard panel and TSLS estimators recover weighted averages of these slopes.

\subsection{Individual Slopes: The Causal Building Block}

Define the individual slope as the ratio of outcome change to belief change induced by the experiment:
\begin{equation}
	\beta_i \equiv \frac{G_i(X_i(A)) - G_i(X_i(B))}{X_i(A) - X_i(B)} \label{eq:defn_indiv_slopes}
\end{equation}
This is the average rate of change in individual $i$'s outcome as beliefs move from $X_i(B)$ to $X_i(A)$, which are the individual-specific beliefs in treatment arms $A$ and $B$. Equivalently, this is the individual-specific average partial effect $G'_i(x)$ over the individual-specific interval of beliefs induced by the experiment.
These individual slopes $\beta_i$ thus depend both on the individual response function $G_i(\cdot)$ and the variation in beliefs induced by the experiment
$\bc{X_i(B), X_i(A)}$. In the panel design, define $X_i(A) \equiv X_{i1}$ and $X_i(B) \equiv X_{i0}$.
The standard estimators used in the literature and the new LLS estimator aggregate these individual slopes differently. The differences between the parameters targeted by LLS and TSLS or panel estimators come \textit{entirely} from differences in aggregation. The remainder of this section characterizes standard panel and TSLS estimators.

\subsection{Standard Panel and TSLS Specifications}

Standard estimators in information provision experiments yield weighted averages of individual effects $\beta_i$, with weights proportional to belief updating. In panels, individuals with below-average belief updates receive negative weights.

\begin{equation}
    \beta^{design} \equiv \E\bs{ \beta_i \times \omega_i(design)} \label{eqn:standard_causal_param}
\end{equation}

The precise form of these weights varies, but in all three cases, standard specifications weight individual effects $\beta_i$ in proportion to the first-stage belief updating. In all specifications, these weights integrate to one. Appendix \ref{sec:weight_derivations} contains derivations for all expressions in this section and Appendix \ref{app_tsls} provides a more general discussion of TSLS in information experiments.

We now examine three representative specifications and derive the implicit weights each places on different individuals.

\subsubsection{A Representative Panel Specification} \label{sec_panel_fd_reg}

\citet{armonaHomePrice19} use a regression in first-differences. Since there are only two time periods, this is equivalent to a panel regression with individual and time fixed effects. Let $\Delta X_i$ denote the difference between the post- and pre-treatment observations, $X_{i1} - X_{i0}$.  The regression specification is simply
\begin{align}
    \beta^{Panel} &\equiv \frac{\cov \bs{\Delta Y_i,  \Delta X_i}}{\var \bs{\Delta X_i}} \label{eq:panel_regression}
    \intertext{which has implied weights}
    \omega_i(Panel) &\propto \Delta X_i(\Delta X_i - \E[\Delta X_i]) \label{eq:panel_weights}
\end{align}

The regression of $\Delta Y_i$ on $\Delta X_i$ and a constant can assign negative weights to observations with $\Delta X_i$ between zero and the mean $\E[\Delta X_i]$.

\paragraph{Heterogeneity Bias Causes Negative Weights in Panel Regressions}

This negative weights result restates \citeauthor{chamberlainMultivariateRegression82}'s classic (\citeyear{chamberlainMultivariateRegression82})  \q{heterogeneity bias} as negative weights in a weighted average of individual effects. A closely related expression appears in Theorem 3.4c of \citet{callawayDifferenceinDifferencesContinuous25}, who show that units with below-mean treatment intensity receive negative weights in difference-in-differences with continuous treatment. The panel regression here is analogous: it compares outcomes for big changers to small changers. Small changers act as the control group and their outcomes are subtracted from outcomes for big changers. Increasing the treatment effects of small changers thus \textit{decreases} the slope estimate. This is what is means for them to have negative weights. Heterogeneity bias arises because these cross-update comparisons are contaminated by differences in treatment effects.

\subsubsection{A Representative Active Control Specification}

   \citet{setteleHowBeliefs22} uses an IV specification where assignment to the \q{high} signal $T_i \equiv \1\bc{Z_i = A}$ is a binary instrument for beliefs. The estimand takes the canonical Wald form:

   \begin{align}
    \beta^{Active} &\equiv \frac{\E \bs{ Y \mid Z = A} - \E \bs{ Y \mid Z = B}}{\E \bs{ X \mid Z = A} - \E \bs{ X \mid Z = B}} \label{eq:active_regression} \\
    \omega_i(Active) &\propto X_i(A) - X_i(B)
    \intertext{which under learning rate updating simplifies further to}
    \omega_i(Active) &\propto \alpha_i (S_i(A) - S_i(B)) \label{eq:active_weights}
\end{align}

These weights are non-negative under learning rate updating with $\alpha_i \geq 0$ and in a general class of updating models when a monotonicity assumption holds such that $(X_i(A)-X_i(B))$ has the same sign for everyone.

\subsubsection{A Representative Passive Control Specification} \label{sec:main_passive_exp_weights}

\citet{cullenIncreasingDemand23} use an IV specification where the instrument is an indictor for assignment to the information treatment interacted with the initial gap in beliefs.\footnote{\citet{vz_aeri} point out that similar specifications that also include the treatment indictor as an excluded instrument have negative weights.}
\begin{equation}
T^{ex}_i \equiv T_i(S_i(A) - X_i^0)
\end{equation}
Since these specifications control for the exposure $S_i(A) - X_i^0$, the residual variation in the instrument is simply a re-centered version of the instrument.\footnote{To see this, notice that random assignment implies that
$\E \bs{T^{ex}_i \mid S_i(A) - X_i^0} = \E \bs{T_i} \bp{S_i(A) - X_i^0} = \L \bs{T^{ex}_i \mid S_i(A) - X_i^0}$. By FWL $\wt{T}^{ex}_i  \equiv T^{ex}_i - \L \bs{T^{ex}_i \mid S_i(A) - X_i^0} = (T_i - \E[T_i])(S_i(A) - X_i^0) $. }
\begin{equation}
\wt{T}^{ex}_i \equiv (T_i - \E[T_i])(S_i(A) - X_i^0)
\end{equation}
The TSLS coefficient is then given by
\begin{align}
    \beta^{Passive}  &\equiv \frac{\cov \bs{\wt{T}^{ex}_i, Y_i}}{\cov \bs{\wt{T}^{ex}_i, X_i}} \label{eq:exposure_reg} \\
    \omega_i(Passive) &\propto (X_i(A) - X_i(B)) (S_i(A)- X_i^0) \label{eq:exposure_reg_base_wt} \\
    \intertext{which under learning rate updating simplifies further to}
    \omega_i(Passive) &\propto \alpha_i (S_i(A)-X^0_i)^2 \label{eq:exposure_reg_simple_wt}
\end{align}

These weights are non-negative under learning rate updating with $\alpha_i \geq 0$ and in a general class of updating models when monotonicity holds: $\text{sign}(X_i(A)-X_i(B)) = \text{sign}(S_i(A)-X_i^0)$.

\subsection{Discussion}

The key takeaway from these expressions is that these standard specifications weight individual effects by the strength of belief updating. In the active and passive controls, weights are non-negative and thus are \q{weakly causal}.

 \section{The Local Least Squares Estimator} \label{sec:lls}

This section presents a local least squares (LLS) estimator that recovers an equally weighted average of individual belief effects. With a linear outcome equation, these individual slopes have a structural interpretation as partial derivatives of the outcome with respect to beliefs and so the equally weighted average is the (structural) average partial effect (APE).

LLS is a control function estimator. It works by constructing a vector of controls that isolates the experimental variation in beliefs. In this setting, learning rate updating means that people who have the same prior, the same potential signals, \textit{and the same learning rate} have the same potential beliefs; the only variation in their actual beliefs comes from the random assignment to the actual signal.
The LLS approach aggregates many \q{local} regressions that use only this exogenous (i.e. experimental) variation in beliefs.\footnote{\citet{mastenIdentificationInstrumental16, grahamIdentificationEstimation12} show how to construct these \q{local} regressions in panel and IV settings more generally. I generalize their results from the linear random coefficients model to a more general nonparametric potential outcome model.}

\subsection{Intuition: Conditioning on Potential Beliefs} \label{sec:ape_identification}

The LLS estimator recovers equally weighted averages of individual slopes $\E \bs{\beta_i}$ by constructing local regressions that isolate purely experimental variation in beliefs. The ideal regression conditions on the potential beliefs $X_i(A)$ and $X_i(B)$, which isolates only the remaining variation in beliefs that comes from being assigned randomly to treatment $A$ or $B$. This ideal regression is:

\begin{equation*}
    \frac{\cov \bs{Y_i, X_i \mid X_i(A) = x_A, X_i(B) = x_B}}{\var \bs{X_i \mid X_i(A) = x_A, X_i(B) = x_B}} = \E \bs[\bigg]{\underbrace{\frac{G_i(X_i(A)) - G_i(X_i(B))}{X_i(A) - X_i(B)}}_{\equiv \beta_i} \mid X_i(A) = x_A, X_i(B) = x_B }
\end{equation*}

This regression recovers a conditional average $\E \bs[\big]{\beta_i \mid X_i(A) = x_A, X_i(B) = x_B }$. Iterating expectations thus recovers the average individual slope $\E[\beta_i]$. This is an easily interpretable causal parameter: it answers the question, ``On average, how much do outcomes change per unit change in beliefs, over the range of beliefs induced by the experiment?''.

The LLS estimation strategy also produces intermediate estimates $\E[\beta_i \mid \alpha_i]$ that reveal how causal effects vary with belief updating.
Many behavioral models make strong predictions about the relationship between belief updating and belief effects \citep{mackowiakRationalInattentionForthcoming, yangDecisionRelevanceSubjective24, enkeBehavioralAttenuation24, fusterExpectationsEndogenous22}.
Section \ref{sec:application} presents estimates of these conditional average slopes to document strong negative correlation between belief updates and causal effects across a range of settings.

The identification strategy in practice is then to condition on a set of controls that is as good as conditioning on the potential beliefs directly. The following section shows how to construct feasible local regressions.

\subsection{Constructing Feasible Local Regressions}

The following sections show to construct feasible local regressions in the three experimental designs. Appendix \ref{sec:identification_derivation} provides proofs for the results in this section.

\subsubsection{Local Regressions in Panel Experiments} \label{main_panel_APE_id}
The panel approach works with any information treatment (including qualitative treatments or bundles of signals) because identification relies only on the panel structure, not on the content of the signal.

For any belief change $x \neq 0$:
\begin{equation}
\E \bs{\beta_i \mid \Delta X_i = x } = \frac{\cov \bs{\Delta Y_i, \Delta X_i \mid \Delta X_i \in \bc{0, x}}}{\var \bs{\Delta X_i \mid \Delta X_i \in \bc{0, x}}} \label{eq:panel_LLS}
\end{equation}
The right hand side is a feasible local regression using only observations with $\Delta X_i = x$ or $\Delta X_i = 0$. Iterating over $x$ and averaging yields $\E[\beta_i]$. This requires that some individuals have (close to) zero change in beliefs.\footnote{This is an easily verifiable condition. It is satisfied if $P\bs{\Delta X_i = 0} > 0$, or more generally if $\Delta X_i$ has positive mass in any neighborhood around zero. See \citet{grahamIdentificationEstimation12} for detailed discussion of technical considerations with continuous $\Delta X_i$.}

\subsubsection{Local Regressions in Active Control Experiments}
Active designs rely on the Bayesian updating assumption \eqref{eq:beliefs_PO} and identify learning rates directly from observed belief updates: $\alpha_i = \bp{X_i - X_i^0}/\bp{S_i - X_i^0}$. Under Bayesian updating, people with the same learning rate, prior, and potential signals have the same potential beliefs; the only remaining variation comes from random assignment.

The control vector is $C_i \equiv \bs{\alpha_i \; X_i^0 \; S_i(A)\;  S_i(B)}$. Conditional on $C_i = c$:
\begin{equation}
\E \bs{\beta_i \mid C_i = c } = \frac{\cov \bs{Y_i, X_i \mid C_i = c}}{\var \bs{X_i \mid C_i = c}}
\label{eq:main_active_LLS_id_linear}
\end{equation}
Iterating over $c$ and averaging yields $\E[\beta_i]$. The regression is feasible when $(S_i - X_i^0) \neq 0$ and $\var\bs{X_i \mid C_i = c} > 0$, which excludes cases with no learning ($\alpha_i = 0$) or identical signals ($S_i(A) = S_i(B)$)

\subsubsection{Local Regressions in Passive Control Experiments} \label{sec:lls_passive}

Passive designs also rely on the Bayesian updating assumption \eqref{eq:beliefs_PO}, but require additional assumptions because learning rates for the control group are unobserved. Consider two possible approaches to infer learning rates in the control group:

\paragraph{Case 1: Observed Prior Variance} In normal-normal Bayesian updating, $\alpha_i = {{\sigma^2_X}_i}/ \bp{{\sigma^2_X}_i  + \sigma^2_S}$. If signal precision $\sigma^2_S$ is common across individuals, then conditioning on the rank of prior variance ${\sigma^2_X}_i$ is equivalent to conditioning on $\alpha_i$. The control vector becomes $C_i \equiv \bs{\text{rank}\bp{ {\sigma^2_X}_i} \; X_i^0 \; S_i(A)}$.

\paragraph{Case 2: Rich Observables} When researchers can predict beliefs from observables \citep{ballaelliott22,cantoniProtestsStrategic19}, they can use \textit{predicted updates} instead of observed updates. The implied predicted learning rate $\wt{\alpha}_i$ replaces the observed rate. The control vector becomes $C_i \equiv \bs{\wt{\alpha}_i \; X_i^0 \; S_i(A)}$.\par

In either case, under the linear outcome equation \eqref{eq:outcome} and Bayesian updating \eqref{eq:beliefs_PO}:
\begin{equation}
\E \bs{\beta_i \mid C_i = c } \equiv \frac{\cov \bs{Y_i, X_i \mid C_i = c}}{\var \bs{X_i \mid C_i = c}}
\end{equation}

Appendix \ref{app_ID_passive_APE} formally states the assumptions in both of these cases.

\subsubsection{Comparing Assumptions Across Designs} \label{sec_identifying_assns}

 The three experimental designs require progressively stronger assumptions to implement LLS. Panel designs impose no new behavioral assumptions. Active designs require Bayesian updating. Passive designs require Bayesian updating and also require either elicited prior variances or rich observables to infer unobserved learning rates.

The assumptions in the active case are weaker than in the passive case because in the active case researchers observe all participants update beliefs in response to new information. The experiment {reveals} heterogeneity in belief updating. In contrast, in a passive design, researchers need to use observables to {infer} heterogeneity in belief updating for a control group that the researcher never sees update their beliefs.\footnote{Recall that the learning rate is identified from the observed update $\alpha_i = \bp{X_i - X_i^0} / \bp{S_i(Z_i) - X_i^0}$, which is undefined for the passive control group that receives no information. Randomization is enough to ensure that the learning rates have the same distribution in both groups, but the individual learning rates are not directly identified in the passive control group.
} This suggests that researchers interested in implementing an LLS estimator may find active designs more attractive since they reveal more information about belief updating.\footnote{There are many design considerations beyond the scope of this paper. \citet{haalandDesigningInformation23} discuss implementation considerations of active and passive control designs. \citet{listExperimentalistLooks25} discusses within- and between-subject experimental designs more generally.}

\subsection{Practical Implementation}

Conditioning on high-dimensional control vectors is often impractical in experimental samples. When belief updating is linear in the signal and prior, it is sufficient to control for $C_i$ semi-parametrically.
The local regressions in between-person designs need only condition on the learning rate and can simply control linearly for the prior and signals in each local regression. In passive designs, or designs with person-specific high and low signals (i.e. \citet{rothRiskExposure22}), it is also necessary to reweight by the inverse of the exposure.
This weighted local regression recovers $\E \bs{\beta_i \mid \alpha_i}$. Appendix \ref{app_linear_controls_LLS} shows that this modified local regression is sufficient and Appendix \ref{sec:estimation} provides general implementation guidance.
 \section{Comparing Estimators and Interpreting Individual Slopes} \label{sec:comparing_estimators}

The estimators in Sections \ref{sec:lit_weights} and \ref{sec:lls} target parameters that can be written as $\E[\beta_i \times \omega_i]$ for some weights $\omega_i$. The interpretation of these parameters depends on the interpretation of the individual slopes $\beta_i$, but the difference between estimators comes only from the weights $\omega_i$. LLS assigns equal weights. Under linearity, these equal weights deliver the APE, which has a structural interpretation that permits extrapolation. Appendix \ref{app:nonlinearity_details} discusses the nonlinear case in greater detail.

\subsection{Equal Weights Deliver a Representative Average} \label{sec:equal_weights}

With treatment effect heterogeneity, researchers must decide how to summarize heterogeneous effects. LLS recovers a simple average $\E[\beta_i]$. This equally weighted average $\E[\beta_i]$ answers the question: \q{On average, how much did outcomes change per unit change in beliefs?} Like the non-parametric ATE $\E[G'_i(X_i)]$, this parameter is local to the variation the experiment actually induced \citep{heckmanChapter7107a}.
A TSLS-weighted average may be policy-relevant when the intervention under consideration is information provision, since it captures effects among those whose beliefs would actually change.\footnote{If the policy question is whether to implement an information campaign, the reduced form (the effect of treatment assignment on outcomes) answers this directly.} However, the attenuation documented in Section \ref{sec:application} suggests that relying on TSLS outside this narrow case is risky: researchers may conclude that belief effects are generally unimportant on the basis of an unrepresentative average.

\subsection{Under Linearity, the Average Slope Permits Extrapolation} \label{sec:linearity_ate}

In the linear outcome equation $G_i(x) = \tau_i x + U_i$, the parameter $\tau_i$ is structural: it fully characterizes $i$'s response to any hypothetical belief shift, not just those induced by the experiment. The average $\E[\tau_i]$ inherits this structural property: on average, a one-unit increase in beliefs causes an $\E[\tau_i]$-unit increase in outcomes, regardless of initial belief levels. This permits extrapolation; predictions for hypothetical interventions that shift beliefs by any amount can be formed by scaling the average effect appropriately.\footnote{Since $\tau_i$ is the individual partial effect, the average $\E[\tau_i]$ is also called the average partial effect (APE).}

\section{Empirical Applications} \label{sec:application}

This section demonstrates that attenuation due to dependence between belief updating and belief effect is empirically relevant. I compare standard panel and TSLS specifications to LLS estimates in six recent studies from leading economics journals .\footnote{
I searched the Web of Science database for papers in the top five economics journals, ReStat, AER: Insights, and all AEJs containing ``beliefs,'' ``information,'' or ``perception'' together with ``experiment'' or ``treatment.'' This yielded 116 potentially eligible experiments. I replicated the two most highly cited studies of each experimental design. To standardize the presentation of the results, I flip the sign of the outcome variable when necessary to ensure that mean effects are always positive. I also omit additional demographic controls and probability weights from all estimates for simplicity.}
See Appendix \ref{sec:estimation} for estimation details.

Table~\ref{tab:all_applications} contrasts LLS estimate with estimates recovered by the standard specification in each study. In five of the six studies, standard estimators are substantially attenuated.
Figure~\ref{fig:cape} plots an estimate of conditional slopes for each study: $\E \bs{\beta_i \mid \abs{\Delta X_i}}$ in panel experiments (Panel A) and $\E \bs{\beta_i \mid \text{rank}(\alpha_i)}$ in active and passive control experiments (Panels B and C). These curves directly that people with the strongest causal effects tend to have smaller belief updates.

\subsection{Results from Panel Experiments}

\citet{wiswallDeterminantsCollege15} study how beliefs about field-specific earnings affect college students' major choices.
The panel estimate of 0.32  (s.e. 0.086) is substantially smaller than the LLS estimate of 0.721 (s.e. 0.33), with the LLS estimate being 125\% larger.
\citet{armonaHomePrice19} study how beliefs about home prices affect investment decisions.
The panel estimate of 1.15  (s.e. 0.234) is smaller than the LLS estimate of  1.8 (s.e. 0.381), with the LLS estimate being over 50\% larger.

\subsection{Results from Active Control Experiments}

\citet{setteleHowBeliefs22} studies how beliefs about the gender wage gap affect support for gender equality policies. The TSLS estimate of 0.096 (s.e. 0.033) is substantially smaller than the LLS estimate of 0.16 (s.e. 0.042), with the LLS estimate being 66\% larger.
\citet{rothRiskExposure22} study how recession expectations affect subjective personal unemployment risk. Their TSLS estimate of 0.755 (s.e. 0.433) is somewhat smaller than the LLS estimate of 0.882 (s.e. 0.379), with the LLS estimate being 17\% larger.

\subsection{Results from Passive Control Experiments} \label{sec:applications_passive}

\citet{kumarEffectMacroeconomic23} study how beliefs about GDP growth affect employment decisions. The TSLS estimate 0.466 (s.e. 0.19) is smaller than the LLS estimate 1.787 (s.e. 0.409), with the LLS estimate being 284\% larger.
\citet{cantoniProtestsStrategic19} study how beliefs about others' protest participation affect one's own willingness to participate. The TSLS estimate (0.68, s.e. 0.253) and the LLS estimate (0.18, s.e. 0.133) are both quite noisy, making it difficult to draw strong conclusions about the direction or magnitude of any difference. The difference between the TSLS and LLS estimates is suggestive evidence that people with larger belief effects had \textit{larger} belief updates. However, the conditional effects in Panel C.ii of Figure \ref{fig:cape} reveals only modest variation across learning rate ranks, with quite wide confidence intervals.\footnote{Concerns of attenuation are only one reason among many to consider using the LLS estimator. The estimator consistently recovers the unweighted average regardless of the sign of dependence between belief updating and treatment effects. The pattern of attenuation observed in five of six applications is an empirical finding, not a mechanical feature of the estimator.}

\subsection{Discussion} \label{sec:applications_discussion}

The conditional effects in Figure \ref{fig:cape} reveal that individuals who update their beliefs the least have the strongest causal effects across many applications. This provides direct empirical support for models of endogenous information acquisition where people with decision-relevant beliefs invest in forming precise priors.
Indeed, the mechanism where people are well informed about things that matter for their decisions and so respond less to new information is quite general and reflects basic features of rational inattention (Appendix \ref{sec:endog_info_model}, see also \citet{mackowiakRationalInattentionForthcoming, fusterExpectationsEndogenous22, cavalloInflationExpectations17}).
 \section{Conclusion}
\label{sec:conclusion}

Standard empirical specifications in information provision experiments systematically understate the causal effects of beliefs on behavior. This paper demonstrates that in five of six high-profile studies in leading economics journals, ranging from college major choice to macroeconomic expectations, LLS estimates average effects of beliefs that are larger than estimates from standard specifications.

 \clearpage
\begin{table}[tbp]
\centering
\caption{LLS and Standard Specifications in Six Studies}
{\small
\begin{tabular}{lcc}
\toprule
\textbf{Panel A: Panel Experiments}  & LLS & FD Regression \\
\hspace{1em} Wiswall and Zafar (2015) & 0.721 & 0.320 \\
& (0.290) & (0.086)
  \\ 
\hspace{1em} Armona, Fuster, and Zafar (2019) & 1.800 & 1.147 \\
& (0.387) & (0.234)
  \\[-0.2em]
\midrule
\textbf{Panel B: Active Experiments}  & LLS & TSLS Regression \\
\hspace{1em} Settele (2022) & 0.160 & 0.096 \\
& (0.042) & (0.033)
  \\ 
\hspace{1em} Roth, Settele, and Wohlfart (2022) & 0.882 & 0.755 \\
& (0.365) & (0.435)
  \\[-0.2em]
\midrule
\textbf{Panel C: Passive Experiments}  & LLS & TSLS Regression \\
\hspace{1em} Kumar, Gorodnichenko, and Coibion (2023) & 0.100 & 0.023 \\
& (0.087) & (0.037)
  \\ 
\hspace{1em} Cantoni, Yang, Yuchtman, and Zhang (2019) & 0.180 & 0.680 \\
& (0.133) & (0.253)
  \\
\bottomrule
\end{tabular}
\label{tab:all_applications}
}
\begin{quote} \footnotesize \textit{Notes:}  \textit{Notes:} This table compares local least squares (LLS) estimates of the unweighted average effect to standard first-difference (FD) or two-stage least squares (TSLS) estimates across all six replication studies. 
Bootstrap standard errors are reported in parentheses. Appendix \ref{sec:estimation} discusses implementation details and reports results for alternative choices of bandwidth.
\end{quote}
\end{table}

\begin{figure}
    \centering
    \caption{Dependence between Belief Updating and Belief Effects in Six Studies}
    \label{fig_capes}
    { \small
    \textsc{Panel A:} Panel Experiments \\[.5em]
    \begin{minipage}{.5\textwidth}
    \centering I: \citet{wiswallDeterminantsCollege15} \\
            \includegraphics[width=\linewidth]{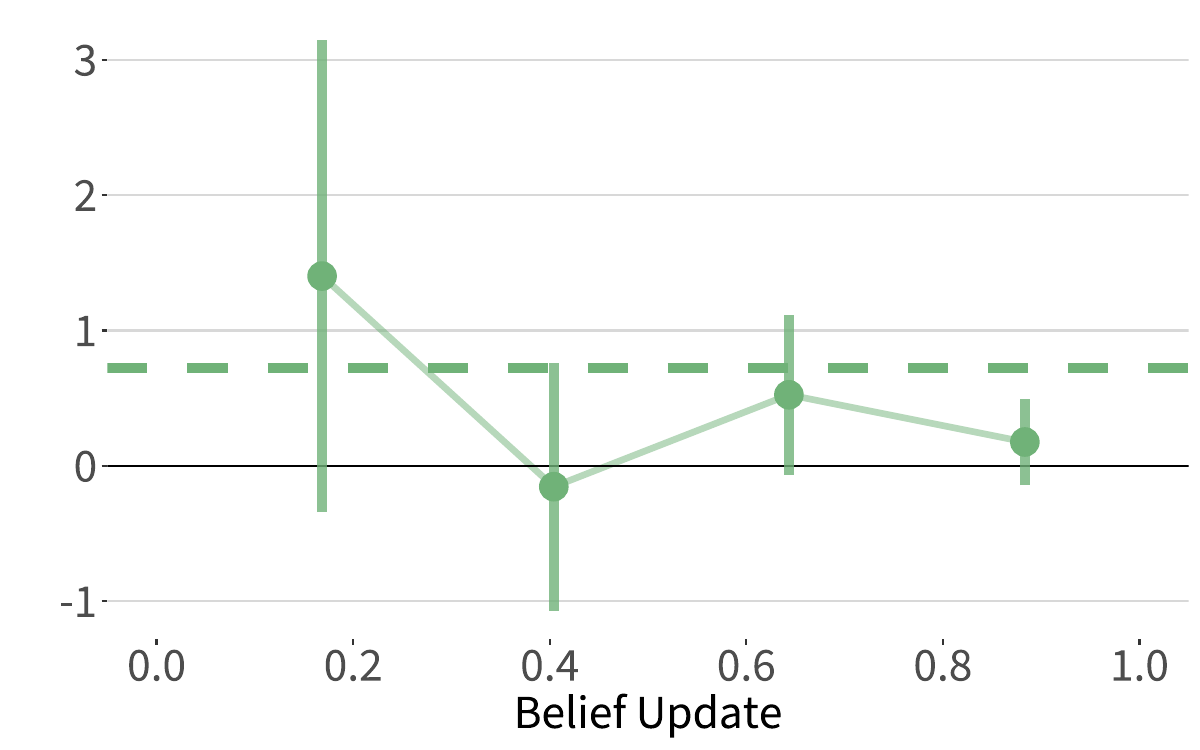}
    \end{minipage}\begin{minipage}{.5\textwidth}
    \centering II: \citet*{armonaHomePrice19} \\
            \includegraphics[width=\linewidth]{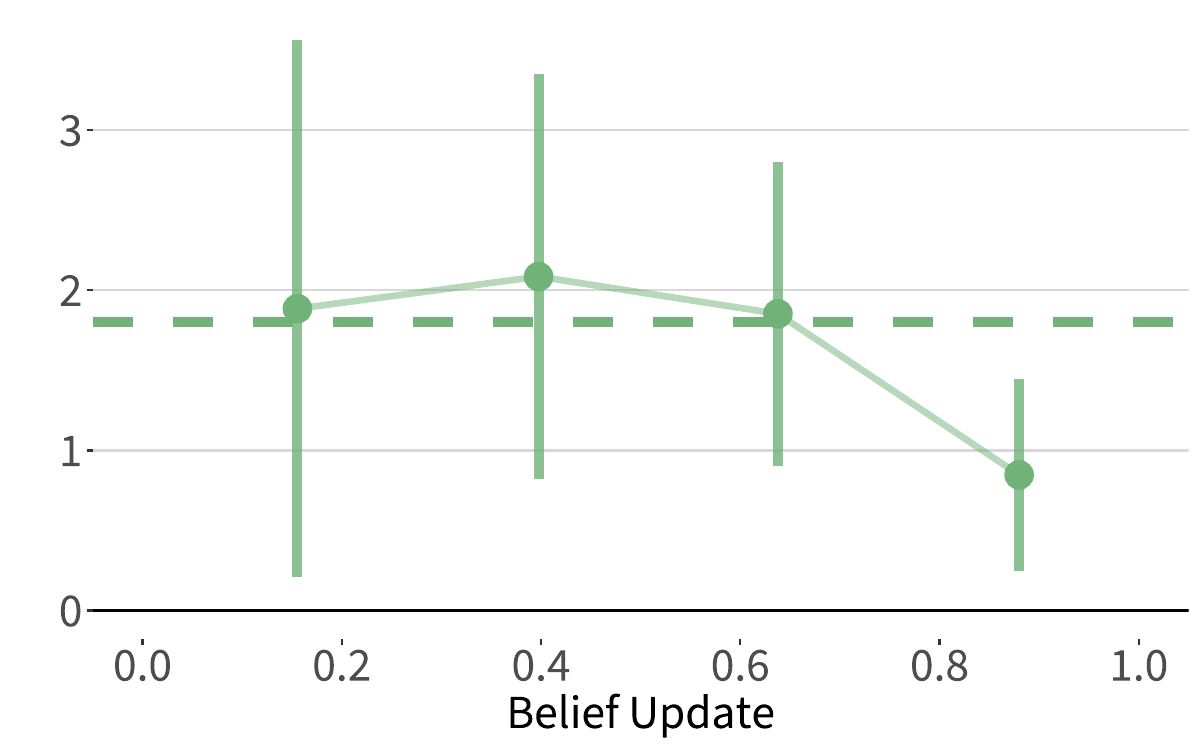}
    \end{minipage} \\
    \textsc{Panel B:} Active Experiments \\[.5em]
    \begin{minipage}{.5\textwidth}
    \centering  I: \citet{setteleHowBeliefs22} \\
            \includegraphics[width=\linewidth]{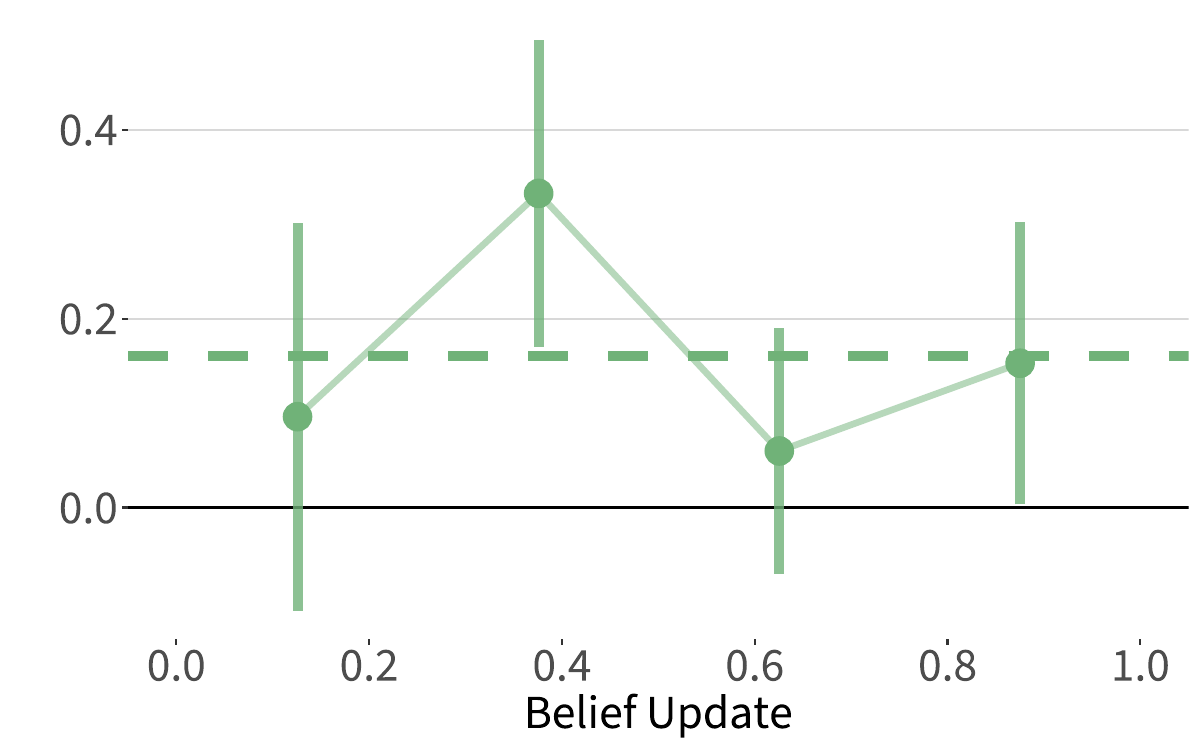}
    \end{minipage}\begin{minipage}{.5\textwidth}
    \centering  II: \citet{rothRiskExposure22} \\
            \includegraphics[width=\linewidth]{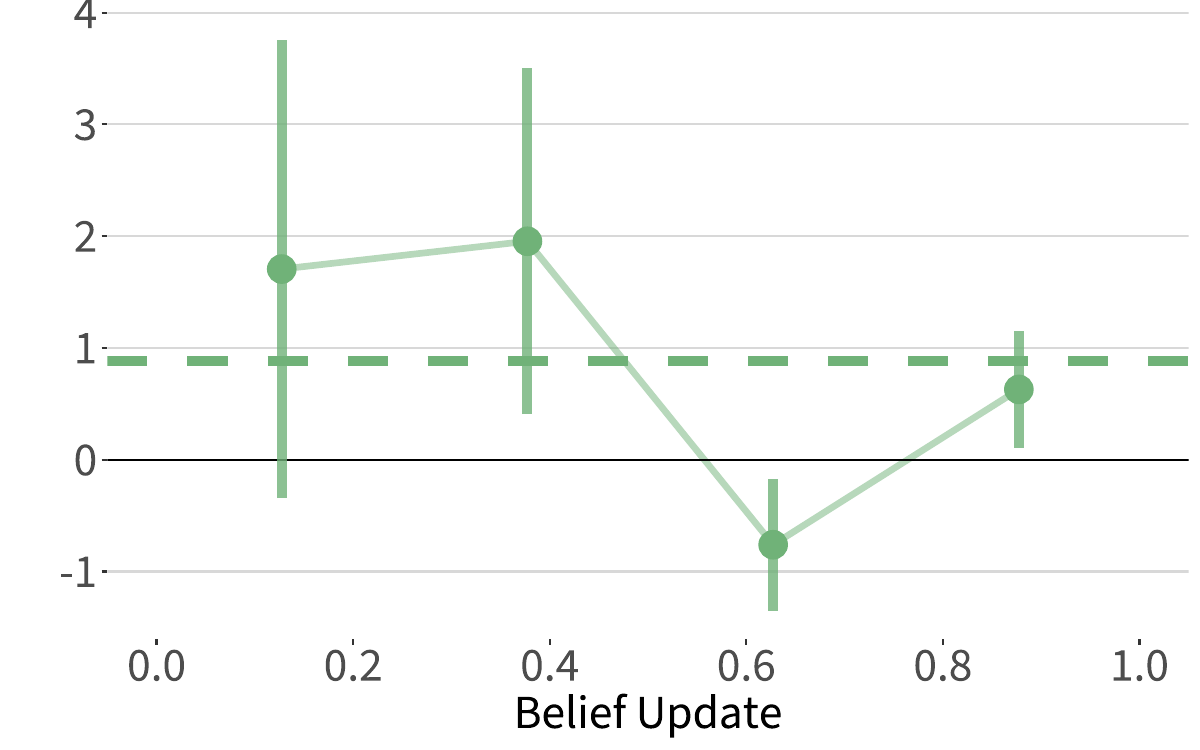}
    \end{minipage} \\
        \textsc{Panel C:} Passive Experiments \\[.5em]
    \begin{minipage}{.5\textwidth}
    \centering  I: \citet*{kumarEffectMacroeconomic23} \\
            \includegraphics[width=\linewidth]{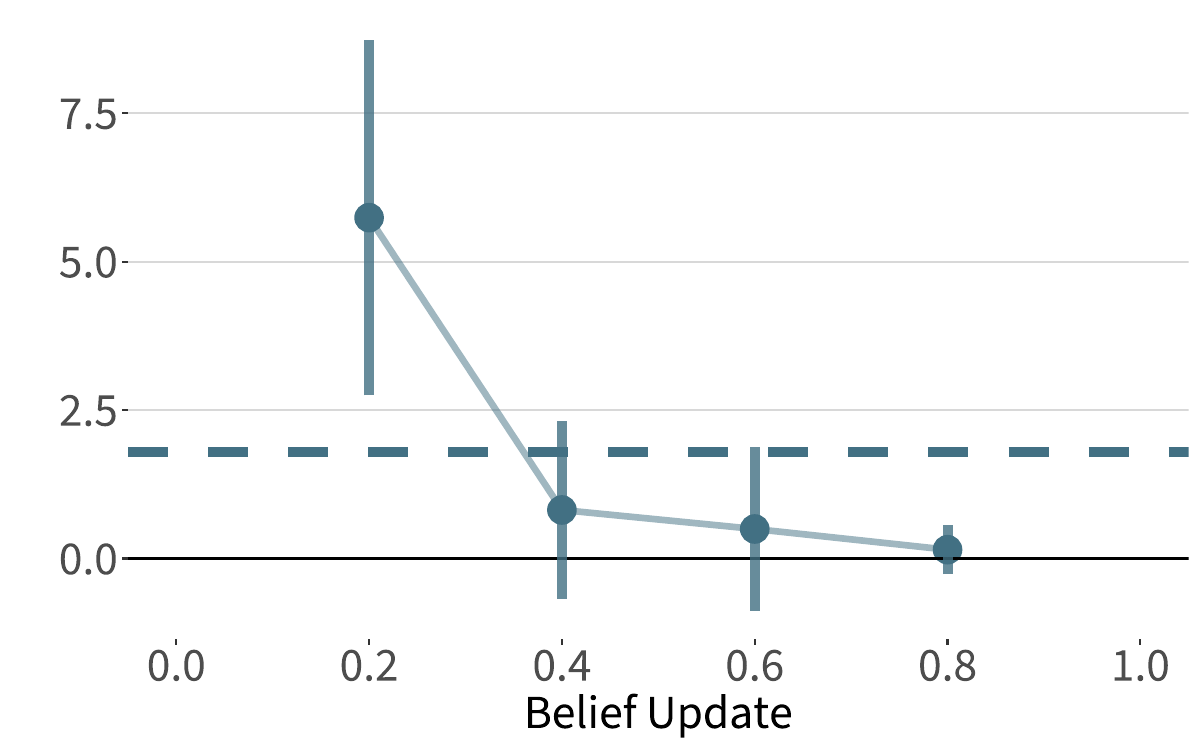}
    \end{minipage}\begin{minipage}{.5\textwidth}
    \centering  II: \citet*{cantoniProtestsStrategic19} \\
            \includegraphics[width=\linewidth]{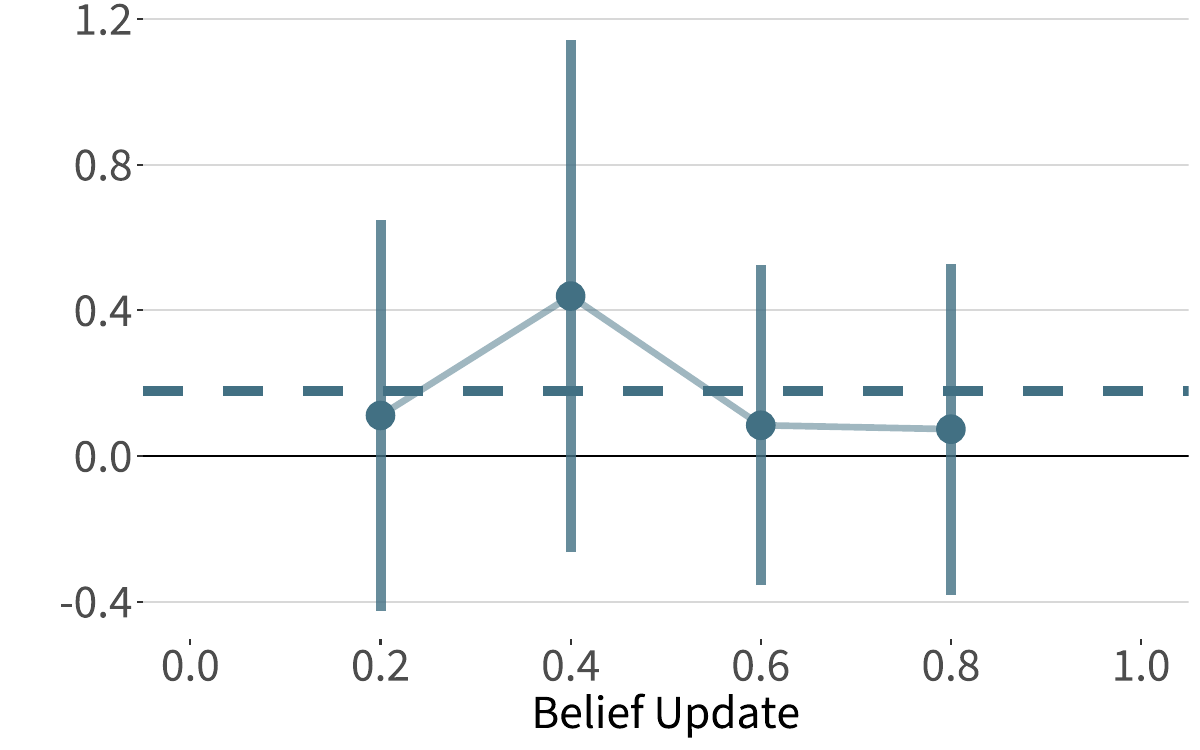}
    \end{minipage}
    \label{fig:cape}
    }
    \begin{quote} \footnotesize 
\textit{Notes:} Each panel plots conditional estimates of the effects of beliefs on outcomes $\E[\beta_i \mid \cdot]$. Panel A (panel experiments) conditions on the absolute value of observed belief changes $\abs{\Delta X_i}$. Panels B and C (active and passive control experiments) condition on the rank of the estimated learning rate $\alpha_i$, which measures responsiveness to experimental information. In all panels, smaller values on the horizontal axis correspond to individuals who update their beliefs less. Confidence intervals are twice the bootstrap standard error.
\end{quote}
\end{figure}
 
{
\clearpage \singlespacing \printbibliography
}

\clearpage

\appendix
\renewcommand{\thefigure}{\thesection.\arabic{figure}}
\renewcommand{\thetable}{\thesection.\arabic{table}}

\counterwithin{figure}{section}
\counterwithin{table}{section}

\addtocontents{toc}{\protect\setcounter{tocdepth}{4}}

\clearpage
{\onehalfspacing
\renewcommand{\contentsname}{Contents of Online Appendix}
\pagenumbering{roman} \tableofcontents \clearpage \pagenumbering{arabic}
}
\clearpage
\renewcommand*{\thepage}{A.\arabic{page}}

\section{Learning Rate Models} \label{sec:learning_rate_models}

This appendix provides microfoundations for the belief updating model in \eqref{eq:beliefs_po_signal}. Section \ref{sec:generalized_updating} introduces a general class of updating rules that preserve rank invariance---the minimal condition required for LLS to have equal weights. Several behavioral deviations from Bayesian updating fall within this class.

\subsection{Generalized Updating and Rank Invariance} \label{sec:generalized_updating}

For LLS to have equal weights across individuals, we need \emph{rank invariance}: the relative magnitude of belief updates must be consistent across signals. Formally, let $X_i(s, x_0)$ denote individual $i$'s posterior given signal $s$ and prior $x_0$. If $X_i(s_A, x_0) > X_j(s_A, x_0)$ for some signal $s_A$, then $X_i(s_B, x_0) > X_j(s_B, x_0)$ for any other signal $s_B$ on the same side of the prior.

Intuitively, people who respond more strongly to one signal also respond more strongly to another. Suppose Chris and Dianne have the same prior and receive the same signal above their prior. If Chris's posterior is higher than Dianne's, rank invariance requires that Chris's posterior would also be higher if both received a different signal that was also above their prior.

This section introduces a general class of updating rules that preserve rank invariance. The Bayesian baseline is a special case, but so are several behavioral deviations.

\subsubsection{General Learning Rate Updating}

Consider a general updating rule:
\begin{equation}
X_i(s, x_0) = x_0 + \alpha_i \times f(s,x_0)
\label{eq:general_updating}
\end{equation}

where $f(\cdot, X_i^0)$ is any function monotonic in the signal with $f(X_i^0, X_i^0) = 0$. The function $f$ is a link function that allows for nonlinearity in the effects of signals on the posterior belief. The individual parameter $\alpha_i > 0$ controls differences in updating between people with the same prior and signal.

The leading special cases is when $f(s,x_0)$ is the difference $s - x_0$. Then, \eqref{eq:general_updating} reduces to:
\begin{equation}
X_i(s, x_0) = \alpha_i s + (1-\alpha_i) x_0
\label{eq:linear_learning_rate}
\end{equation}

Which is the simple linear updating rule generated by Bayesian updating, among others.

\subsubsection{Bayesian Learning as a Baseline} \label{sec:bayesian_learning}

The literature often motivates the weighted-average expression in \eqref{eq:linear_learning_rate} with a Bayesian learning model featuring normally distributed beliefs \citep{ballaelliott22,cullenHowMuch22}. Consider individuals with uncertain prior beliefs. The subjective probability that the variable $X_i$ takes value $x$ is given by the density of $\mathcal{N}\left(X^0_i , \sigma^2_{iX} \right)$. We interpret $X_i^0$ as the mean of the prior distribution and call it the \emph{prior belief}.

People observe a signal $S_i$ drawn from $\N(S_i^*, \sigma^2_{iS})$. The variances reflect subjective (inverse) precision: people for whom $\sigma^2_{iS} / \sigma^2_{iX}$ is large think their prior is more precise than the signal, while those with small ${\sigma^2_{iS}}/{\sigma^2_{iX}}$ think the signal is more precise than their prior.

The posterior distribution is:
\begin{gather}
        \N\left( \left( 1-\alpha_i \right)X^0_i + \alpha_i S_i, \frac{\sigma^2_{iS}\sigma^2_{iX}}{\sigma^2_{iS} + \sigma^2_{iX}}\right) \\
        \text{where}~\alpha_i \equiv \frac{\sigma^2_{iX}}{\sigma^2_{iS} + \sigma^2_{iX}}
\end{gather}
The mean of the posterior is a weighted average of the prior $X^0_i$ and signal $S_i$, with weights determined by relative precision.\footnote{See \citet{robertBayesianChoice07} or \citet{hoffFirstCourse09} for textbook treatments.} We call this mean the \emph{posterior belief} $X_i$. The prior $X^0_i$, signal $S_i$, and posterior $X_i$ are thus related by:
\begin{equation}
    X_i =  (1-\alpha_i) X^0_i + \alpha_i S_i
\end{equation}
which generates the potential outcomes for beliefs in \eqref{eq:beliefs_PO}.

There is direct empirical support for this foundation. \citet{rothRiskExposure22} find that belief updating is driven entirely by people who report being \q{very unsure}, \q{unsure}, or \q{somewhat unsure}. Those who are \q{sure} or \q{very sure} do not update. Similarly, \citet{rothHowExpectations20} find that people less confident in their priors update roughly twice as much. \citet{kerwinNavigatingAmbiguity23} find that people with less precise priors update more in a more general model.

\subsubsection{Linear Deviations from Bayesian Updating}

Any model where updating takes the linear form \eqref{eq:linear_learning_rate} satisfies rank invariance, regardless of how $\alpha_i$ is determined. Three behavioral deviations retain this structure.

\paragraph{Diagnostic Expectations}

\citet{gretherBayesRule80} models deviations from Bayesian updating by raising the likelihood and prior to different powers. Under normal-normal learning, this rescales the effective variances of prior and signal. The learning rate $\alpha_i$ then depends on \q{behavioral} variances rather than true variances, but updating remains linear. People can vary in the heuristics they use to update and could either over-update or under-update, as long as these differences are reflected in the learning rate.

\paragraph{Rational Inattention}

\citet{fusterExpectationsEndogenous22} develop a rational inattention model where the learning rate $\alpha_i$ depends on the marginal cost of attention and the value of information. The posterior is still a weighted average of signal and prior, but the weight reflects optimal attention allocation rather than prior precision. Importantly, some people can have a \q{corner} solution and ignore the information entirely, which allows some people to have $\alpha_i = 0$.

\paragraph{Anchoring on Signal or Prior}

In anchoring models \citep{gabaixBehavioralInattention19}, people form posteriors as a weighted average of a Bayesian posterior and an anchor:
\begin{equation}
X_i(s,x_0) = \kappa_i A(s, x_0) + (1-\kappa_i)X_i^{B}(s,x_0)
\label{eq:anchoring}
\end{equation}
where $X_i^{B}(s,x_0) = \alpha_i s + (1-\alpha_i)x_0$ is the Bayesian posterior. When the anchor is itself a weighted average of the signal and prior ($A(s, x_0) = \gamma s + (1-\gamma) x_0$ for $\gamma \in [0,1]$) substitution yields:
\begin{align}
X_i(s,x_0)
&= \underbrace{[\kappa_i \gamma + (1-\kappa_i)\alpha_i]}_{\alpha_i^{\text{eff}}} s + [1-\alpha_i^{\text{eff}}]x_0
\end{align}
The anchored posterior is simply a weighted average with effective learning rate $\alpha_i^{\text{eff}}$. Two leading special cases are anchoring on the signal or anchoring on the prior. Both preserve the linear structure.

However, anchoring on a constant $A(s, x_0) = \bar{x}$ does not have this learning-rate representation and violates rank invariance.

\subsubsection{Representative Estimates When Everyone Updates} \label{sec_ap_compliers}

Many of the updating models discussed above have micro-foundations that ensure $\alpha_i \in (0,1)$. This includes standard Bayesian updating, diagnostic expectations, and anchoring on the prior or signal. When learning rates are strictly positive, everyone updates at least somewhat in response to the signal. There are no never-takers who ignore information entirely.

When everyone responds to the signal, LLS identifies an average over the full experimental sample. In contrast, if some individuals have $\alpha_i = 0$ and completely ignore the signal, both LLS and TSLS only identify an average over those who update. Individuals who never respond receive zero weight in both estimators. The LLS estimand then downgrades to a local average treatment effect (LATE): an average only among compliers who respond to the instrument.

This is closely related to a more general identification result emphasized by \citet{heckmanLocalInstrumental01} that the ATE is identified when there are values of the instrument $z,z'$ with propensity scores of zero and one (everyone is a $z \to z'$ complier).
Targeting an ATE-like parameter requires that the instrument affects the endogenous variable for all individuals. When $\alpha_i > 0$ for everyone, the signal moves everyone's beliefs (at least slightly) and so everyone is a complier.

\paragraph{Downgrading to LATE with Rational Inattention}

The rational inattention framework allows $\alpha_i = 0$ for some individuals. When the cost of attention exceeds the value of information, optimal behavior is to ignore the signal entirely. In this case, the LLS estimand downgrades from a structural APE or ATE-like parameter (representative of all experimental subjects) to a LATE parameter (representative  of those who update). The estimand remains interpretable; it simply averages over compliers rather than the full sample.

This downgrade does not change the fundamental difference between LLS and TSLS. Among those who update ($\alpha_i > 0$), LLS continues to place equal weight while TSLS weights by the size of the update. The weighting distinction persists regardless of whether the average is over everyone or only over compliers.

\subsubsection{Estimation with Nonlinear Updating}

Under linear updating, we can control for the prior linearly and condition nonparametrically only on the learning rate $\alpha_i$. This simplification extends to all linear learning rate models. Under nonlinear updating \eqref{eq:general_updating}, the estimation approach changes. The prior no longer enters linearly, so we cannot separate conditioning on $\alpha_i$ from conditioning on $x_0$. Further, the learning rate $\alpha_i$ is not directly identified from the ratio of the belief update to the difference between the signal and the prior. Instead, the conditional rank of the learning rate is identified from the sign-corrected conditional rank of the posterior given the prior and the signal:

\begin{equation}
 R_i \equiv \text{rank}\bp{\alpha_i \mid X_i^0} =  \begin{cases}
   \text{rank}\bp{Y_i \mid X_i^0, S_i} & \text{if } S_i > X_i^0 \\
   1 - \text{rank}\bp{Y_i \mid X_i^0, S_i} & \text{if } S_i < X_i^0
\end{cases}
\end{equation}

The local regression must condition on the full vector $C_i = (R_i, X_i^0, S_i(A), S_i(B))$ nonparametrically, or make alternative simplifying assumptions.
 
\section{Nonlinearity, Convex Combinations, and Discrete Slopes} \label{app:nonlinearity_details}

This appendix provides additional detail on the interpretation of the LLS estimand under nonlinear outcome functions. The main text (Section \ref{sec:comparing_estimators}) establishes that LLS delivers a representative average regardless of functional form, while linearity adds a structural interpretation permitting extrapolation. This appendix elaborates on three related points.

\subsection{Convex Combinations and Magnitudes} \label{app:convex_combinations}

When simple averages are not identified, researchers sometimes target parameters from a broader class of convex combinations (weighted averages with non-negative weights that sum to one). Under a weaker \q{signal monotonicity} assumption, the unweighted average is not generally identified by LLS, but TSLS can still identify a convex combination of individual effects \citep{vz_aeri}.

\citet{mogstadInstrumentalVariables24} note that convex combinations are informative only about the sign of effects, and only when every individual effect has the same sign. Parameters that can be an arbitrary convex combination are generally uninformative about magnitudes. An equally weighted average, by contrast, is informative about both sign and magnitude. This is a key advantage of targeting the LLS estimand under learning-rate updating: it delivers not just a convex combination, but a specific, interpretable average.

\subsection{Discrete Slopes Versus Derivatives} \label{app:slopes_vs_derivatives}

With only two potential beliefs per person, we observe only the average slope $\beta_i$ over each person's belief interval $[X_i(B), X_i(A)]$, not the full shape of $G_i(\cdot)$.\footnote{In general, binary instruments do not identify the nonlinearity of individual response functions without further assumptions \citep{brinchLATEDiscrete17}.}

The difference between the average slope $\E[\beta_i]$ and the non-parametric ATE $\E[G'_i(X_i)]$ is the difference between slopes over discrete changes and derivatives. The ATE averages derivatives; $\E[\beta_i]$ averages slopes over discrete changes. In this sense, $\E[\beta_i]$ can be interpreted as a discrete approximation to the ATE.\footnote{Researchers may also be interested in an alternative parameter $\E[G'_i(X_i^0)]$, where derivatives are evaluated at the prior rather than the posterior. Under linearity, $G'_i(x) = \tau_i$ for all $x$, so $\E[\beta_i]$, the ATE, and this alternative all coincide. Under nonlinearity, the gap between the discrete-slope parameter $\E[\beta_i]$ and derivative-based parameters depends on both the curvature of $G_i(\cdot)$ and the width of the belief interval $X_i(A) - X_i(B)$.}

As the potential beliefs get closer together, the discrete slope converges to the derivative: $\beta_i \to G'_i(X_i)$ when $X_i(A) - X_i(B) \to 0$. This would require picking signals very close to each other (in the active case) or very close to the prior (in the passive case).
This is likely unattractive in practice, since using a pair of signals that are close to each other would lead to a weak first stage.

\subsubsection{Identifying Derivatives Without Linearity} \label{app_np_identifition}

Richer experimental variation could identify derivative-based parameters without assuming linearity. Experiments with $K$ signal values can identify degree-$(K-1)$ polynomial approximations to the response function \citep{mastenIdentificationInstrumental16}. Experiments with continuously distributed signals can identify the average structural function $\E[G_i(x)]$ using nonparametric control function methods \citep{heckmanChapter7107a, imbensIdentificationEstimation09}.

These approaches require substantially richer data than the two-arm experiments typical in current practice. The practical implication is that researchers using standard experimental designs face a choice: either maintain linearity and interpret $\E[\beta_i]$ as the ATE/APE, or relax linearity and interpret $\E[\beta_i]$ as a representative average of slopes over the discrete belief changes induced by the experiment.

\subsection{Nonlinearity Affects Interpretation, Not the Case for Equal Weights}

The key point from Section \ref{sec:comparing_estimators} bears repeating: nonlinearity affects the interpretation of individual slopes $\beta_i$, but it does not affect the difference between LLS and TSLS. Both estimators aggregate the same individual slopes; they differ only in how they weight those slopes. LLS uses equal weights; TSLS uses weights proportional to belief updating.

Whether or not outcomes are linear in beliefs, researchers who want a representative summary of heterogeneous effects should prefer equal weights. Linearity is an additional assumption that strengthens the interpretation of the equally-weighted average; it is not a precondition for preferring equal weights over unequal weights.

\section{Proofs and Derivations}

This section contains proofs and derivations.

\subsection{Derivations of Weights} \label{sec:weight_derivations}

This section provides derivations for the weights reported in Section \ref{sec:lit_weights}. All three derivations use the definition of individual slopes from equation \eqref{eq:defn_indiv_slopes}:
\begin{equation}
	\beta_i \equiv \frac{G_i(X_i(A)) - G_i(X_i(B))}{X_i(A) - X_i(B)} \tag{\ref{eq:defn_indiv_slopes}}
\end{equation}
This definition implies that $G_i(X_i(A)) - G_i(X_i(B)) = \beta_i (X_i(A) - X_i(B))$ for any outcome function $G_i(\cdot)$.

\subsubsection{Weights in the Panel Specification}
\begin{assumption}\label{assumption:panel} Panel Assumptions.
\begin{enumerate}
\item Panel Outcomes: The panel outcome equation \eqref{eq:panel_outcome} holds.
\begin{equation}
 Y_{it} = G_i(X_{it}) + \gamma_t  \tag{\ref{eq:panel_outcome}}
\end{equation}
\item Relevance: There is variation in beliefs over time $\var \bs{\Delta X_i} > 0 $.
\item Existence: The relevant moments exist and are finite.
\end{enumerate}
\end{assumption}

The parsimonious specification in the panel data model in \eqref{eq:panel_regression} is given by:
\begin{align}
    \beta^{Panel} &= \frac{\cov \bs{\Delta Y_i,  \Delta X_i}}{\var \bs{\Delta X_i}} \\
    \intertext{Substitute the outcome equation \eqref{eq:panel_outcome}:}
    &= \frac{\cov \bs{G_i(X_{i1}) - G_i(X_{i0}) + \gamma_1 - \gamma_0,  \Delta X_i}}{\var \bs{\Delta X_i}}
    \intertext{Apply the definition of individual slopes. Since $\Delta X_i = X_{i1} - X_{i0}$, we have $G_i(X_{i1}) - G_i(X_{i0}) = \beta_i \Delta X_i$:}
    &= \frac{\cov \bs{\beta_i \Delta X_i + \gamma_1 - \gamma_0,  \Delta X_i}}{\var \bs{\Delta X_i}}
    \intertext{From definitions of covariance and variance; $\cov(a,b) = \E\bs{a(b-\E(b))}$:}
    &=\frac{\E \bs{\beta_i \Delta X_i  \bp{\Delta X_i - \E \bs{\Delta X_i} } }}{\E \bp{\Delta X_i  \bp{\Delta X_i - \E \bs{\Delta X_i} }}} \\
    \intertext{To express this as a weighted average of individual effects, rearrange:}
    &=\E \bs{\beta_i \cdot  \frac{ \Delta X_i  \bp{\Delta X_i - \E \bs{\Delta X_i} } }{\E \bp{\Delta X_i  \bp{\Delta X_i - \E \bs{\Delta X_i} }}}}
\end{align}

This gives the weights $\omega_i(\text{Panel}) \propto \Delta X_i \bp{\Delta X_i - \E \bs{\Delta X_i} } $, which are normalized to integrate to one.

\subsubsection{Weights in the Active Control Specification}

\begin{assumption}\label{assumption:active} Active Control Assumptions.
\begin{enumerate}
\item Nonparametric Outcomes: The outcome model in equation \eqref{eq:outcome} holds.
\begin{align}
    Y_i    &= G_i(X_i) \tag{\ref{eq:outcome}}
\end{align}
\item Learning rate updating: The belief potential outcomes in equation \eqref{eq:beliefs_PO} hold.
\begin{align}
X_i(z) & = \alpha_i \bp{S_i(z) - X_i^0} + X^0_i
 \tag{\ref{eq:beliefs_PO}}
\end{align}
\item Relevance: There is variation in potential beliefs $\E[X_i(A) - X_i(B)] \neq 0 $.
\item Random Assignment: The treatment $Z_i$ is randomly assigned.
\item Existence: The relevant moments exist and are finite.
\end{enumerate}
\end{assumption}

Starting with the TSLS coefficient in the active control design:
\begin{align}
    \beta^{\text{TSLS}} &= \frac{\E[Y_i \mid Z_i = A] - \E[Y_i \mid Z_i = B]}{\E[X_i \mid Z_i = A] - \E[X_i \mid Z_i = B]} \\
\intertext{From the outcome equation \eqref{eq:outcome} and random assignment:}
    &= \frac{\E[G_i(X_i(A))] - \E[G_i(X_i(B))]}{\E[X_i(A)] - \E[X_i(B)]} \\
    \intertext{Apply the definition of individual slopes: $G_i(X_i(A)) - G_i(X_i(B)) = \beta_i(X_i(A) - X_i(B))$:}
    &= \frac{\E[\beta_i (X_i(A) - X_i(B))]}{\E[X_i(A) - X_i(B)]} \\
    \intertext{To express this as a weighted average of individual effects, rearrange:}
    &= \E\left[\beta_i \cdot \frac{X_i(A) - X_i(B)}{\E[X_i(A) - X_i(B)]}\right]
\end{align}

This gives us the weights $\omega_i(\text{Active}) \propto X_i(A) - X_i(B)$, which are normalized to integrate to one.

\subsubsection{Weights in the Passive Control Specification} \label{sec:derivation_passive_tsls}

\begin{assumption}\label{assumption:passive} Passive Control Assumptions.
\begin{enumerate}

\item Nonparametric Outcomes: The outcome model in equation \eqref{eq:outcome} holds.
\begin{align}
    Y_i    &= G_i(X_i) \tag{\ref{eq:outcome}}
\end{align}
\item Learning rate updating: The belief potential outcomes in equation \eqref{eq:beliefs_PO} hold.
\begin{align}
X_i(z) & = \alpha_i \bp{S_i(z) - X_i^0} + X^0_i
 \tag{\ref{eq:beliefs_PO}}
\end{align}
\item Relevance: There is variation in potential beliefs $\E[X_i(A) - X_i(B)] \neq 0 $.
\item Random Assignment: The treatment $Z_i$ is randomly assigned.
\item Existence: The relevant moments exist and are finite.
\item Passive control: Treatment arm $B$ does not receive any signal: $S_i(B) \equiv X^0_i$.
\end{enumerate}
\end{assumption}

In the passive control design, the exposure-weighted instrument is defined as:
\begin{align}
    T^{ex}_i &\equiv T_i(S_i(A) - X_i^0) \tag{\ref{eq:exposure_reg}} \\
    \intertext{Since we are interested in coefficients on $T^{ex}_i$ in regressions that control for $S_i(A) -  X_i^0$, we can appeal to FWL and instead consider the coefficients on the residualized $\wt{T}^{ex}_i$. To construct this residual, regress $T^{ex}_i$ on $(S_i(A) -  X_i^0)$ and a constant:}
    \theta &= \frac{\cov(T^{ex}_i, S_i(A) -  X_i^0)}{\var(S_i(A) -  X_i^0)} \\
    &= \frac{\E[T_i(S_i(A) -  X_i^0)^2] - \E[T_i]\E[(S_i(A) -  X_i^0)^2]}{\var(S_i(A) -  X_i^0)} \\
    \intertext{Since $T_i$ is binary and independent of $(S_i(A) -  X_i^0)$ by random assignment:}
    \theta &= \frac{\E[T_i]\var(S_i(A) -  X_i^0)}{\var(S_i(A) -  X_i^0)} = \E[T_i] \\
    \intertext{The recentered instrument is then the residual from this regression:}
    \wt{T}^{ex}_i &= T^{ex}_i - \theta(S_i(A) -  X_i^0) \\
    &= (T_i - \E[T_i])(S_i(A) -  X_i^0)
\end{align}

Since $ \E \bs{\wt{T}^{ex}_i} = 0$, the TSLS coefficient is:
\begin{align}
    \beta^{\text{Passive}} &= \frac{\E[\wt{T}^{ex}_i Y_i]}{\E[\wt{T}^{ex}_i X_i]}
\end{align}

The denominator is:
\begin{align}
    \E[\wt{T}^{ex}_i X_i] &= \E[(T_i - \E[T_i])(S_i(A) -  X_i^0) \cdot X_i] \\
    \intertext{Plugging in the potential beliefs for $X_i$:}
    &= \E[T_i](1-\E[T_i])\E[(S_i(A) -  X_i^0)(X_i(A) - X^0_i)] \\
    \intertext{Using the definition of $X_i(A)$ from \eqref{eq:beliefs_PO} to simplify further yields:}
    &= \E[T_i](1-\E[T_i])\E[\alpha_i(S_i(A) - X^0_i)^2]
\end{align}

For the numerator, apply the outcome equation and random assignment:
\begin{align}
    \E[\wt{T}^{ex}_i Y_i] &= \E[(T_i - \E[T_i])(S_i(A) -  X_i^0) \cdot G_i(X_i)] \\
    &= \E[T_i](1-\E[T_i])\E[(S_i(A) - X^0_i)(G_i(X_i(A)) - G_i(X^0_i))]
    \intertext{Apply the definition of individual slopes. In passive designs $X_i(B) = X^0_i$, so $G_i(X_i(A)) - G_i(X^0_i) = \beta_i(X_i(A) - X^0_i) = \beta_i \alpha_i (S_i(A) - X^0_i)$:}
    &= \E[T_i](1-\E[T_i])\E[\beta_i\alpha_i(S_i(A) - X^0_i)^2]
\end{align}

Thus, the TSLS coefficient is:
\begin{align}
    \beta^{\text{Passive}} &= \frac{\E[T_i](1-\E[T_i])\E[\beta_i\alpha_i(S_i(A) - X^0_i)^2]}{\E[T_i](1-\E[T_i])\E[\alpha_i(S_i(A) - X^0_i)^2]} \\
    &= \E\left[\beta_i \cdot \frac{\alpha_i(S_i(A) - X^0_i)^2}{\E[\alpha_i(S_i(A) - X^0_i)^2]}\right]
\end{align}

This gives us the weights $\omega_i(\text{Passive}) \propto \alpha_i(S_i(A) - X^0_i)^2$, which are normalized to integrate to one.

\subsection{Main Identification Results} \label{sec:identification_derivation}

The key identification insight across all three designs is that by appropriately conditioning on observables, we can isolate variation in beliefs that is driven solely by exogenous treatment assignment. This creates local comparisons where beliefs effectively take only two values, making each regression equivalent to a simple difference in conditional means. This section proves that these local regressions recover average partial effects.

\begin{proposition}[Binary Regression Property] \label{prop:binary_regression}
Consider a linear regression of $Y$ on $X$ where $X$ takes only two values, $x_1$ and $x_2$. Then the regression coefficient $\beta$ equals:
\begin{equation}
\beta = \frac{\E[Y \mid X=x_2] - \E[Y\mid X=x_1]}{x_2 - x_1}
\end{equation}
\end{proposition}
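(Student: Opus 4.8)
The plan is to reduce the claim to the familiar fact that the OLS slope in a regression with a constant is $\beta = \cov(X,Y)/\var(X)$, and then to evaluate both moments using the law of total expectation over the two-point support of $X$. Write $p \equiv P(X = x_2)$, so that $P(X = x_1) = 1 - p$; for the statement to be meaningful we need $0 < p < 1$ (so that both conditional expectations are well defined and $\var(X) > 0$) and $x_1 \neq x_2$.

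First I would compute $\E\bs{X} = (1-p)x_1 + p x_2$ and, conditioning on $X$, $\var(X) = p(1-p)(x_2 - x_1)^2$. Next I would expand $\cov(X,Y) = \E\bs{XY} - \E\bs{X}\E\bs{Y}$ and apply iterated expectations to write $\E\bs{XY} = (1-p)x_1\,\E\bs{Y \mid X = x_1} + p\,x_2\,\E\bs{Y \mid X = x_2}$ together with $\E\bs{Y} = (1-p)\,\E\bs{Y \mid X = x_1} + p\,\E\bs{Y \mid X = x_2}$. Substituting and collecting terms, the cross terms cancel and one is left with $\cov(X,Y) = p(1-p)(x_2 - x_1)\bp{\E\bs{Y \mid X = x_2} - \E\bs{Y \mid X = x_1}}$. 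Dividing by $\var(X)$ and cancelling the common factor $p(1-p)(x_2 - x_1)$ gives the claimed formula.

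An equivalent and slightly slicker route is to reparametrize: set $D \equiv \1\bc{X = x_2}$, so that $X = x_1 + (x_2 - x_1)D$ is an affine function of $D$. Since the linear span of $\{1, X\}$ equals that of $\{1, D\}$, the two regressions have identical fitted values, and the slope on $X$ is $1/(x_2 - x_1)$ times the slope on $D$; the latter is the textbook difference in means $\E\bs{Y \mid D = 1} - \E\bs{Y \mid D = 0}$ for a regression on a binary regressor. Either way, there is no real obstacle here — the only point requiring care is ruling out the degenerate cases $x_1 = x_2$ or $p \in \{0,1\}$, and these are excluded by the relevance conditions maintained wherever this proposition is invoked (for instance $\var\bs{X_i \mid C_i = c} > 0$ in the local regressions of Section \ref{sec:ape_identification}).
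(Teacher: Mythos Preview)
Your proposal is correct and follows essentially the same approach as the paper: define $p = \Pr[X = x_2]$, compute $\var(X) = p(1-p)(x_2 - x_1)^2$ and $\cov(X,Y) = p(1-p)(x_2 - x_1)\bp{\E[Y \mid X = x_2] - \E[Y \mid X = x_1]}$, then divide. Your alternative indicator-variable route via $D \equiv \1\bc{X = x_2}$ is a nice shortcut not in the paper, and your explicit attention to the degenerate cases $p \in \{0,1\}$ or $x_1 = x_2$ is more careful than the paper's own proof, which leaves these implicit.
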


\begin{proof}
The regression coefficient is defined as:
\begin{align}
\beta = \frac{\Cov(Y,X)}{\Var(X)}
\end{align}

Let $p = \Pr[X = x_2]$. Then:
\begin{align}
\Var(X) &= \E[(X - \E[X])^2] \\
&= p(1-p)(x_2 - x_1)^2
\end{align}

For the covariance:
\begin{align}
\Cov(Y,X) &= \E[(Y - \E[Y])(X - \E[X])] \\
&= p(1-p)(x_2 - x_1)(\E[Y\mid X=x_2] - \E[Y\mid X=x_1])
\end{align}

Therefore:
\begin{align}
\beta = \frac{\Cov(Y,X)}{\Var(X)} = \frac{\E[Y\mid X=x_2] - \E[Y\mid X=x_1]}{x_2 - x_1}
\end{align}
\end{proof}

\subsubsection{Identification in Panel Experiments} \label{sec:panel_id_proofs}

\renewcommand{\theassumption}{1A}
\begin{assumption} \label{assumption:panel_additional}
Maintain the panel assumptions \ref{assumption:panel}. Additionally
\begin{enumerate}[label=\roman*.]
\item Either $\mathbb{P}[\Delta X_i = 0] > 0$ (control group exists), or $\Delta X_i$ has positive density in a neighborhood of zero (as in \cite{grahamIdentificationEstimation12}).
\item Nonlinear outcome: Relax the outcome equation to
    \begin{align}
    Y_{it}(x) &= G_i(x) + \gamma_t \tag{\ref{eq:panel_outcome}}
    \end{align}
\end{enumerate}
\end{assumption}
\renewcommand{\theassumption}{\arabic{assumption}} 

\begin{proposition}[Panel Identification]
Under Assumption \ref{assumption:panel_additional}, for any $x \neq 0$ in the support of $\Delta X_i$:

\begin{equation}
 \frac{\E \bs[\big]{G_i(X_i^0 + x) - G_i(X_i^0) \mid \Delta X_i = x }}{x}  =
 \frac{\E \bs[\big]{\Delta Y_i \mid  \Delta X_i = x} - \E \bs[\big]{\Delta Y_i \mid \Delta X_i = 0}}{x}
\end{equation}

If $G_i(X_{it}) = \tau_i X_{it} + U_i$ as in \eqref{eq:panel_outcome}, the estimand simplifies further to $\E[\tau_i \mid  \Delta X_i = x]$.

\end{proposition}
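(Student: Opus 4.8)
The plan is to substitute the nonlinear potential-outcome equation \eqref{eq:nonlinear_outcome} directly into the first difference $\Delta Y_i$ and then use the $\Delta X_i = 0$ subpopulation purely to net out the common time trend $\gamma_1 - \gamma_0$. No comparability or randomization across the two subpopulations is required: individuals who do not revise their beliefs have $\Delta Y_i = \gamma_1 - \gamma_0$ identically, whatever their response function $G_i$, so that subpopulation serves only to pin down the trend.

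First I would write, using $Y_{it} = Y_{it}(X_{it}) = G_i(X_{it}) + \gamma_t$ together with $X_{i0} = X_i^0$ and $X_{i1} = X_i^0 + \Delta X_i$,
\[
\Delta Y_i = G_i(X_i^0 + \Delta X_i) - G_i(X_i^0) + (\gamma_1 - \gamma_0).
\]
Conditioning on $\Delta X_i = x$ (legitimate for $x$ in the support of $\Delta X_i$) and taking expectations gives $\E[\Delta Y_i \mid \Delta X_i = x] = \E[G_i(X_i^0 + x) - G_i(X_i^0) \mid \Delta X_i = x] + (\gamma_1 - \gamma_0)$. Evaluating the same expression at $x = 0$, the increment $G_i(X_i^0) - G_i(X_i^0)$ is identically zero, so $\E[\Delta Y_i \mid \Delta X_i = 0] = \gamma_1 - \gamma_0$. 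Subtracting the second from the first cancels the trend, and dividing by $x \neq 0$ yields the claimed identity. For the linear specialization $G_i(X_{it}) = \tau_i X_{it} + U_i$ the increment $G_i(X_i^0 + x) - G_i(X_i^0)$ collapses to $\tau_i x$ (the $U_i$ terms drop out because $U_i$ does not depend on $x$), and since $x$ is fixed under the conditioning, dividing by $x$ leaves $\E[\tau_i \mid \Delta X_i = x]$.

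It is also worth recording, although not strictly part of the stated identity, that by Proposition~\ref{prop:binary_regression} applied with $X = \Delta X_i$ restricted to $\{0,x\}$, the right-hand side is exactly the coefficient of a least-squares regression of $\Delta Y_i$ on $\Delta X_i$ and a constant within that two-point subpopulation; this is the feasible local regression form used in Section~\ref{main_panel_APE_id}. Iterating over $x$ and applying the law of iterated expectations then recovers $\E[\tau_i]$ in the linear case.

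The only delicate step — and the one I expect to require care — is the conditioning event $\{\Delta X_i = 0\}$ when $\Delta X_i$ is continuous and $\mathbb{P}[\Delta X_i = 0] = 0$. When $\mathbb{P}[\Delta X_i = 0] > 0$ the argument above is exact and elementary. Otherwise, following \citet{graham2012} and Assumption~\ref{assumption:panel_additional}, I would read $\E[\Delta Y_i \mid \Delta X_i = 0]$ as the limit $\lim_{x' \to 0} \E[\Delta Y_i \mid \Delta X_i = x']$; the substance of the argument is then that $\lim_{x' \to 0} \E[G_i(X_i^0 + x') - G_i(X_i^0) \mid \Delta X_i = x'] = 0$, so the limit still isolates $\gamma_1 - \gamma_0$. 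This needs only mild regularity on $G_i$ and on the conditional law of $X_i^0$ given $\Delta X_i$ near zero (a continuity/dominated-convergence argument covered by the existence clause of Assumption~\ref{assumption:panel}), which I would invoke rather than belabor.
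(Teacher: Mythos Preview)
Your proof is correct and follows essentially the same route as the paper: substitute the nonlinear outcome equation into $\Delta Y_i$, compute $\E[\Delta Y_i \mid \Delta X_i = x]$ and $\E[\Delta Y_i \mid \Delta X_i = 0]$, subtract to cancel $\gamma_1-\gamma_0$, and divide by $x$. The paper opens by invoking Proposition~\ref{prop:binary_regression} to connect to the local-regression form, which you relegate to a remark, but the substantive argument is identical.
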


\begin{proof}
By Proposition \ref{prop:binary_regression}, the regression of $\Delta Y_i$ on $\Delta X_i$ conditional on $\Delta X_i \in \{0, x\}$ has coefficient:
\begin{align}
\beta(x) = \frac{\E[\Delta Y_i \mid  \Delta X_i = x] - \E[\Delta Y_i \mid  \Delta X_i = 0]}{x}
\end{align}

For individuals with $\Delta X_i = x$, we have $X_{i1} = X_{i0} + x$. Thus:
\begin{align}
\E[\Delta Y_i \mid  \Delta X_i = x] &= \E[G_i(X_{i0} + x) - G_i(X_{i0}) \mid  \Delta X_i = x] + \Delta \gamma
\end{align}

For those with $\Delta X_i = 0$, we have $X_{i1} = X_{i0}$, giving:
\begin{align}
\E[\Delta Y_i \mid  \Delta X_i = 0] &= \E[G_i(X_{i0}) - G_i(X_{i0}) + \Delta \gamma \mid  \Delta X_i = 0] \\
&= \Delta \gamma
\end{align}

Taking the difference:
\begin{align}
\E[\Delta Y_i \mid  \Delta X_i = x] - \E[\Delta Y_i \mid  \Delta X_i = 0] &= \E[G_i(X_{i0} + x) - G_i(X_{i0}) \mid  \Delta X_i = x]
\end{align}

Dividing by $x$ completes the proof:
\begin{align}
\frac{\E[\Delta Y_i \mid  \Delta X_i = x] - \E[\Delta Y_i \mid  \Delta X_i = 0]}{x} &= \frac{\E[G_i(X_{i0} + x) - G_i(X_{i0}) \mid  \Delta X_i = x]}{x}
\end{align}

\end{proof}

The necessity of a control group (\ref{assumption:panel_additional}) is not unique to the LLS estimator, but is instead a necessary condition for the data to be informative about the $\tau_i$. Formally:

\begin{proposition}[Necessity]
If Assumption \ref{assumption:panel_additional}.i fails, the identified sets for $\gamma_t$ and each $\tau_i$ are the real line.
\label{prop:panel_control_necessity}
\end{proposition}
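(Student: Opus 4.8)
The plan is to show that the model's empirical content is exhausted by the within-individual relation $\Delta Y_i = \tau_i \Delta X_i + (\gamma_1-\gamma_0)$, and then to exhibit, for every candidate value of $\gamma_1-\gamma_0$, an observationally equivalent data-generating process (DGP) satisfying Assumption~\ref{assumption:panel}. First I would dispose of $\gamma_0$ and the $U_i$: since these enter the outcome equation only through the sums $\gamma_0+U_i$ and $\gamma_1+U_i$, adding a common constant to $\gamma_0$ and to every $U_i$ leaves the joint law of $(X_{i0},X_{i1},Y_{i0},Y_{i1})$ unchanged, so I may normalize $\gamma_0=0$ without loss of generality, and already the identified set for $\gamma_0$ is all of $\mathbb{R}$. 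It then remains to show that $\Delta\gamma \equiv \gamma_1-\gamma_0$ is unrestricted, which drags each $\tau_i$ along with it.

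For the construction: when Assumption~\ref{assumption:panel_additional}.i fails, $\Delta X_i$ is (almost surely) bounded away from zero, so there is no subpopulation for which $\Delta Y_i$ reveals $\Delta\gamma$ directly, and $\tau_i$ is pinned down by the data only \emph{given} $\Delta\gamma$, as $\tau_i = (\Delta Y_i - \Delta\gamma)/\Delta X_i$. Fix any $c\in\mathbb{R}$ and define alternative primitives $\tilde\gamma_0 \equiv 0$, $\tilde\gamma_1 \equiv c$, $\tilde\tau_i \equiv (\Delta Y_i - c)/\Delta X_i$, and $\tilde U_i \equiv Y_{i0} - \tilde\tau_i X_{i0}$. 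A one-line check gives $\tilde\tau_i X_{i0} + \tilde\gamma_0 + \tilde U_i = Y_{i0}$ and $\tilde\tau_i X_{i1} + \tilde\gamma_1 + \tilde U_i = \tilde\tau_i \Delta X_i + c + Y_{i0} = (\Delta Y_i - c) + c + Y_{i0} = Y_{i1}$, so this DGP reproduces the observed joint distribution exactly. It satisfies Assumption~\ref{assumption:panel}: relevance holds because $\Delta X_i$ retains the same nondegenerate distribution, and finiteness of the relevant moments of $\tilde\tau_i$ and $\tilde U_i$ follows because $1/\Delta X_i$ is bounded (failure of~\ref{assumption:panel_additional}.i) while $(X_{it},Y_{it})$ have finite moments by hypothesis. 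Note that the panel design imposes no cross-sectional randomization restriction, so the fact that $\tilde\tau_i$ and $\tilde U_i$ are arbitrary functions of $(X_{i0},\Delta X_i,\Delta Y_i)$ creates no conflict with the model.

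Finishing: letting $c$ range over $\mathbb{R}$ makes $\tilde\gamma_1 = c$ take every real value, and for any individual $i$ — equivalently, any point $(\Delta Y_i,\Delta X_i)$ in the support, where $\Delta X_i\neq 0$ — the map $c \mapsto (\Delta Y_i - c)/\Delta X_i$ is a bijection onto $\mathbb{R}$, so $\tilde\tau_i$ likewise sweeps the whole real line. Hence the identified sets for $\gamma_0$, for $\gamma_1$, and for each $\tau_i$ are all of $\mathbb{R}$. For the nonlinear outcome model of~\ref{assumption:panel_additional}.ii the same argument applies a fortiori: one simply picks $\tilde G_i$ to be any function taking the two required values $\tilde G_i(X_{i0})$ and $\tilde G_i(X_{i0}+\Delta X_i)$; since the proposition is stated in terms of $\tau_i$, it suffices to treat the linear submodel $G_i(x)=\tau_i x + U_i$.

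The only real obstacle is bookkeeping around Assumption~\ref{assumption:panel}'s ``moments exist and are finite'' clause, which is precisely where the meaning of ``Assumption~\ref{assumption:panel_additional}.i fails'' matters. I would read the failure as the existence of a neighborhood of $0$ that $\Delta X_i$ almost surely avoids, so that $\tilde\tau_i=(\Delta Y_i-c)/\Delta X_i$ inherits finite moments from the data; under a weaker reading in which $0$ merely carries no point mass but probability accumulates there, one additionally needs $\E[1/\Delta X_i^2]<\infty$ (or a continuity-at-zero caveat), so I would state the proposition under the clean ``avoided neighborhood'' version. Everything else — observational equivalence of the constructed DGP and the surjectivity of $c \mapsto \tilde\tau_i$ — is immediate.
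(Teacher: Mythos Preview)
Your proof is correct and follows essentially the same construction as the paper: fix an arbitrary value $c$ for $\Delta\gamma$ and set $\tilde\tau_i=(\Delta Y_i-c)/\Delta X_i$, which is well-defined because the failure of Assumption~\ref{assumption:panel_additional}.i is read as $\Delta X_i$ being bounded away from zero. Your version is somewhat more careful than the paper's---you separately dispose of $\gamma_0$, explicitly construct $\tilde U_i$ to verify the full DGP, and address the moment-existence clause---but the identifying idea is the same.
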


\begin{proof}
Suppose Assumption \ref{assumption:panel_additional}.i fails, such that $\Delta X_i$ is bounded away from zero. Then for any candidate intercept $a$, define:
\begin{equation}
    B_i(a) \equiv \frac{\Delta Y_i - a}{\Delta X_i}
\end{equation}
The pair $(a, B_i(a))$ is observationally equivalent to $(\gamma_1 - \gamma_0, \tau_i)$ since they generate the same joint distribution of $(\Delta Y_i, \Delta X_i)$ and satisfy $\E[\Delta Y_i - a - B_i(a) \Delta X_i \mid \Delta X_i] = 0$.
We can repeat the exercise by first choosing any $i'$ and any $B_{i'}$. Chose $a(B_{i'}) \equiv \frac{ \Delta Y_i}{ B_{i'} \Delta X_i}$ and then chose the remaining $B_i$ as above.

Thus the identified sets for $\gamma_1 - \gamma_0$ and $\tau_i$ are the real line. Chose an arbitrary $\gamma_1 - \gamma_0$ or an arbitrary $\tau_{i'}$ for some $i'$ and there are values for the remaining parameters that rationalize the data.
\end{proof}

The \q{control group} is crucial to identify $\gamma_t$ in this flexible model. If there is no control group it is necessary to consider adding additional assumptions. One solution would be simply to assume that $\gamma_t = 0$ such that causal effects can be directly identified from with-individual first-differences.

\subsubsection{Identification in Active Experiments} \label{sec:proof_active_ape_ID}

\renewcommand{\theassumption}{2A}
\begin{assumption}\label{assumption:active_ID}
The active control design maintains assumptions \ref{assumption:active} from above, with the following modificiations:
\begin{enumerate}[label=\roman*.]
\item Relevance: $S_i(A) \neq S_i(B)$ and $\alpha_i > 0$.
\item Nonlinear outcome: Use the general form of potential outcomes
]
    \[ Y_{i}(x) = G_i(x) \tag{\ref{eq:outcome}} \]
    \end{enumerate}
\end{assumption}
\renewcommand{\theassumption}{\arabic{assumption}}

\begin{proposition}[Active Control Identification] \label{prop:active_control_ID}
Under Assumption \ref{assumption:active}, for any value $c$ of the control vector $C_i \equiv \bs{\alpha_i \; X_i^0 \; S_i(A)\;  S_i(B)}$:

\begin{equation}
\frac{\E \bs{G_i(x_A) - G_i(x_B) \mid C_i = c}}{x_A - x_B} = \frac{\cov \bs{Y_i, X_i \mid C_i = c}}{\var \bs{X_i \mid C_i = c}}
\end{equation}
where $x_A$ and $x_B$ are the deterministic belief values for individuals with $C_i = c$. In the special case where $G_i(X_{it}) = \tau_i X_{it} + U_i$ as in \eqref{eq:outcome}, the estimand simplifies further to $\E[\tau_i \mid C_i = c]$.

\end{proposition}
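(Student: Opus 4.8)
The plan is to collapse the conditional regression to a two-point regression via Proposition \ref{prop:binary_regression} and then strip off the conditioning on the realized treatment using random assignment. First I would note that, conditional on $C_i = c$, the potential beliefs are degenerate: since $X_i(z) = \alpha_i(S_i(z) - X_i^0) + X_i^0$ is a deterministic function of $C_i = [\alpha_i\; X_i^0\; S_i(A)\; S_i(B)]$, on the event $\{C_i = c\}$ we have $X_i(A) = x_A$ and $X_i(B) = x_B$ almost surely, where $x_A$ and $x_B$ are the constants named in the statement. The relevance conditions $\alpha_i > 0$ and $S_i(A) \neq S_i(B)$ give $x_A \neq x_B$, so the realized belief $X_i = X_i(Z_i)$ takes exactly the two values $x_A$ (on $\{Z_i = A\}$) and $x_B$ (on $\{Z_i = B\}$) within this cell, and in particular $\var[X_i \mid C_i = c] > 0$, so the right-hand side is well-defined.

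Next I would apply Proposition \ref{prop:binary_regression} to the regression of $Y_i$ on $X_i$ inside the cell $\{C_i = c\}$, which yields
\[
\frac{\cov[Y_i, X_i \mid C_i = c]}{\var[X_i \mid C_i = c]} = \frac{\E[Y_i \mid X_i = x_A, C_i = c] - \E[Y_i \mid X_i = x_B, C_i = c]}{x_A - x_B}.
\]
Since $x_A \neq x_B$, on the cell the event $\{X_i = x_A\}$ coincides with $\{Z_i = A\}$, so these conditional means equal $\E[Y_i \mid Z_i = A, C_i = c]$ and $\E[Y_i \mid Z_i = B, C_i = c]$. This is where random assignment does the work: $Z_i$ is independent of $(G_i(\cdot), \alpha_i, X_i^0, S_i(\cdot))$ jointly, hence of the pair $(G_i(\cdot), C_i)$; and on $\{C_i = c\}$ we have $Y_i = G_i(X_i(A)) = G_i(x_A)$ when $Z_i = A$, so $\E[Y_i \mid Z_i = A, C_i = c] = \E[G_i(x_A) \mid Z_i = A, C_i = c] = \E[G_i(x_A) \mid C_i = c]$, and symmetrically for arm $B$. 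Substituting back gives exactly $\dfrac{\E[G_i(x_A) - G_i(x_B) \mid C_i = c]}{x_A - x_B}$, the left-hand side.

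For the linear specialization $G_i(x) = \tau_i x + U_i$ I would observe that $G_i(x_A) - G_i(x_B) = \tau_i(x_A - x_B)$, and since $x_A - x_B$ is a fixed nonzero scalar on $\{C_i = c\}$ it factors out of the conditional expectation, leaving $\E[\tau_i \mid C_i = c]$; averaging over $c$ then recovers $\E[\tau_i]$ by iterated expectations, matching the text. I expect the only genuinely delicate step to be the bookkeeping in the second paragraph — carefully arguing that conditioning on $X_i = x_A$ within the cell is equivalent to conditioning on $Z_i = A$, and that the latter conditioning can then be dropped by independence — since everything else is the degeneracy observation plus Proposition \ref{prop:binary_regression}. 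One technical caveat worth stating explicitly is that these are cell-by-cell statements: when $C_i$ has a continuous component they should be read as holding for almost every $c$ in the support, just as in the panel case.
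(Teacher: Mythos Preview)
Your proposal is correct and follows essentially the same route as the paper's proof: observe that potential beliefs are degenerate given $C_i = c$, invoke Proposition~\ref{prop:binary_regression} to reduce the conditional regression to a difference of conditional means, replace conditioning on $X_i = x_A$ by conditioning on $Z_i = A$ using relevance, and then drop the $Z_i$-conditioning via random assignment. Your write-up is in fact slightly more explicit than the paper's on the independence step and adds the useful ``almost every $c$'' caveat for continuous $C_i$.
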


\begin{proof}

Since $C_i$ includes $\alpha_i$, $X_i^0$, $S_i(A)$, and $S_i(B)$, the potential beliefs take the same value for all individuals with $C_i = c$.
\begin{align}
X_i(A) &= X_i^0 + \alpha_i(S_i(A) - X_i^0) \\
X_i(B) &= X_i^0 + \alpha_i(S_i(B) - X_i^0)
\end{align}

Thus, conditional on $C_i = c$, the observed belief $X_i$ equals either $X_i(A) = x_A$ or $X_i(B) = x_B$ depending solely on the randomly assigned treatment $Z_i$. By Proposition \ref{prop:binary_regression}, the regression of $Y_i$ on $X_i$ conditional on $C_i = c$ has coefficient:
\begin{align}
\beta(c) = \frac{\E[Y_i \mid  X_i = x_A, C_i = c] - \E[Y_i \mid  X_i = x_B, C_i = c]}{x_A - x_B}
\end{align}
Relevance guarantees that $x_A \neq x_B$ and therefore $X_i = x_A$ if and only if $Z_i = A$, and $X_i = x_B$ if and only if $Z_i = B$. This yields
\begin{align}
\beta(c) = \frac{\E[Y_i \mid  Z_i = A, C_i = c] - \E[Y_i \mid  Z_i = B, C_i = c]}{x_A - x_B}
\end{align}
Then, since $Z_i$ is randomly assigned, we have:
\begin{align}
\E[Y_i \mid  Z_i = A, C_i = c] - \E[Y_i \mid  Z_i = B, C_i = c] &= \E[G_i(x_A) - G_i(x_B) \mid  C_i = c]
\end{align}
Dividing by ${x_A - x_B}$ completes the proof:
\begin{align}
 \frac{\cov \bs{Y_i, X_i \mid C_i = c}}{\var \bs{X_i \mid C_i = c}} = \frac{\E \bs{G_i(x_A) - G_i(x_B) \mid C_i = c}}{x_A - x_B}
\end{align}
\end{proof}

\subsubsection{Identification in Passive Experiments} \label{app_ID_passive_APE}

Once the control vector $C_i$ is available, the proof in the passive case is identical to the active case. By convention, we set $S_i(B) = X_i^0$ in the passive case, so $S_i(B)$ can be omitted from the control vector $C_i$. The difference lies in constructing the first element of the control vector $C_i$. The identification challenge in the passive case is that the learning rate $\alpha_i$ is unknown for the control group that does not receive information. There are two possible approaches in this case

\begin{assumption}\label{assumption:passive_common_signal} Common signal variance and observed prior variance.

\begin{enumerate}
\item Let $\alpha_i = \frac{{\sigma^2_X}_i}{{\sigma^2_X}_i  + \sigma^2_S}$ with $\sigma^2_S$ common across individuals.
\item The researcher knows ${\sigma^2_X}_i$.
\end{enumerate}

\end{assumption}

In normal-normal Bayesian updating, $\alpha_i = \frac{{\sigma^2_X}_i}{{\sigma^2_X}_i  + \sigma^2_S}$, where ${\sigma^2_X}_i$ is the variance of the prior belief $X_i^0$ and $\sigma^2_S$ is the variance of the signal $S_i$. The first assumption, that $\sigma^2_S$ is common, means that people all think the signal is equally informative. The second assumption is about the design of the experiment and simply states that the variance of the prior distribution is elicited as in \citet{kumarEffectMacroeconomic23}.

\begin{assumption}\label{assumption:passive_selectiononobservables} Belief updates can be predicted from observables (i.e. no unobservable heterogeneity in updating).

\begin{enumerate}
\item There is some function $f$ with (estimable) parameters $\theta$ such that $X_i(A) = f(\theta, W_i)$
\end{enumerate}

For example, if $f$ is a linear function of $W_i$ as in \citet{ballaelliott22,cantoniProtestsStrategic19}, then $X_i(A) = W'_i \theta $. Since $Z_i$ is randomly assigned,  $\theta$ is identifed from a regression on the sample assigned to $A$.

\end{assumption}

If assumption \ref{assumption:passive_common_signal} does not hold, researchers who would like to estimate the APE must make a strong assumption that there are sufficiently rich covariates to predict all of the heterogeneity in belief updating. This is in contrast with the active control designs, that use the observed updates as a \q{revealed preference} measure of peoples' learning rates.

\renewcommand{\theassumption}{3A}
\begin{assumption}\label{assumption:passive_identification}
The passive control design maintains assumptions \ref{assumption:passive} from above, with the following modificiations:
\begin{enumerate}[label=\roman*.]
\item Relevance: $S_i(A) \neq X_i^0$ and $\alpha_i > 0$.
\item Nonlinear outcome: Use the general form of potential outcomes
]
    \[ Y_{i}(x) = G_i(x) \tag{\ref{eq:outcome}} \]
    \item Inferred Learning Rate: Either assumption \ref{assumption:passive_common_signal} or \ref{assumption:passive_selectiononobservables} holds
\end{enumerate}
\end{assumption}
\renewcommand{\theassumption}{\arabic{assumption}} 

Assumption \ref{assumption:passive_identification} for the passive case contains Assumption \ref{assumption:active_ID} for the active case, and adds \ref{assumption:passive_identification}.iii since the learning rate is not directly identified for the control group.

\begin{proposition}[Passive Control Identification]

Under Assumption \ref{assumption:passive_identification}, for any value $c$ of the control vector $C_i$ implied by either \ref{assumption:passive_common_signal} or \ref{assumption:passive_selectiononobservables}

\begin{equation}
\frac{\E \bs{G_i(x_A) - G_i(x_B) \mid C_i = c}}{x_A - x_B} \equiv \frac{\cov \bs{Y_i, X_i \mid C_i = c}}{\var \bs{X_i \mid C_i = c}}
\end{equation}

\end{proposition}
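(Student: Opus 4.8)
The plan is to reduce the passive-control case to the active-control case (Proposition~\ref{prop:active_control_ID}), whose proof applies verbatim once the control vector $C_i$ is in hand; the only work specific to the passive design is to check that conditioning on $C_i = c$ makes \emph{both} potential beliefs deterministic functions of $c$. Recall that in a passive design $S_i(B) \equiv X_i^0$, so by Bayesian updating \eqref{eq:beliefs_PO} one automatically has $X_i(B) = X_i^0$ and $X_i(A) = X_i^0 + \alpha_i\bp{S_i(A) - X_i^0}$; hence it suffices that $C_i = c$ pin down the triple $(\alpha_i, X_i^0, S_i(A))$, since conditioning on it is then the same event as conditioning on the active-design vector $\bs{\alpha_i \; X_i^0 \; S_i(A) \; S_i(B)}$. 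Granting that, fix $c$ and let $x_A$ and $x_B$ denote the (deterministic) values of $X_i(A)$ and $X_i(B)$ on $\bc{C_i = c}$, which differ by Relevance~\ref{assumption:passive_identification}.i ($\alpha_i > 0$ and $S_i(A) \neq X_i^0$). Conditional on $C_i = c$ the observed belief $X_i$ equals $x_A$ when $Z_i = A$ and $x_B$ when $Z_i = B$, so by Proposition~\ref{prop:binary_regression} the local regression coefficient equals $\bp{\E[Y_i \mid Z_i = A, C_i = c] - \E[Y_i \mid Z_i = B, C_i = c]}/(x_A - x_B)$; random assignment of $Z_i$ then replaces the numerator by $\E\bs{G_i(x_A) - G_i(x_B) \mid C_i = c}$, which is the claim.

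It remains to verify that $C_i = c$ identifies $(\alpha_i, X_i^0, S_i(A))$ in each case. In Case~1 (Assumption~\ref{assumption:passive_common_signal}) the vector $C_i = \bs{\text{rank}\bp{{\sigma^2_X}_i} \; X_i^0 \; S_i(A)}$ already contains $X_i^0$ and $S_i(A)$, so one only needs that fixing $\text{rank}\bp{{\sigma^2_X}_i}$ fixes $\alpha_i$. Because $\sigma^2_S$ is common across individuals, $\alpha_i = {\sigma^2_X}_i/\bp{{\sigma^2_X}_i + \sigma^2_S}$ is a strictly increasing function of ${\sigma^2_X}_i$, and rank statistics are invariant under strictly increasing transformations, so $\text{rank}\bp{{\sigma^2_X}_i}$ and $\text{rank}\bp{\alpha_i}$ induce the same partition of the sample, on each cell of which $\alpha_i$ is constant (with atoms in the distribution of ${\sigma^2_X}_i$ the partition is only coarser, which still fixes $\alpha_i$). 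In Case~2 (Assumption~\ref{assumption:passive_selectiononobservables}), $C_i = \bs{\wt{\alpha}_i \; X_i^0 \; S_i(A)}$ with $\wt{\alpha}_i \equiv \bp{X_i(A) - X_i^0}/\bp{S_i(A) - X_i^0}$ and $X_i(A) = f(\theta, W_i)$, where $\theta$ is identified by regressing $X_i$ on $W_i$ in the arm-$A$ subsample (valid by random assignment). The substance of the assumption is exactly that $X_i(A)$ has no variation beyond $f(\theta, W_i)$, so $X_i(A)$, hence $\wt{\alpha}_i$, is a deterministic function of $(\theta, W_i)$; combined with the remaining two components of $C_i$ this pins down $x_A$ and $x_B$.

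The main---and essentially only---obstacle is the fragility of the Case~2 reduction: it goes through only if the \q{selection on observables} restriction of Assumption~\ref{assumption:passive_selectiononobservables} holds \emph{exactly}. Unlike the active design, where each respondent's observed update reveals $\alpha_i$, here the learning rate of the control arm is never seen and must be imputed, and any unmodeled heterogeneity in $X_i(A)$ within a cell $\bc{C_i = c}$ leaves $X_i$ non-degenerate conditional on $C_i$ and breaks the binary-regression step on which everything rests. In Case~1 the analogous load-bearing assumption is that $\sigma^2_S$ is common across individuals, so that the elicited prior variances order the learning rates; given that, the rank argument is routine. Both constructions also presume, as already in the active design, that $S_i(A)$ is known for arm-$B$ individuals who never see it, which is why passive applications restrict attention to signals that are themselves functions of observables. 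Beyond these points the result inherits the Relevance and moment-existence conditions of Assumption~\ref{assumption:passive} directly, and no further argument is needed.
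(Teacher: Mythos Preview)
Your proposal is correct and follows essentially the same route as the paper: reduce the passive case to the active-control identification result by showing that, under each of Assumptions~\ref{assumption:passive_common_signal} and~\ref{assumption:passive_selectiononobservables}, the control vector $C_i$ pins down $(\alpha_i, X_i^0, S_i(A))$ and hence both potential beliefs, after which the binary-regression argument of Proposition~\ref{prop:active_control_ID} applies verbatim. Your treatment of the rank argument in Case~1 and the imputed-learning-rate construction in Case~2 matches the paper's, with only expository differences (you give more detail on why ranks suffice and add discussion of the fragility of the selection-on-observables step that the paper places elsewhere).
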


\begin{proof}

Under Assumption \ref{assumption:passive_common_signal}, $\alpha_i$ is a one-to-one function of ${\sigma^2_X}_i$. Thus conditioning on ${\sigma^2_X}_i$ or its rank is equivalent to conditioning on $\alpha_i$ and so conditional on $C_i \equiv \bs{\text{rank}\bp{ {\sigma^2_X}_i} \; X_i^0 \; S_i(A)}$, $X_i(A)$ and $X_i(B)$ are deterministic. The rest of the proof is identical to the active case.

Under Assumption  \ref{assumption:passive_selectiononobservables}, $X_i(A)$ in the control group is known from $f(\theta, W_i)$. To maintain similar arguments as the other cases, notice then that this implies that $\alpha_i$ is identified from $\frac{f(\theta, W_i) - X_i^0}{S_i(A) - X^0_i}$ for the control group and directly from $\frac{X_i - X_i^0}{S_i(A) - X^0_i}$ for the treated group. Then, conditional on $C_i \equiv \bs{\alpha_i \; X_i^0 \; S_i(A)}$, $X_i(A)$ and $X_i(B)$ are deterministic. The rest of the proof is identical to the active case.

\end{proof}

In each case, integrating over the distribution of the conditioning variables recovers an average partial effect $\E \bs{\frac{ {G_i(X_i(A)) - G_i(X_i(B))}}{X_i(A) - X_i(B)}}$. In the linear case, we recover the average coefficient $\E[\tau_i]$.

\subsection{Linear Controls in a Reweighted Regression} \label{app_linear_controls_LLS}

This section shows that a reweighted linear regression that controls for $\alpha_i$ nonparametrically but only controls linearly for $X_i^0, S_i(A), S_i(B)$ also identifies the APE under the maintained assumptions.

\begin{proposition}[Linear Controls with Reweighting] \label{prop:linear_controls}
Consider the active control design with nonlinear potential outcomes $Y_i = G_i(X_i)$. Let $W_i = [X_i^0 \; S_i(A) \; S_i(B)]'$. Under Assumption \ref{assumption:active_ID}, conditional on $\alpha_i$, the weighted regression of $Y_i$ on $X_i$ and $W_i$ with weights proportional to $(S_i(A) - S_i(B))^{-2}$ yields a coefficient on $X_i$ that identifies:
\begin{equation}
\E\left[\frac{G_i(X_i(A)) - G_i(X_i(B))}{X_i(A) - X_i(B)} \bigg\mid  \alpha_i\right]
\end{equation}

In the special case where $G_i(x) = \tau_i x + U_i$, this estimand simplifies further to $\E \bs{\tau_i \mid \alpha_i}$.

The analogous result holds for the passive design under Assumption \ref{assumption:passive_identification}, with $S_i(B) = X_i^0$ by convention. The reweighted regression then has weights proportional to $(S_i(A) - X_i^0)^{-2}$.

\end{proposition}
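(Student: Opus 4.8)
The plan is to reduce the weighted multivariate regression to a collection of the simple binary regressions already analyzed in Proposition \ref{prop:active_control_ID}, by showing that within each cell of $\alpha_i$ the extra linear controls $W_i$ together with the weighting scheme exactly reproduce the conditional-on-$C_i$ comparison. First I would fix a value of $\alpha_i$ and work entirely within that cell. There, by Bayesian updating \eqref{eq:beliefs_PO}, the observed belief is $X_i = X_i^0 + \alpha_i(S_i(Z_i) - X_i^0)$, and the belief \emph{gap} $X_i(A) - X_i(B) = \alpha_i(S_i(A) - S_i(B))$ is a deterministic function of $W_i$ (given the fixed $\alpha_i$). The key algebraic observation is that $X_i$ is an affine function of $T_i$ and $W_i$: writing $T_i = \1\{Z_i = A\}$, we have $X_i = (1-\alpha_i)X_i^0 + \alpha_i S_i(B) + \alpha_i(S_i(A) - S_i(B)) T_i$, so after partialling out $W_i$ linearly, the residualized regressor is proportional to the residual of $(S_i(A) - S_i(B)) T_i$ on $W_i$. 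Because $T_i \indep (S_i(A), S_i(B), X_i^0)$ within the $\alpha_i$ cell by random assignment, that residual is $(S_i(A) - S_i(B))(T_i - p)$ up to terms in $W_i$, mirroring the FWL computation already carried out for $\wt T^{ex}_i$ in Section \ref{sec:derivation_passive_tsls}.

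Next I would apply Frisch–Waugh–Lovell to the weighted regression: the coefficient on $X_i$ equals $\cov_\pi(\wt Y_i, \wt X_i)/\var_\pi(\wt X_i)$, where tildes denote residuals from weighted linear projection on $W_i$ (within the $\alpha_i$ cell) and $\pi$ denotes the weighted measure with weights $\propto (S_i(A) - S_i(B))^{-2}$. Plugging in $\wt X_i \propto (S_i(A)-S_i(B))(T_i - p) + (\text{terms absorbed})$ and the outcome relation $Y_i = G_i(X_i)$ with $X_i \in \{X_i(A), X_i(B)\}$, the $(S_i(A)-S_i(B))^{-2}$ weight cancels the $(S_i(A)-S_i(B))^2$ that appears in both numerator and denominator, exactly as $\alpha_i(S_i(A)-X_i^0)^2$ canceled in the passive-TSLS derivation. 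What remains in the numerator is $\E[(G_i(X_i(A)) - G_i(X_i(B)))/(\alpha_i(S_i(A)-S_i(B))) \mid \alpha_i]$ and in the denominator $\E[1 \mid \alpha_i]$, up to the $p(1-p)$ factor common to both; since $X_i(A)-X_i(B) = \alpha_i(S_i(A)-S_i(B))$, this is precisely $\E[(G_i(X_i(A))-G_i(X_i(B)))/(X_i(A)-X_i(B)) \mid \alpha_i]$. The linear specialization $G_i(x) = \tau_i x + U_i$ then collapses the ratio to $\tau_i$, giving $\E[\tau_i \mid \alpha_i]$, and the passive case is identical after substituting $S_i(B) = X_i^0$ and noting relevance \ref{assumption:passive_identification}.i gives $S_i(A) - X_i^0 \neq 0$ so the weights are well-defined.

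The main obstacle I anticipate is handling the linear control for $W_i$ carefully when $W_i$ is \emph{not} degenerate within an $\alpha_i$ cell: unlike the fully nonparametric conditioning of Proposition \ref{prop:active_control_ID}, here $X_i$ is not binary conditional on $(\alpha_i, W_i)$ being only \emph{linearly} controlled, so one must verify that the weighted least-squares residual $\wt X_i$ is genuinely orthogonal to the nuisance variation and that the reweighting by $(S_i(A)-S_i(B))^{-2}$ is exactly what makes the heterogeneous ``slope'' enter with equal weight across different realizations of the signal gap. Concretely, the subtlety is that $\var_\pi(\wt X_i)$ must be shown to equal $p(1-p)\,\E_\pi[\alpha_i^2 (S_i(A)-S_i(B))^2 \mid \alpha_i] = p(1-p)\alpha_i^2 \E_\pi[(S_i(A)-S_i(B))^2 \mid \alpha_i]$, and then the weight $(S_i(A)-S_i(B))^{-2}$ converts the $\pi$-expectation of $(S_i(A)-S_i(B))^2$ back into (a constant times) an unweighted count — so one should state precisely what measure $\pi$ the weights induce and check the cancellation is exact rather than merely up to a cell-dependent constant, since a cell-dependent constant would reweight across $\alpha_i$ when integrating. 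I would resolve this by doing the FWL computation conditional on $(\alpha_i, S_i(A), S_i(B))$ first — where $X_i$ \emph{is} binary and Proposition \ref{prop:binary_regression} applies verbatim — obtaining the cell slope $(G_i$-difference$)/(x_A - x_B)$, and only then averaging over $(S_i(A), S_i(B))$ within the $\alpha_i$ cell using the $(x_A - x_B)^{-2}$ weights, invoking the standard fact that a weighted average of simple regression slopes with weights proportional to the within-cell $\var(X_i)$ (which is $p(1-p)(x_A-x_B)^2$, exactly offsetting the $(x_A-x_B)^{-2}$ reweight) equals the pooled slope — leaving a clean unweighted average of the per-cell slopes, hence $\E[(G_i(X_i(A)) - G_i(X_i(B)))/(X_i(A)-X_i(B)) \mid \alpha_i]$.
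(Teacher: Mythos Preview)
Your proposal is correct and follows essentially the same route as the paper: apply FWL within an $\alpha_i$ cell, compute the residualized regressor $\tilde X_i = \alpha_i(T_i - p)(S_i(A)-S_i(B))$, and show that the $(S_i(A)-S_i(B))^{-2}$ weights cancel the signal-gap factors in both numerator and denominator, leaving the unweighted conditional average slope. The ``obstacle'' you anticipate is resolved more directly than your cell-by-cell fallback: because Bayesian updating makes $\E[X_i \mid W_i,\alpha_i]$ \emph{exactly} linear in $W_i$ (and the weights are themselves a function of $W_i$), the weighted linear projection of $X_i$ on $W_i$ coincides with the true conditional expectation, so the FWL residual is exactly $\alpha_i(T_i-p)(S_i(A)-S_i(B))$ without approximation---this is the step the paper states explicitly and is what lets the direct computation go through cleanly without first conditioning on the full $C_i$.
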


\begin{proof}
Consider the active design; the passive case follows analogously with $S_i(B) = X_i^0$.
Appealing to FWL, consider the coefficient on $\tilde{X}_i$, the residual from the projection of $X_i$ onto $W_i = [X_i^0 \; S_i(A) \; S_i(B)]'$ conditional on $\alpha_i$. That is:
\begin{equation}
\tilde{X}_i = X_i - \L_{\alpha_i}[X_i \mid W_i]  = X_i - \E[X_i \mid W_i, \alpha_i]
\end{equation}
The second equality uses the fact that, under learning rate updating \eqref{eq:beliefs_PO}, the true conditional expectation is linear in $W_i$ conditional on $\alpha_i$:
\begin{align}
\E[X_i \mid W_i, \alpha_i] = (1-\alpha_i)X_i^0 +  \alpha_i S_i(B) + \E \bs{T_i} \alpha_i \bp{S_i(A) - S_i(B)}
\end{align}
Thus the residual is with respect to the true conditional expectation and not only the linear projection. The notation $\L_{\alpha_i}[X_i \mid W_i]$ is meant to highlight the fact that linear projection is onto $W_i$ after conditioning on $\alpha_i$.
Writing $X_i$ in a similar form shows that
\begin{align}
X_i &= (1-\alpha_i)X_i^0 +  \alpha_i S_i(B) + T_i \alpha_i \bp{S_i(A) - S_i(B)} \\
\tilde{X}_i \equiv X_i - \E[X_i \mid W_i, \alpha_i] &= \alpha_i \bp{T_i - \E \bs{T_i}}  \bp{S_i(A) - S_i(B)}
\end{align}
The weighted coefficient from regressing $Y_i$ on $\tilde{X}_i$ with weights $(S_i(A) - S_i(B))^{-2}$ is thus:
\begin{align}
\beta_{\alpha} &= \frac{\E[Y_i\tilde{X}_i (S_i(A) - S_i(B))^{-2} \mid \alpha_i]}{\E[\tilde{X}_i^2 (S_i(A) - S_i(B))^{-2} \mid \alpha_i]} \\
&= \frac{\E[Y_i \cdot \alpha_i(T_i - \E[T_i])(S_i(A) - S_i(B))^{-1} \mid \alpha_i]}{\E[\alpha_i^2(T_i - \E[T_i])^2 \mid \alpha_i]} \\
&= \frac{\E[Y_i \cdot (T_i - \E[T_i])(S_i(A) - S_i(B))^{-1} \mid \alpha_i]}{\alpha_i \E[(T_i - \E[T_i])^2 \mid \alpha_i]}
\end{align}
Now, we compute the numerator:
\begin{align}
\E \bs{Y_i \cdot \frac{(T_i - \E[T_i])}{(S_i(A) - S_i(B))} \mid \alpha_i} &= \E \bs{G_i(X_i) \cdot \frac{(T_i - \E[T_i])}{(S_i(A) - S_i(B))} \mid \alpha_i} \\
&= \E[T_i](1-\E[T_i]) \cdot \E \bs{\frac{G_i(X_i(A)) - G_i(X_i(B))}{S_i(A) - S_i(B)} \mid \alpha_i}
\end{align}
Note that the denominator simplifies to $\alpha_i \E[(T_i - \E[T_i])^2 \mid \alpha_i] = \alpha_i \E[T_i](1-\E[T_i])$ since $T_i$ is Bernoulli. Substituting both into the expression for $\beta_{\alpha}$:
\begin{align}
\beta_{\alpha} &= \frac{\E[T_i](1-\E[T_i])}{\alpha_i \E[T_i](1-\E[T_i])} \E \bs{\frac{G_i(X_i(A)) - G_i(X_i(B))}{\bp{S_i(A) - S_i(B)}} \mid \alpha_i} \\
&= \E \bs{\frac{G_i(X_i(A)) - G_i(X_i(B))}{\alpha_i\bp{S_i(A) - S_i(B)}} \mid \alpha_i}
\end{align}
Given that $X_i(A) - X_i(B) = \alpha_i(S_i(A) - S_i(B))$, the denominator simplifies further to:
\begin{align}
\beta_{\alpha} &= \E \bs{\frac{G_i(X_i(A)) - G_i(X_i(B))}{X_i(A) - X_i(B)} \mid \alpha_i}
\end{align}
This completes the proof. The derivation for the passive case is analogous, with $S_i(B) = X_i^0$ by convention. The weights are then proportional to $(S_i(A) - X_i^0)^{-2}$.
\end{proof}

\section{Estimation Details} \label{sec:estimation}
This section provides estimation details, including implementation protocols for each experimental design with specific guidance on the specification of the \q{local} regression, trimming, and bandwidth selection.

\subsection{Linear Belief Updating Simplifies Estimation}
\label{sec:linear_controls_lls}

In the replications in this paper and in many empirical settings, the sample size is small enough that it is quite demanding to non-parametrically control for the learning rate, the prior, and potential signals. Taking full advantage of the linearity in the belief updating process \eqref{eq:beliefs_PO}, it is sufficient to condition only on the learning rate and control for the prior linearly. In passive designs, or designs with person-specific high and low signals (i.e. \citet{rothRiskExposure22}), it is also necessary to reweight by the inverse of the exposure.

The specific specifications used for estimation are as follows:

\subsubsection{Local Regressions in Panel Experiments}

    Conditional on the rank of the observed change in beliefs $\Delta X_i$, regress the change in the outcome $\Delta Y_i$ on the change in beliefs $\Delta X_i$ and a constant. This is exactly the local regression in Section \ref{main_panel_APE_id}.

\subsubsection{Local Regressions in Active and Passive Control Experiments}

    In active designs, the learning rate $\alpha_i = \bp{X_i - X_i^0}/\bp{S_i - X_i^0}$ is directly observed for all individuals, since beliefs are elicited in both treatment arms. In passive designs, however, the learning rate is unobserved for the control group that receives no information. In this case, the learning rate must be imputed using one of the approaches described in Section \ref{sec:lls_passive}: either from observed prior variance under common signal precision, or from predicted beliefs using rich observables. Appendix \ref{app_ID_passive_APE} formally states the assumptions required in each case.

    Given an observed or imputed learning rate, the regression procedure is the same in active and passive control experiments.  
    Conditional on the rank of the observed learning rate $\alpha_i$, regress the outcome $Y_i$ on the posterior belief $X_i$, the prior $X_i^0$ and a constant. In the active case, if there is variation in the individual signals $S_i(A), S_i(B)$, weight the regression by $\bp{S_i(A) - S_i(B)}^{-2}$. In the passive case, weight the regression by $\bp{S_i(A) - X^0_i}^{-2}$.

\subsection{Trimming}
The estimator will perform poorly as the change in beliefs approaches zero. Trimming \q{away from zero} as in \citet{grahamIdentificationEstimation12} thus can greatly improve the performance of the estimator in finite samples.\footnote{As in \citet{grahamIdentificationEstimation12}, we can impose some mild regularity conditions (i.e. smoothness and continuity) on the function $\tau(c) = \E \bs{\tau_i \mid C_i = c }$ such that trimming does not affect the consistency of the estimators when the trimming thresholds are asymptotically zero.}

\subsubsection{Trimming in Panel Experiments}
     Chose a threshold $h^*$ and exclude observations with changes in beliefs $\abs{\Delta X_i} < h^*$. This is a special case of \citet{grahamIdentificationEstimation12}.

\subsubsection{Trimming in Active Control Experiments}
    Choose a threshold learning rate $\alpha^*$ and exclude observations with a learning rate $\alpha < \alpha^*$. If there is variation in the individual signals $S_i(A), S_i(B)$, it is also important to chose a threshold $s^*$ and exclude observations with $\bp{S_i(A) - S_i(B)}^2 < s^*$ to ensure that the weights do not diverge (notice that when $S_i(A) = S_i(B)$ the instrument is not relevant and $\bp{S_i(A) - S_i(B)}^{-2}$ is not finite).

\subsubsection{Trimming in Passive Control Experiments}

    Choose a threshold learning rate $\alpha^*$ and exclude observations with a learning rate $\alpha < \alpha^*$. Also, chose a threshold $s^*$ and exclude observations with $\bp{S_i(A) - X_i^0}^2 < s^*$  to ensure that the weights do not diverge (notice that when $S_i(A) = X_i^0$ the instrument is not relevant and $\bp{S_i(A) - X_i^0}^{-2}$ is not finite).

\subsection{Bandwidth Selection} \label{sec:estimation_bandwidth_choice}

Table \ref{tab:lls_bw_appendix} presents Local Least Squares (LLS) estimates across all six applications alongside the original paper estimates for comparison. For each application, I report LLS estimates using four different bandwidth choices to illustrate the bias-variance tradeoff inherent in nonparametric estimation methods.

In the all applications, the conditioning variable (the learning rate or belief update) is transformed to ranks and normalized to the unit inverval. Since the Epanechnikov kernel only has positive weight on the interval $\bp{-1,1}$, this makes the bandwidth directly interpretable as the share of observations that receive positive weight in each local regression. To be explicit, for a bandwidth $h$, use $K\left(\frac{R(\Delta X_i) - R(x)}{h/2}\right)$, where $R(\cdot)$ denotes the rank transformation and $K$ is the Epanechnikov kernel. For example, a bandwidth of 0.05 roughly means that 5\% of the data is used in each local regression; this is a parsimonious way to implement an adaptive bandwidth that gets larger in areas where there are fewer observations.

For the main analysis in the paper, the bandwidths range from $0.01 to 0.1$.
These bandwidths are small enough to minimize contamination from inappropriate comparisons across different treatment intensities, yet large enough to yield reasonably precise estimates. In most studies, the estimates are relatively stable across several bandwidths. More reassuringly, the CAPE curves are also qualitatively similar across bandwidths. For example, Figure \ref{fig:ss_cape_bw} shows that the CAPE estimates for \citet{setteleHowBeliefs22} have a consistent peak in the second quartile and estimates in Figure \ref{fig:KGC_emp_cape_bw} \citep{kumarEffectMacroeconomic23} consistently slope downwards.

Estimation in active and passive designs proceeds in multiple steps: first, estimate the learning rate $\alpha_i$ (or its rank); second, estimate the \q{local} regressions over the grid of learning rates; third, aggregate the local estimates by bins of the learning rate to estimate the CAPE (as in Figure \ref{fig:cape}) or over the entire grid to estimate the APE (as in Table \ref{tab:all_applications}). Estimation in the panel case also proceeds in multiple steps, but skips estimation of the learning rate and begins directly by estimating local regressions conditional on the change in beliefs. It is important that the bootstrap resampling takes place before the first step so that the resulting standard errors reflect the uncertainty associated with the entire procedure. All standard errors in this paper are estimated using 1000 iterations of the Bayesian bootstrap with $1\%$ of outliers dropped for stability \citet{hansenEconometrics22}.

\begin{table}[htbp]
\centering
\caption{LLS and Fixed Effects Estimates}
\label{tab:lls_bw_appendix}
{\small
\begin{tabular}{lcccc}
\toprule
& \multicolumn{4}{c}{\textsc{Panel A:} Panel Experiments} \\
& \multicolumn{4}{c}{\citet*{wiswallDeterminantsCollege15}} \\
\cmidrule(lr){2-5}
Coefficient    & 0.695   & 0.721  & 0.808   & 0.379 \\
Standard Error & (0.284) & (0.29) & (0.319) & (0.276) \\
Bandwidth      & 0.025   & 0.05   & 0.075   & 0.1 \\

\bottomrule
  \\
& \multicolumn{4}{c}{\citet*{armonaHomePrice19}} \\
\cmidrule(lr){2-5}
Coefficient & 1.716 & 1.8 & 1.64 & 1.69 \\
Standard Error & (0.377) & (0.387) & (0.384) & (0.367) \\
Bandwidth & 0.01 & 0.025 & 0.05 & 0.1 \\

\bottomrule
  \\
& \multicolumn{4}{c}{\textsc{Panel B:} Active Experiments} \\
& \multicolumn{4}{c}{\citet*{setteleHowBeliefs22}} \\
\cmidrule(lr){2-5}
Coefficient & 0.178 & 0.16 & 0.132 & 0.117 \\
Standard Error & (0.061) & (0.042) & (0.037) & (0.035) \\
Bandwidth & 0.005 & 0.01 & 0.025 & 0.05 \\

\bottomrule
  \\
& \multicolumn{4}{c}{\citet*{rothRiskExposure22}} \\
\cmidrule(lr){2-5}
Coefficient & 1.138 & 0.882 & 0.591 & 0.353 \\
Standard Error & (0.373) & (0.365) & (0.352) & (0.322) \\
Bandwidth & 0.05 & 0.075 & 0.1 & 0.15 \\

\bottomrule
  \\
& \multicolumn{4}{c}{\textsc{Panel C:} Passive Experiments} \\
& \multicolumn{4}{c}{\citet*{kumarEffectMacroeconomic23}} \\
\cmidrule(lr){2-5}
Coefficient & 1.368 & 1.787 & 2.036 & 2.214 \\
Standard Error & (0.457) & (0.465) & (0.538) & (0.589) \\
Bandwidth & 0.01 & 0.025 & 0.05 & 0.1 \\

\bottomrule
  \\
& \multicolumn{4}{c}{\citet*{cantoniProtestsStrategic19}} \\
\cmidrule(lr){2-5}
Coefficient & 0.182 & 0.18 & 0.18 & 0.179 \\
Standard Error & (0.236) & (0.164) & (0.133) & (0.12) \\
Bandwidth & 0.025 & 0.05 & 0.1 & 0.2 \\

\bottomrule
 \end{tabular}
} \vspace*{-1em}
\begin{quote} \small \textit{Notes:} This table presents estimates of the effect of beliefs on outcomes from all six replication studies. LLS estimates are presented for different bandwidth choices at four different bandwidth choices. In all applications, the conditioning variable is transformed to ranks; these bandwidths thus have intuitive interpretation as the share of the data used in each local regression. Standard errors are reported in parentheses. They are the standard deviation of the bootstrap distribution with $1000$ draws and $1\%$ of outliers dropped for stability \citep{hansenEconometrics22}.
\end{quote}
\end{table}

\clearpage

\begin{figure}[htb]
    \includegraphics[width=\linewidth]{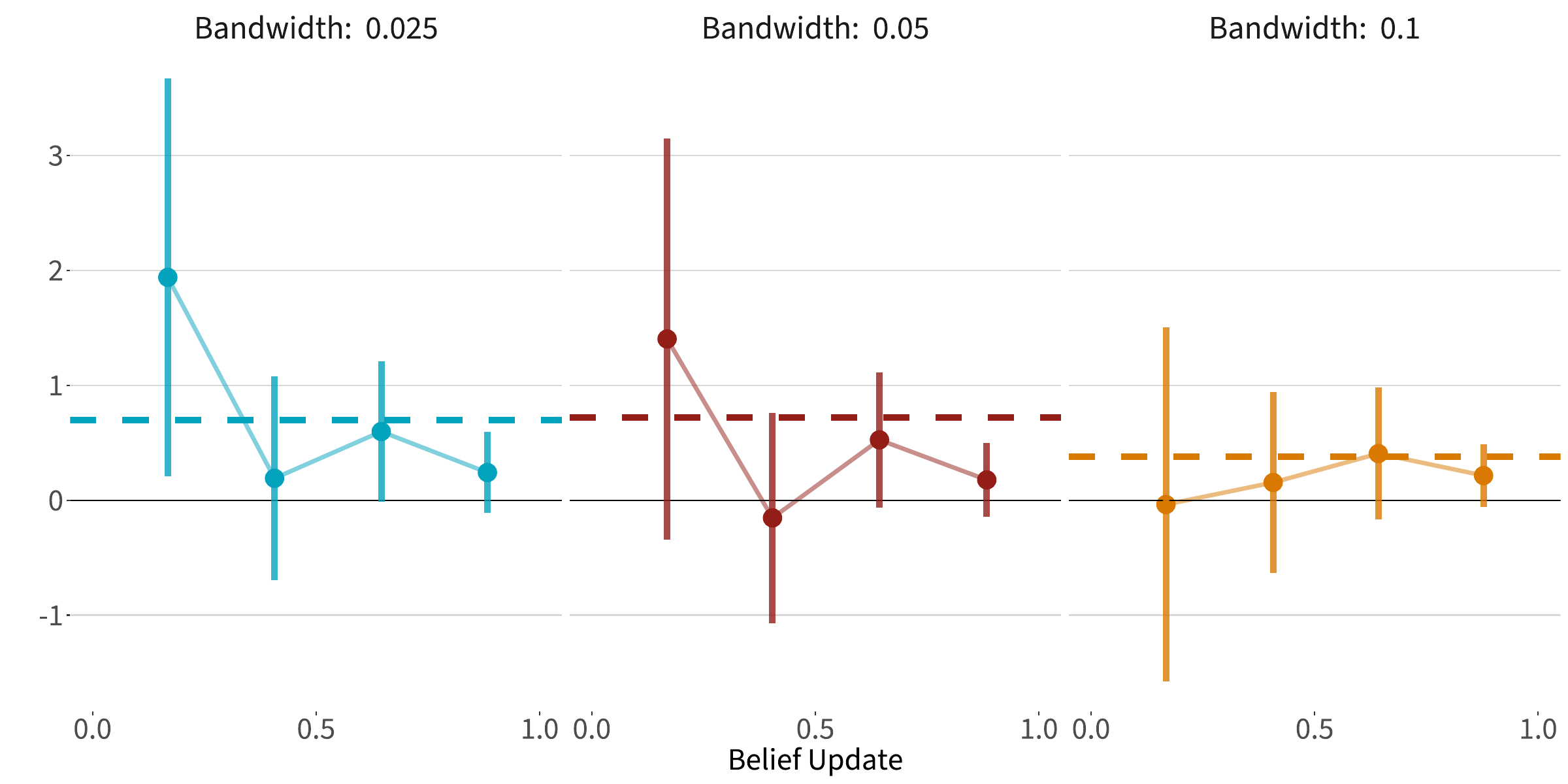}
    \caption{Conditional Average Partial Effects in \citet{wiswallDeterminantsCollege15}, Several Bandwidths}
    \label{fig:wz_cape_bw}
    \begin{quote}
        \textit{Notes:} This figure plots estimates of the conditional average partial effect $E[\tau_i|\Delta X_i=x]$ against the size of the belief update $x$. Each panel shows results for a different bandwidth choice. The dashed horizontal line in each panel shows the average partial effect (APE) estimated using that bandwidth.
       Confidence intervals displayed are twice the bootstrap standard errors. See Table \ref{tab:lls_bw_appendix} for the point estimate and standard error of the APE.
    \end{quote}
\end{figure}

\begin{figure}[htb]
    \includegraphics[width=\linewidth]{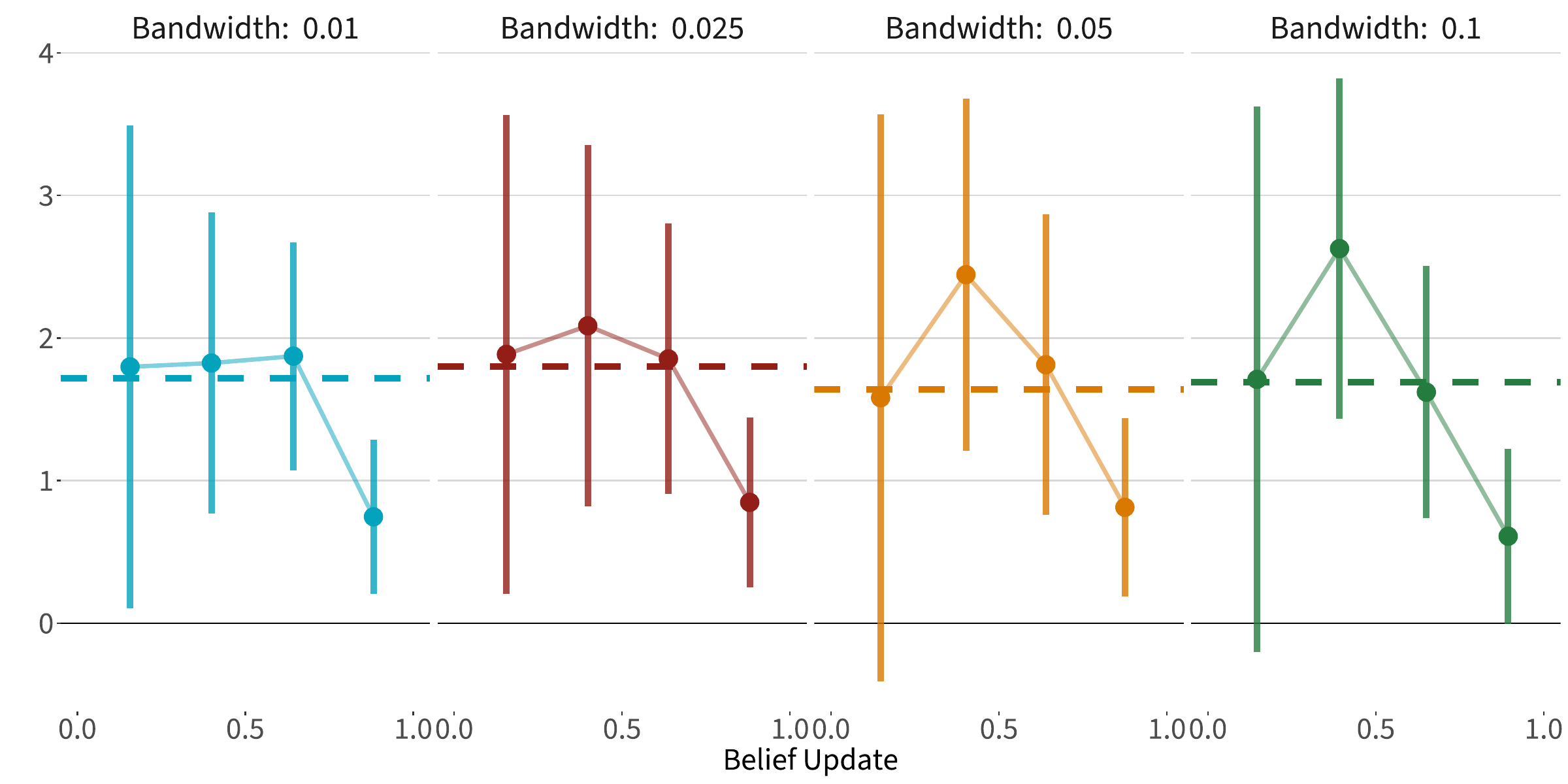}
    \caption{Conditional Average Partial Effects in \citet{armonaHomePrice19}, Several Bandwidths}
    \label{fig:afz_cape_bw}
    \begin{quote}
        \textit{Notes:} This figure plots estimates of the conditional average partial effect $E[\tau_i|\Delta X_i=x]$ against the size of the belief update $x$. Each panel shows results for a different bandwidth choice. The dashed horizontal line in each panel shows the average partial effect (APE) estimated using that bandwidth.
        Confidence intervals displayed are twice the bootstrap standard errors. See Table \ref{tab:lls_bw_appendix} for the point estimate and standard error of the APE.
    \end{quote}
\end{figure}

\begin{figure}[htb]
    \includegraphics[width=\linewidth]{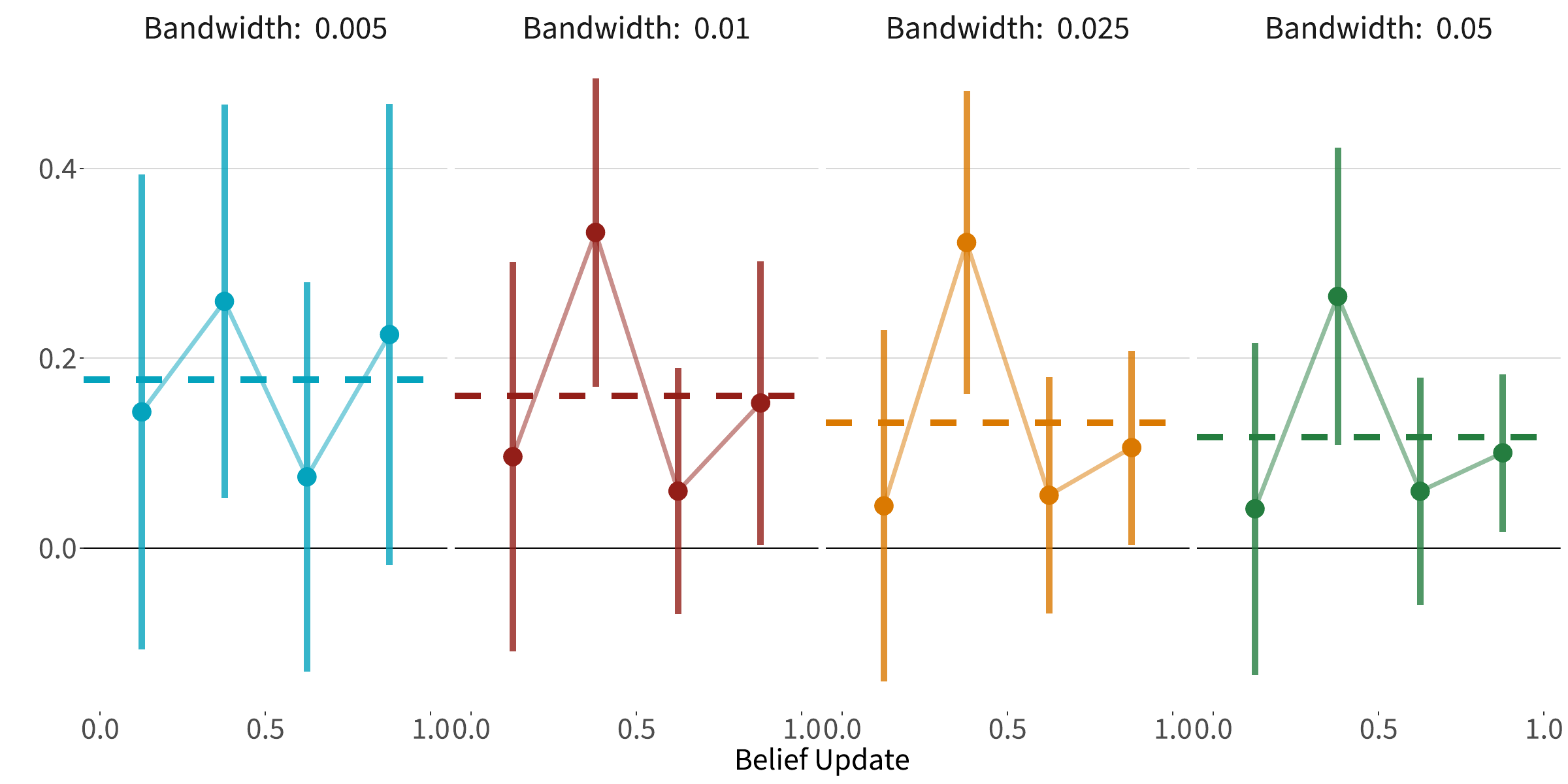}
    \caption{Conditional Average Partial Effects in \citet{setteleHowBeliefs22}, Several Bandwidths}
    \label{fig:ss_cape_bw}
    \begin{quote}
        \textit{Notes:} This figure plots estimates of the conditional average partial effect $\E \bs{\tau_i \mid \text{rank}\bp{\alpha_i}}$ the rank of the individual learning rate. Each panel shows results for a different bandwidth choice. The dashed horizontal line in each panel shows the average partial effect (APE) estimated using that bandwidth.
        Confidence intervals displayed are twice the bootstrap standard errors. See Table \ref{tab:lls_bw_appendix} for the point estimate and standard error of the APE.
    \end{quote}
\end{figure}

\begin{figure}[htb]
    \includegraphics[width=\linewidth]{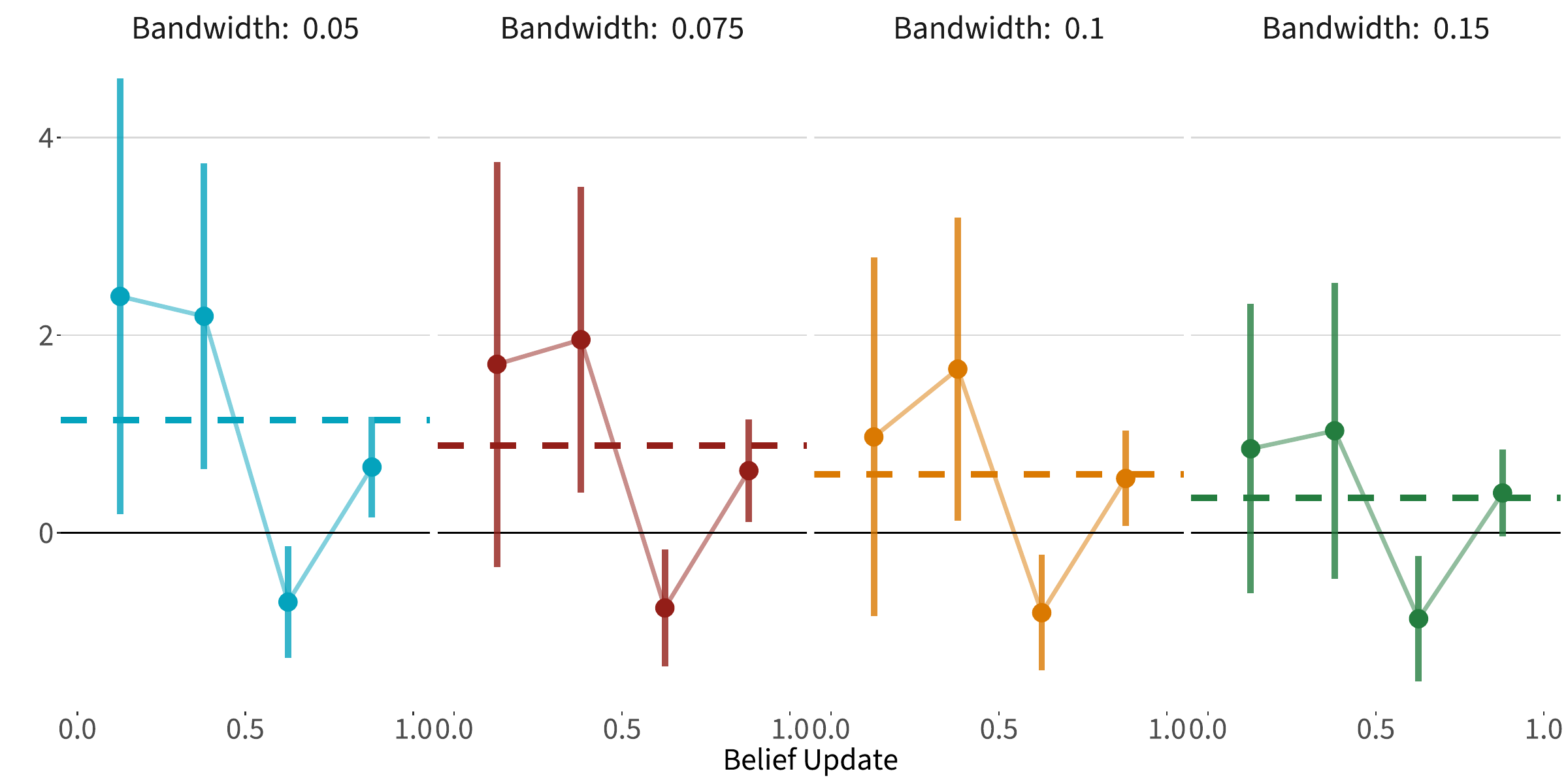}
    \caption{Conditional Average Partial Effects in \citet{rothRiskExposure22}, Several Bandwidths}
    \label{fig:rsw_cape_bw}
    \begin{quote}
        \textit{Notes:} This figure plots estimates of the conditional average partial effect $\E \bs{\tau_i \mid \text{rank}\bp{\alpha_i}}$ the rank of the individual learning rate. Each panel shows results for a different bandwidth choice. The dashed horizontal line in each panel shows the average partial effect (APE) estimated using that bandwidth.
        Confidence intervals displayed are twice the bootstrap standard errors. See Table \ref{tab:lls_bw_appendix} for the point estimate and standard error of the APE.
    \end{quote}
\end{figure}

\begin{figure}[htb]
    \includegraphics[width=\linewidth]{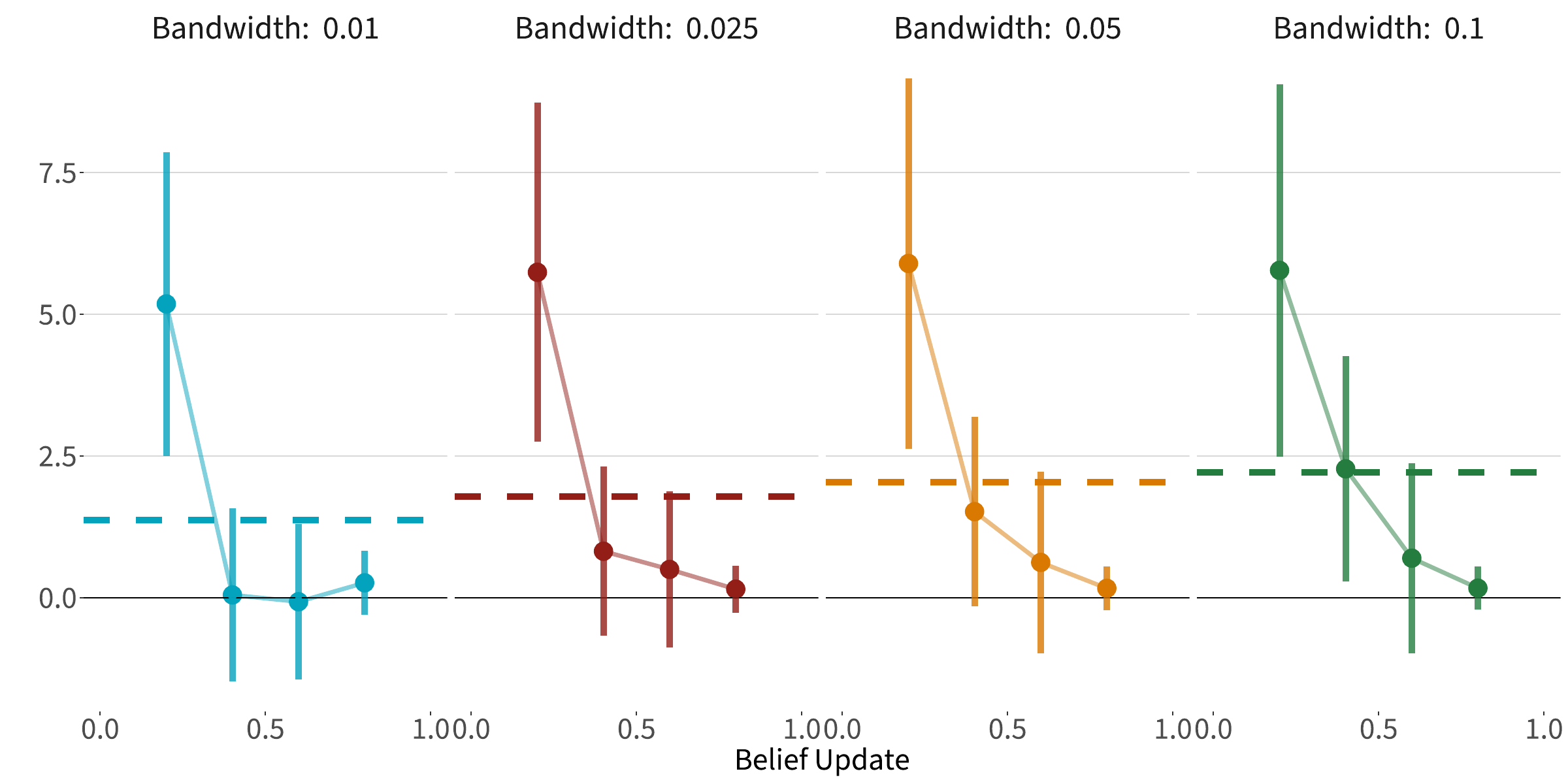}
    \caption{Conditional Average Partial Effects in \citet{kumarEffectMacroeconomic23}, Several Bandwidths}
    \label{fig:KGC_emp_cape_bw}
    \begin{quote}
        \textit{Notes:} This figure plots estimates of the conditional average partial effect $\E \bs{\tau_i \mid \text{rank}\bp{\alpha_i}}$ the rank of the individual learning rate. Each panel shows results for a different bandwidth choice. The dashed horizontal line in each panel shows the average partial effect (APE) estimated using that bandwidth.
        Confidence intervals displayed are twice the bootstrap standard errors. See Table \ref{tab:lls_bw_appendix} for the point estimate and standard error of the APE.
    \end{quote}
\end{figure}

\begin{figure}[htb]
    \includegraphics[width=\linewidth]{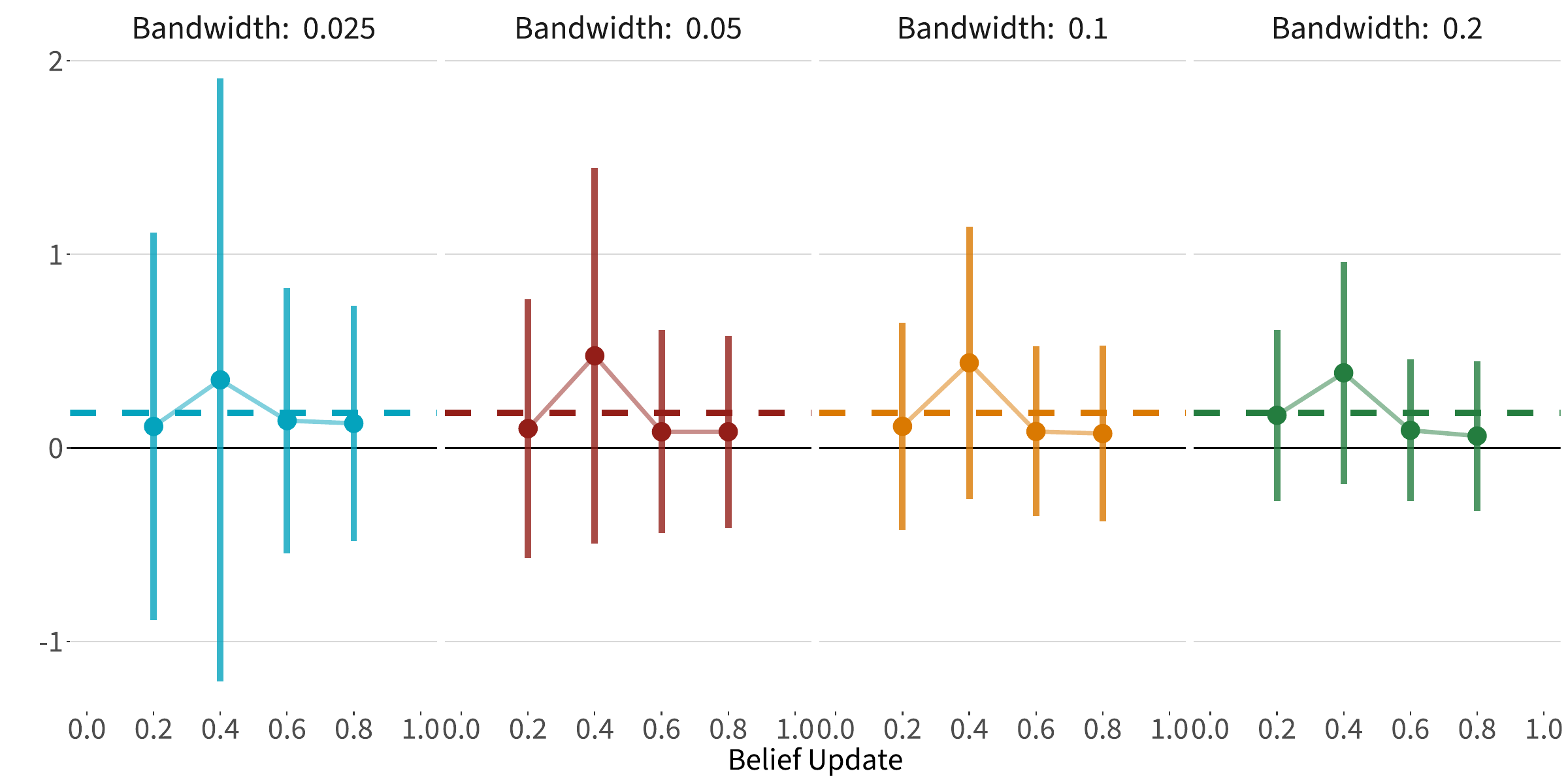}
    \caption{Conditional Average Partial Effects in \citet{cantoniProtestsStrategic19}, Several Bandwidths}
    \label{fig:CYYZ_cape_bw}
    \begin{quote}
        \textit{Notes:} This figure plots estimates of the conditional average partial effect $\E \bs{\tau_i \mid \text{rank}\bp{\alpha_i}}$ the rank of the individual learning rate. Each panel shows results for a different bandwidth choice. The dashed horizontal line in each panel shows the average partial effect (APE) estimated using that bandwidth.
        Confidence intervals displayed are twice the bootstrap standard errors. See Table \ref{tab:lls_bw_appendix} for the point estimate and standard error of the APE.
    \end{quote}
\end{figure}

  \clearpage

\section{Application Details} \label{app_application_details}

This section provides additional information about the key specifications under consideration in each of the six applications.

\subsection{Systematic Selection of Empirical Applications} \label{sec:paper_selection}

I identified papers for empirical reanalysis through a systematic search of top economics journals. On April 15, 2024, I searched the Web of Science database for papers published in the top five economics journals (American Economic Review, Econometrica, Journal of Political Economy, Quarterly Journal of Economics, Review of Economic Studies), the Review of Economics and Statistics, the four American Economic Journals, and AER: Insights. The search identified papers containing ``beliefs,'' ``information,'' or ``perception'' together with ``experiment'' or ``treatment'' in their title or abstract. This yielded 344 papers and 22 duplicates.\footnote{The Web of Science search query was
\texttt{
        (SO=(JOURNAL OF POLITICAL ECONOMY) OR SO=(AMERICAN ECONOMIC REVIEW) OR SO=(QUARTERLY JOURNAL OF ECONOMICS) OR SO=(REVIEW OF ECONOMIC STUDIES) OR SO=(ECONOMETRICA) OR SO=(REVIEW OF ECONOMICS "AND" STATISTICS) OR SO=(AMERICAN ECONOMIC JOURNAL* OR AMERICAN ECONOMIC REVIEW INSIGHTS)) AND (TI=(Belief OR Information OR perception) OR AB =(Belief OR Information OR perception)) AND (TI=(Experiment OR Treatment ) OR AB=(Experiment OR Treatment))}
}

I applied a hierarchical set of exclusion criteria to identify papers suitable for reanalysis. The initial screen excluded papers that were not information provision experiments (228 papers). This left 116 experiments, which I sorted by citation count and classified according to the first exclusion criterion each failed. I required papers to study how beliefs affect outcomes, not just how information affects beliefs (the first stage) or how information affects outcomes (reduced form). The experimental design had to follow a prior-treatment-posterior-outcome structure and be one of three types compatible with our estimator: panel experiments, active control experiments, or passive control experiments. For passive control designs, I additionally required that the study elicit the variance or uncertainty of participants' prior beliefs, which is necessary to model heterogeneity in belief updating. Finally, I required publicly available replication data.

I sought two examples of each experimental design type. After identifying six papers meeting all criteria (two panel, two active control, two passive control), I stopped screening. The remaining 61 papers had fewer citations than the least-cited included paper.

Table \ref{tab:paper_selection} shows the classification of all 344 papers. Among the 55 experiments I fully screened, the most common reasons for exclusion were incompatible experimental designs (25 papers), papers studying reduced-form effects of information without measuring beliefs (10 papers), and papers studying only belief updating without measuring outcome effects (8 papers).
These design-incompatible experiments are experiments related to beliefs or information but using designs other than the ones studied in the paper. For example, \citet{pallaisInefficientHiring14a} randomly assigns workers to recieve either detailed or coarse public evaluations and shows that inexperienced workers value this public information about their ability. One paper was excluded solely for lack of public replication data.

\begin{table}[h]
\centering
\caption{Systematic Paper Selection Results}
\label{tab:paper_selection}
\begin{tabular}{lrp{8.5cm}}
\toprule
Classification & Count & Description \\
\midrule
\multicolumn{3}{l}{\textit{Initial search results}} \\
Total papers identified & 344 & \\
\addlinespace[0.5em]
\multicolumn{3}{l}{\textit{Stage 1: Information provision experiment}} \\
Not experiment & 228 & Not an information provision experiment \\
Experiments to screen & 116 & \\
\addlinespace[0.5em]
\multicolumn{3}{l}{\textit{Iterative Step: Screen Until Two Examples Found In Each Design}} \\
Eliminated by citation cutoff & 61 & Fewer citations than least-cited included paper \\
Experiments fully screened & 55 & \\
\addlinespace[0.5em]
\multicolumn{3}{l}{\textit{Stage 2: Design compatibility and data availability}} \\
No belief measurement & 10 & Outcome effects only, no beliefs updating \\
No outcome measurement & 8 & Belief updating only, no outcome effects \\
Misc. design incompatible & 25 & Not prior-treatment-posterior-outcome  \\
Passive, no variance & 5 & Passive control without prior uncertainty \\
No replication data & 1 & Replication package not publicly available \\
\addlinespace[0.5em]
\textbf{Included in analysis} & \textbf{6} & \textbf{Met all criteria (2 panel, 2 active, 2 passive)} \\
\bottomrule
\end{tabular}
\begin{minipage}{\textwidth}
\vspace{0.5em}
\footnotesize
\textit{Notes:} This table shows the results of our systematic search of Web of Science conducted on April 15, 2024. The search covered papers published in the top five economics journals (American Economic Review, Econometrica, Journal of Political Economy, Quarterly Journal of Economics, Review of Economic Studies), Review of Economics and Statistics, all American Economic Journals, and AER: Insights. Search terms were ``beliefs,'' ``information,'' or ``perception'' combined with ``experiment'' or ``treatment'' appearing in title or abstract. Papers were sorted by citation count within each category. After identifying two examples of each experimental design type (panel, active control, passive control), I stopped screening; remaining experiments had fewer citations than the least-cited included paper.
\end{minipage}
\end{table}
 
\subsection{Application Details: \citet*{wiswallDeterminantsCollege15}}
\citet{wiswallDeterminantsCollege15} study how beliefs about future earnings affect how college students choose majors.  Their panel experimental design measures beliefs and outcomes before and after an information intervention.

\subsubsection{Setting}

In their experiment, undergraduate students were surveyed about their beliefs regarding future earnings, as well as population averages. They were also surveyed about their probability of graduating with a particular college major. After eliciting these prior beliefs, students received information about the true population distributions of these attributes. Finally, they reported revised beliefs about future earnings and college major choices.

\subsubsection{Specification of Interest}

The paper's main econometric specification is a first-difference regression of the change in stated probability of choosing a major on the change in beliefs about earnings. The authors normalize major choice and earnings relative to humanities/arts, thus the key first-differenced variables are

\begin{align}
\Delta Y_i &= \ln(\pi_{k,i,\text{post}}/\pi_{\bar{k},i,\text{post}}) - \ln(\pi_{k,i,\text{pre}}/\pi_{\bar{k},i,\text{pre}}) \\
\Delta X_i &= \ln(\omega_{k,i,\text{post}}/\omega_{\bar{k},i,\text{post}}) - \ln(\omega_{k,i,\text{pre}}/\omega_{\bar{k},i,\text{pre}})
\end{align}
where $\pi_{k,i}$ is the probability of majoring in field $k$ and $\omega_{k,i}$ is the expected earnings in field $k$ for individual $i$, with $\bar{k}$ representing humanities/arts. See page 814, equation 9 of \citet{wiswallDeterminantsCollege15} for details.

This specification follows column 3 of Table 6.B of \citet{wiswallDeterminantsCollege15}. This specification restricts to the sample of freshmen and sophomores (who are more able to adjust their major) and trims out outliers who update beliefs by more than $\$50,000$. This is the specification with the largest point estimate (and t-statistic) in Table 6.

\subsubsection{Implementing the LLS Estimator}

I also trim the sample to exclude very small updates (less than 0.05 in absolute value) that aren't exactly zero; this avoids regressions with very small variation in the regressors.\footnote{While point estimates are qualitatively similar without trimming away from zero, this trimming is important for the precision of estimates.} I also follow \citet{wiswallDeterminantsCollege15} and include fixed effects for college major in the local regressions.

 \subsection{Application Details: \citet*{armonaHomePrice19}}
\citet{armonaHomePrice19} study how past home price growth affects beliefs about home prices and how these expectations affect investment decisions. Their panel experimental design measures beliefs and outcomes before and after an information intervention.

\subsubsection{Setting}

In their experiment, participants in an online survey were first asked about their beliefs regarding past and future home price changes in their zip code. After eliciting these prior beliefs, the researchers provided a random subset of respondents with factual information about past local home price changes. They then re-elicited expectations about future price changes from all participants, creating an experimental panel. The outcome is constructed from a portfolio allocation task; participants were also asked to assign money to a savings account or a housing fund, both before and after the information treatment.

\subsubsection{Specification of Interest}

The paper's main econometric specification is a first-difference regression of the change in investment decisions (from the portfolio allocation task) on the change in beliefs about future home price growth.

Define $\Delta Y_i$ as the change in the percentage allocation to the housing asset and $\Delta X_i$ as the change in one-year-ahead home price expectations. For each individual $i$, we observe these changes directly as first differences:
\begin{align}
\Delta Y_i &= Y_{i1} - Y_{i0} \\
\Delta X_i &= X_{i1} - X_{i0}
\end{align}

This specification follows columns 5-7 of Table 10 of \citet{armonaHomePrice19}, with covariates omitted to focus on the key variable of interest.

\subsubsection{Implementing the LLS Estimator}

The sample selection criteria are as follows. As in column (7) of Table 10 of \citet{armonaHomePrice19}, the coefficient of interest is the coefficient on $\Delta X_i$ among the treated group; the control group is omitted from the regression. I also trim the sample to exclude very small updates (less than 0.025 in absolute value) that aren't exactly zero to avoid regressions with very small variation in the regressors. 
\subsection{Application Details: \citet{setteleHowBeliefs22}}
\citet{setteleHowBeliefs22} studies how beliefs about the gender wage gap affect support for policies aimed at reducing gender inequality. The active control experimental design provides all participants with information about the gender wage gap, but varies the information across treatment groups.

\subsubsection{Setting}

In the experiment, participants were first asked to report their beliefs about the gender wage gap. Then, participants were randomly assigned to see either a \q{high gap} truthful estimate (women earn 74\% of men's wages) or a \q{low gap} truthful estimate (women earn 94\% of men's wages). They were then asked to report their beliefs about the gender wage gap again after seeing the signal and were asked about their support for various gender-equality policies.

\subsubsection{Specification of Interest} \label{sec:apdx_ss_spec_details}

The paper's main econometric specification uses a two-stage least squares (TSLS) regression, where assignment to the \q{high gap} treatment serves as an instrument for posterior beliefs about the gender wage gap.
This specification follows column 7 of Table 5.C of \citet{setteleHowBeliefs22}. Posterior beliefs and the outcome are z-scored. The outcome in column 7 is a summary index constructed from demand for six gender-equality policies. The construction of the index is described in Online Appendix D.7 of \citet{setteleHowBeliefs22} as follows:

\begin{bquote}
To adjust for multiple inference, I follow \citet{andersonMultipleInference08} in applying a combined approach: First, I group the main outcome variables of interest into families and test for an overall treatment effect in a highly conservative way. Second, I test for a treatment effect on disaggregated outcomes within each family, allowing for more power in exchange for a small number of Type I errors. In the remainder of this section I describe the implementation of this combined approach and the intuition behind it (page 34, Online Appendix \cite{setteleHowBeliefs22}).
\end{bquote}

\subsubsection{Implementing the LLS Estimator} \label{sec:SS_implementing}

The point estimate in the original paper is negative and seeks to measure the effect of \q{women's relative earnings} on support for gender-equality policies. To make the discussion parsimonious across applications, we flip the sign of the belief variable so that point estimates are positive (unlike the original paper). The effect of interest can then be interpreted as the effect of \q{women's earnings gap} on support for gender-equality policies.

The sample selection criteria are as follows. We can only estimate the learning rate for individuals with $\text{prior} \neq \text{signal}$, so we exclude people with $\text{prior} = \text{signal}$. Additionally, the local regression is not identified for individuals with $\alpha = 0$, so we exclude them as well.\footnote{Directly dividing the belief update by the difference between the signal and the prior leads to very noisy estimates of the learning rate, which causes the LLS estimator to behave poorly in the bootstrap. Thus, for each individual in the sample, I take a kernel-weighted average of the belief update and the exposure to the signal and use that ratio to construct the learning rate. Intuitively, instead of constructing the learning rate from the raw prior and posterior, I construct it from smoothed versions of the prior and posterior.}
Finally, also exclude individuals with negative learning rates (those whose posterior is farther from the signal than their prior), as their updating doesn't follow reasonable updating patterns and thus the Bayesian learning structure does not hold on this sample.\footnote{\citet{vz_aeri} show that updating \q{towards the signal} is predicted by a much broader class of models than the Bayesian model. One reasonable interpretation is that these individuals are simply failing an \q{attention check}.}

As discussed in Appendix \ref{sec:linear_controls_lls}, it is sufficient to control non-parametrically for the learning rate $\alpha_i$ and to control linearly for the remaining elements of the control vector $\bs{S_i(A), S_i(B), X_i^0}$. Since the signals are common and $S_i(A) = 74 , S_i(B) = 94$ for all $i$, this simplifies further. The only remaining control variable is the prior $X_i^0$ and there is no need to reweight. Following \citet{setteleHowBeliefs22}, I include fixed effects for the elicitation subgroup, since this is the level of randomization. Other controls and sampling weights are omitted. The local regression is thus a regression of $Y_i$ on $X_i, X_i^0$ and elicitation subgroup fixed effects conditional on (the rank of) $\alpha_i$.

 \subsection{Application Details: \citet*{rothRiskExposure22}}
\citet{rothRiskExposure22} study how perceived exposure to macroeconomic risk affects households' demand for macroeconomic information. Their active control experimental design exploits sampling variation between two official census surveys to create exogenous variation in beliefs about exposure to unemployment risk.

\subsubsection{Setting}
In this experiment, participants first reported their prior beliefs about how the Great Recession affected unemployment rates among similar people. Then, participants were randomly assigned to receive truthful information about actual unemployment rate changes during the Great Recession based on data from either the American Community Survey (ACS) or the Current Population Survey (CPS). Sampling variation and procedural differences between these two surveys generate variation in the signals.

After receiving this information treatment, participants reported their posterior beliefs about their personal probability of becoming unemployed during the next recession. Finally, respondents chose between receiving expert forecasts about four different macroeconomic variables: recession likelihood, inflation, government bond returns, or government spending, or receiving no forecast at all.

\subsubsection{Specification of Interest}

The paper's main econometric specification uses a two-stage least squares (TSLS) regression where the difference in unemployment increase information between ACS and CPS data serves as an instrument for posterior beliefs about personal unemployment risk during the next recession. I replicate the main specification where the outcome variable is the probability of choosing to receive a recession forecast (multiplied by 100 so that the final estimates are in percentage point units). Since there is individual level variation in the potential signals, this estimand does not simplify to the expression given in \ref{eq:active_weights}. Instead, this estimand targets a weighted average of $\tau_i$ with weights $\omega_i \propto \alpha_i (S_i(A) - S_i(B))^2$.

More formally, the instrument is

\begin{equation}
T^{\Delta}_i \equiv \begin{cases}
S_i(A) - S_i(B) & \text{if } Z_i = A \\
S_i(B) - S_i(A)  & \text{if } Z_i = B
\end{cases}
\end{equation}

and the TSLS estimand is

\begin{equation}
\frac{\cov \bs{T^{\Delta}_i, Y_i}}{\cov \bs{T^{\Delta}_i, X_i}} = \E \bs{\tau_i \cdot \frac{\alpha_i (S_i(A) - S_i(B))^2}{\E{\alpha_i (S_i(A) - S_i(B))^2}}}
\end{equation}

\subsubsection{Implementing the LLS Estimator}

As in \citet{setteleHowBeliefs22}, we implement the LLS estimator using the two-step approach. The signals vary across participants based on their demographic characteristics, so we weight the local regressions by the inverse of the squared exposure $(S_i(A) - S_i(B))^{-2}$ to account for this variation in instrument strength.

The estimation of the learning rate and the sample restrictions are identical to \citet{setteleHowBeliefs22}, as discussed in \ref{sec:SS_implementing}. I use a smoothed estimate of the learning rate and exclude individuals with $\alpha \leq 0$. Additionally, since there are individual specific signals, I trim individuals with very small variation in the potential signals and require that $\bp{S_i(A) - S_i(B)}^2 > 0.25$. This ensures that the weights proportional to $(S_i(A) - S_i(B))^{-2}$ are well behaved.

The local regression is thus a regression of $Y_i$ on $X_i, X_i^0, S_i(A), S_i(B)$ conditional on (the rank of) $\alpha_i$, with weights proportional to $(S_i(A) - S_i(B))^{-2}$. The linear controls for $X_i^0, S_i(A), S_i(B)$, are sufficient to ensure that the residual variation is mean independent of the error term $U_i$. The weights ensure that each covariate group receives equal weight in the local regression so that the estimand retains its interpretation as an unweighted average.  
\subsection{Application Details: \citet*{kumarEffectMacroeconomic23}}
\citet{kumarEffectMacroeconomic23} study how firms' macroeconomic forecasts affect their economic decisions. The passive experiment provided a random subset of participants with a macroeconomic forecast.

\subsubsection{Setting}
In this experiment, participating firms were first asked to report their prior beliefs about GDP growth. Then, participants were then randomly assigned to one of three treatment groups receiving different types of information about macroeconomic forecasts, or to a control group receiving no information. Finally, they reported revised beliefs about GDP growth as well as actual firm decisions six months later.

Like \citet{vz_aeri}, I exclude the treatment groups that were designed to shift the second moment of beliefs and use only the first treatment group that provided information about the level of GDP growth.\footnote{As \citet{vz_aeri} also discuss, belief experiments with multiple information treatments that induce variation in both the level and the uncertainty of beliefs are delicate to interpret when effects of both the mean and the effect of the uncertainty are heterogeneous. In general, TSLS specifications with multiple endogenous variables can be difficult to interpret \citep{bhuller2SLSMultiple24}.} The analysis in this paper uses only comparisons between a single treatment arm and the control.

\subsubsection{Specification of Interest}

The main econometric specification I replicate is a simplified version of the system of equations given in equations 3 and 4'. Instead of using all treatment arms to instrument for both the posterior mean and posterior uncertainty, I use only the first treatment arm to instrument for the posterior mean. I interact the treatment indicator with the sign of the difference between the signal and the prior.\footnote{\citet{vz_aeri} also replicate these results and use only the first treatment arm. They show that results are similar in specifications that interact treatment with the actual difference between the signal and the prior and those that only interact it with the sign of the difference. Results can be different, however, in specifications that also include the un-interacted treatment indicator, since specifications can have negative weights.}
This specification is similar in spirit to the estimates in Table 3 of \citet{kumarEffectMacroeconomic23}.

\subsubsection{Implementing the LLS Estimator} \label{sec:kgc_implementation}

\citet{kumarEffectMacroeconomic23} elicit not only the mean of the prior belief, but also the variance. The implementation of the LLS estimator in this application thus follows Case 1 discussed in Section \ref{sec:ape_identification}. Under the assumption that individuals agree on the variance of the signal, the rank of the learning rate is simply the rank of the prior variance; conditioning on the rank of the prior variance is sufficient to condition on the learning rate.

I trim individuals with very small variation in the exposure to the signal and require that $\bp{S_i - X_i^0)}^2 > 0.01$. This ensures that the weights proportional to $\bp{S_i - X_i^0)}^{-2}$ are well behaved.

The local regression is thus a regression of $Y_i$ on $X_i, X_i^0$ conditional on (the rank of) ${\sigma^2_X}_i$, with weights proportional to $\bp{S_i - X_i^0)}^{-2}$. The linear controls for $X_i^0$, is sufficient to ensure that the residual variation is mean independent of the error term $U_i$. The weights ensure that the covariate groups recieve equal weight in the inner regression so that our estimand retains its interpretation as an unweighted average. To make the CAPE curves presented in Figure \ref{fig:cape} Panel C.i and Figure \ref{fig:KGC_emp_cape_bw} more comparable to those in other designs, I estimate $\E{ \text{rank}(\alpha) \mid \text{rank}({\sigma^2_X}_i)}$ on the treated group and use this for the x-axis of the binned estimates.
 
\subsection{Application Details: \citet*{cantoniProtestsStrategic19}}
\citet{cantoniProtestsStrategic19} study how  beliefs about others' participation in protests affect an individuals' own protest decisions. The passive experiment provided a random subset of participants with truthful information about the planned participation of their classmates.

\subsubsection{Setting}

In this experiment, participating students were asked to report prior beliefs about their classmates' participation in an upcoming political protest. Then, one day before the protest, a random subset of participants were provided with truthful information about the planned participation of their classmates. Finally, after the protest, they collected data on participants' actual protest behavior.

\subsubsection{Specification of Interest}

The paper's main econometric specification uses a two-stage least squares (TSLS) regression where treatment indicator, interacted with the sign of the difference between the prior and the signal, is an instrument for posterior beliefs. This specification targets a weighted average of $\tau_i$ with weights $\omega_i \propto \alpha_i \abs{S_i - X_i^0}$.

The TSLS estimand is

\begin{equation}
    \frac{\cov \bs{\text{sign}{\bp{S_i - X_i^0}} T_i,  Y_i }}{\cov \bs{\text{sign}{\bp{S_i - X_i^0}} T_i, X_i }}
\end{equation}

\subsubsection{Implementing the LLS Estimator}

\citet{cantoniProtestsStrategic19} collect a rich set of observables in their survey, which they use to predict prior beliefs in a supplemental analysis (Online Appendix Table A.5). The implementation of the LLS estimator in this application thus follows Case 2 discussed in Section \ref{sec:ape_identification}. Under the assumption that the counterfactual belief update in the passive control group can be predicted from rich observables, these estimates can be used to predict the (latent) learning rate in the control group. Then, the estimated learning rate can be used in the place of the observed learning rate in an active design.

I use the replication package provided by the authors to directly replicate the prediction exercise in Appendix Table A.5, directly predicting the learning rate instead of the prior belief. Then, I impose the same restrictions as in the active cases.
In particular, I restrict to learning rates strictly greater than zero. Like in \ref{sec:kgc_implementation}, I trim individuals with very small variation in the exposure to the signal and require that $\bp{S_i - X_i^0)}^2 > 0.01$.

The local regression is thus a regression of $Y_i$ on $X_i, X_i^0$ conditional on (the rank of) ${\wt{\alpha}}_i$, with weights proportional to $\bp{S_i - X_i^0)}^{-2}$. Recall that I use the notation $\wt{\alpha}_i$ to emphasize that the learning rate is predicted in the control group. The linear control for $X_i^0$, is sufficient to ensure that the residual variation is mean independent of the error term $U_i$. The weights ensure that the covariate groups recieve equal weight in the inner regression so that our estimand retains its interpretation as an unweighted average. To estimate standard errors, we use the empirical bootstrap with 1000 iterations.

\subsubsection{Discussion} \label{app_cyyz_discussion}

The TSLS estimate and the LLS estimate are both quite noisy, making it difficult to draw strong conclusions about the direction or magnitude of any difference.
However, if one takes the point estimates literally, it would suggest a different model of the dependence between belief updating and belief effects. Suppose that this is a setting where it is difficult for anyone to form precise beliefs so that uncertainty is widespread. Then, the relevant heterogeneity in updating may come from inattention: people who use the information in their decisions spend time carefully interpreting the signal and incorporating it into their beliefs. In constrast, people whose decisions don't depend on these beliefs may mostly ignore the signal and update their beliefs only slightly. A model where agents choose both how much information to acquire at baseline and how much to pay attention to new information as in \citet{fusterExpectationsEndogenous22} may be the appropriate theoretical generalization to unify the results across all six studies. An interesting task for future research would be to use the empirical tools provided in this paper to discipline models where the correlation between belief updating and the belief effects is ex ante ambiguous.
 
\clearpage \section{Endogenous Belief Formation Through Costly Information Acquisition} \label{sec:endog_info_model}

This section formalizes a model of endogenous information acquisition. When beliefs strongly affect decisions--think of a homeowner whose refinancing choices depend critically on house price expectations--individuals rationally invest in gathering precise information before any experiment takes place. These well-informed individuals update their beliefs only modestly when researchers provide new information, while those for whom the belief matters less start with noisier priors and update more dramatically. Since standard specifications weight individuals by the strength of their belief updating, they systematically under-weight precisely those people for whom beliefs matter most. I formalize this intuition by modeling how individuals trade off the cost of acquiring information against the risk of making decisions with imprecise beliefs. The resulting negative correlation between causal effects and belief updating leads to attenuated estimates in information provision experiments.

\subsection{General Model}
\label{sec:belief_formation}

People have a subjective belief distribution given by $F_i(\cdot)$. To make the analysis tractable, focus on belief distributions that can be characterized by their mean $\mu_i$ and variance $\sigma_i^2$, with $F_i$ belonging to a parametric family (e.g., normal distributions). People are uncertain about their beliefs, and this uncertainty about their beliefs generates uncertainty about the action that they would like to take. Let {$R(\tau_i, \sigma_i^2)$} denote the subjective risk or ex-ante regret (for example, the expected loss) that an individual with causal effect $\tau_i$  faces when their {belief variance is $\sigma_i^2$. {Note that $R$ depends on the distribution $F_i$ only through its variance $\sigma_i^2$, as the mean belief affects the level of the action but not the risk from uncertainty.}

We make the following assumptions on $R$.
First, uncertainty is costly: $\frac{\partial R}{\partial \sigma^2} \geq 0$, where $\frac{\partial R}{\partial \sigma^2} = 0 $ if and only if $\tau_i = 0$.
Second, since there is uncertainty in beliefs, it is costly to base behavior on these beliefs: $\frac{\partial R}{\partial \abs{\tau}} \geq 0$, where $\frac{\partial R}{\partial \abs{\tau}^2} = 0 $ if and only if $\sigma^2 = 0$.
Finally, uncertainty is more costly for people whose beliefs affect actions more: $\frac{\partial^2 R}{\partial \sigma^2 \partial \abs{\tau}} > 0$.

People make a decision to pay a cost $c > 0$ to obtain new information or to do nothing. There is an updating process such that the variance  of beliefs after viewing a signal $\sigma^2_+$ is less than the variance of the initial beliefs $\sigma^2$ . People then trade off the reduction in risk from the new information against the cost of the signal. Thus, when person $i$ has beliefs with variance $\sigma^2$, her loss can be given recursively by
\begin{equation}
    {V(\tau_i, \sigma^2)} = \min \left\{R(\tau_i, \sigma^2), V(\tau_i, \sigma^2_+) + c  \right\}
\end{equation}
Given the assumptions we have made on $R$, for any beliefs with $\sigma^2 > 0$, there is some threshold value $\tau^*$ such that people with $\abs{\tau_i} > \tau^*$ prefer to pay $c$ to update their beliefs. That such a threshold exists is guaranteed by the fact that $R(\tau_i, \sigma^2) = R(\tau_i, \sigma^2_+)$ when $\tau_i = 0$, which implies that $R(\tau_i, \sigma^2) < R(\tau_i, \sigma^2_+) + c$ at $\tau_i = 0$. However, since $\frac{\partial^2 R}{\partial \sigma^2 \partial \abs{\tau_i}} > 0$, we also know that $\frac{\partial R(\tau_i, \sigma^2)}{\partial \abs{\tau_i}} > \frac{\partial R(\tau_i, \sigma^2_+)}{\partial \abs{\tau_i}}$ since $\sigma^2 > \sigma^2_+$.

At $\tau_i = 0$, $R(\tau_i, \sigma^2)$} is below $R(\tau_i, \sigma^2_+) + c$. However,  $R(\tau_i, \sigma^2)$ is increasing faster than $R(\tau_i, \sigma^2_+)$ in $\abs{\tau_i}$ such that eventually these curves will cross. And since {$R(\tau_i, \sigma^2)$} is always increasing faster than {$R(\tau_i, \sigma^2_+)$} in $\abs{\tau_i}$, they will cross exactly once. Figure \ref{fig:belief-formation} illustrates this graphically. When beliefs are formed through such a process, people with larger causal effects of beliefs will have (weakly) more precise beliefs in equilibrium.

\subsection{A Simple Example with Quadratic Loss and Normal Beliefs}
\label{sec:toy_model_belief_formation}

This example illustrates how the general framework applies in an example where beliefs are normally distributed and the risk function takes a particularly tractable form.

Let $Y$ be the action (e.g., list price of a house) and $X$ denote beliefs (e.g., about the market value). People start with a prior belief distribution centered around $\pi_i$ with variance $\sigma_{X_0}^2$ so that their beliefs are represented by the normal $\N(\pi_i,\sigma^2_{X_0})$. For simplicity, $\sigma^2_{X_0}$ is common. Signals $S$ are drawn from a normal distribution $\N(\mu_S,\sigma^2_S)$. This is an assumption that people have the same information environment.

People are uncertain about their beliefs, and this uncertainty about their beliefs generates uncertainty about the action that they would like to take. People act to minimize the loss function $L_i(y,x) = D\bigl(y,Y_i(x)\bigr)$, for some distance function $D$, which is the disutility associated with taking action $y$ when $X = x$.
Intuitively, integrating $L_i(y,x)$ over the distribution of beliefs converts uncertainty about beliefs (i.e., what is the probability that $X=x$) into regret about actions (i.e., how far is the choice $y$ from $Y_i(x)$, which is optimal when $X=x$).
In this loss function, beliefs affect utility only through their effect on actions. There is no direct \q{psychic} cost of imprecise beliefs.

People choose $Y_i(x)$ following the rule $Y_i(x) = \tau_i x + U_i$, where $\tau_i$ and $U_i$ vary across individuals, and have quadratic loss $D(a,b) = \bp{a-b}^2$.
They act to minimize their expected loss, which is simply the expectation of $L_i(y,x)$ with respect to $X$ $\bp{\text{i.e. } \int L_i(y,x) dF(x)}$.

Let $\ol{X}$ denote the mean of the belief distribution. When beliefs are given by the normal $\N(\ol{X},\sigma^2_X)$, the choice of $Y$ that minimizes expected loss is simply $Y^* \equiv Y_i(\ol{X}) = \tau_i \ol{X} + U_i$. Use this to further simplify the expression for expected loss and write
\begin{align}
   \int L_i(Y^*,x) dF(x) &= \int D(Y_i(\overline{X}),Y_i(x)) dF(x) \\
   &= \int  \bp{\bp{\tau_i \ol{X} + U_i} - \bp{\tau_i x + U_i}}^2 dF(x)
    = \tau_i^2 \sigma^2_X
\end{align}
since $\E\bs{(\tau_i \ol{X} - \tau_i X)^2} = \tau_i^2 \mathrm{Var}(X) = \tau_i^2 \sigma_X^2$. Notice that with quadratic loss, the risk function takes the form $R(\tau_i, \sigma^2_X) = \tau_i^2 \sigma^2_X$, which satisfies the assumptions about $R$ given in Section \ref{sec:belief_formation}.

The disutility generated by uncertainty about $X$ is increasing in both the variance of the belief distribution and the magnitude of the causal effect of beliefs on the outcome. This expression allows us to study the information acquisition problem.

I endogenize belief formation by allowing people to pay a fixed cost $C$ to view a signal that is centered around the unknown true value. They then update beliefs following the normal-normal Bayesian learning formula. When a person's beliefs are given by $\N(\overline{X}, \sigma^2_X)$, her loss is given recursively by
\begin{align}
    V_i(\overline{X}, \sigma^2_X) &= \min \left\{\E_X[L_i(Y_i(\overline{X}),x)],
    \E_S\bs{ V_i(X'(s), \sigma^2_{X'})} + C  \right\}
\end{align}
where $\sigma^2_{X'} = \frac{\sigma^2_X\sigma_S^2}{\sigma^2_X + \sigma_S^2}$ is the posterior variance after observing signal $S$ and the expectation $\E[S]$ is with respect to the signal distribution. The benefit of the signal comes from the fact that the posterior variance is less than the prior variance as long as the prior distribution is not already degenerate. Notice that in this example, the value function depends on the belief distribution only through its variance $\sigma^2_X$, since the mean $\ol{X}$ affects the level of the optimal action but not the expected loss from uncertainty.

Solving this recursive problem gives the equilibrium condition
\begin{align}
    \tau_i^2 \sigma^2_X &=  \tau_i^2 \sigma^2_{X'} + C
\end{align}
In equilibrium, agents will be indifferent between paying the fixed cost to obtain new information and living with the uncertainty they have.\footnote{To ease exposition, I have ignored integer constraints that will, in general, prevent this from holding with equality. People will purchase signals until the next signal reduces their expected loss by less than the cost of the signal and will generally be strictly worse off if they buy another signal, not indifferent. This technicality makes exposition more cumbersome without any conceptual payoff.} Replacing $\sigma^2_{X'}$ with its definition, and recalling that $1-\frac{\sigma^2_S}{\sigma^2_S+\sigma^2_X} = \alpha_i$ yields the following equality
\begin{align}
    \alpha_i \tau_i^2 \sigma^2_X  =  C \label{eq:info_acquisition_LR}
\end{align}

Agents for whom the outcome is very sensitive to the beliefs ($\tau_i^2$ is very large) will update their information until $\sigma^2_X \alpha_i$ is small.\footnote{Notice that since $\alpha_i \equiv \frac{\sigma^2_X}{\sigma^2_S + \sigma^2_X}$, $\alpha_i$ and $\sigma^2_X$ move together. That is, holding fixed $\sigma^2_S$, an increase in $\sigma^2_X$ implies an increase in $\alpha_i$ and vice-versa.} On the other hand, agents for whom the outcome is not sensitive to beliefs ($\tau_i^2$ is small) will stop after seeing fewer signals, so that $\sigma^2_X \alpha_i$ is relatively large.

This simple model illustrates how the causal relationship of interest affects the formation of beliefs before the experiment takes place. People whose actions depend more on their beliefs will be more willing to pay to obtain new information, and will therefore have more precise beliefs. In a Bayesian updating model, people with more precise beliefs will be less responsive to new information. In this way, the amount of variation in beliefs that can be induced by experimentally providing new information is directly depends on the causal effects of interest.

\subsection{Using Models of Belief Formation and Updating to Interpret TSLS Estimates} \label{sec:model_tsls_discussion}

The class of parameters that are targeted by existing standard specifications depend not only on the causal effects of beliefs on outcomes $\tau_i$, but also on heterogeneity in the way that beliefs are updated in response to new information.

In the model proposed in this section, beliefs are formed endogenously through a process of costly information acquisition. In Appendix \ref{sec:toy_model_belief_formation}, I solve a special case of this model where the subjective risk is given by the expected quadratic loss $R(\tau_i, \sigma^2) = \tau_i^2 \sigma^2$. Parameterizing the loss function makes it possible to solve analytically for the learning rate $\alpha_i$ and variance of the prior $\sigma^2_i$ as a function of the causal effects of beliefs $\tau_i$.

People have inaccurate and imprecise beliefs precisely because they have small individual partial effects (small $\abs{\tau_i}$); when beliefs are an important determinant of the behaviors  (large $\abs{\tau_i}$), people exert effort to form accurate and precise beliefs. In this environment, parameters with weights proportional to the strength of the shift in beliefs will be attenuated and underestimate the magnitude of the average effect.

Alternative models of the relationship between belief updating and the effects of beliefs on behaviors can be used to relate causal parameters estimated using standard specifications to the APE. For example, \citet{fusterExpectationsEndogenous22} allow variation in the learning rate to come from a more complicated model that adds dynamics of rational inattention to costly information acquisition.

\clearpage

\begin{figure}[!tb]
    \centering
    \caption{People with Large Effects of Beliefs $\tau_i$ Form Precise Beliefs}

   \begin{tikzpicture}[domain=0:3,x =.3*\textwidth, y=.1*\textwidth]
\fill[seq0!20] (0,0) -- (1,0) -- (1,3) -- (0,3) -- cycle;
  \fill[seq1!20] (1,0) -- (2,0) -- (2,3) -- (1,3) -- cycle;
  \fill[seq2!20] (2,0) -- (3,0) -- (3,3) -- (2,3) -- cycle;

    \draw[->, line width=1pt] (-0.1,0) -- (3.1,0) node[right] {$\abs{\tau_i}$};
    \draw[->, line width=1pt] (0,-.1) -- (0,3.1) node[right] {Disutility (Risk)};

  \draw[line width=1.5pt, color=seq0, domain=0:1.8, variable=\x] plot ({\x},{\x^2}) node[left] {{$R(\tau_i, \sigma^2)$}};
  \draw[line width=1.5pt, color=seq1, domain=0:2.34, variable=\x] plot ({\x},{.5+.5*\x^2}) node[left] {$R(\tau_i, \sigma^2_+) + c$ };
  \draw[line width=1.5pt, color=seq2, domain=0:2.45, variable=\x] plot ({\x},{1+.375*\x^2}) node[right] {$R(\tau_i, \sigma^2_{++}) + 2c$};

  \draw[color = black!50, dashed, line width=1.5pt] (1,0) -- (1,3);
  \draw[color = black!50, dashed, line width=1.5pt] (2,0) -- (2,3);
    \node[below] at (1,0) {$\tau^*$};
    \node[below] at (2,0) {$\tau^{**}$};

    \node[align=left, below right]  at (0,3) {No signal \\ $\abs{\tau_i} < \tau^* $ \so $ \sigma^2_i = \sigma^2  $};
    \node[align=left, below right] at (1,1) {One signal \\ $\tau^* < \abs{\tau_i} < \tau^{**} $ \so $ \sigma^2_i = \sigma^2_+$};
     \node[align=left, below right] at (2,2) {Two signals \\ $\tau^{**} < \abs{\tau_i}  $ \so $ \sigma^2_i = \sigma^2_{++}$};

\end{tikzpicture}

    \label{fig:belief-formation}
        \begin{quote}
        \textit{Notes}:
    This figure plots the loss as a function of $\abs{\tau_i}$ after seeing no signals, one signal, and two signals. The assumptions on $R_i$ ensure that each pair of lines crosses exactly once. Since $R(\tau_i, \sigma^2) = R(\tau_i, \sigma^2_+)$ when $\tau_i = 0$, $R(\tau_i, \sigma^2) < R(\tau_i, \sigma^2_+) + c$. If $\sigma^2_{++} > 0$, these curves are all strictly increasing in $\abs{\tau_i}$ by assumption. Additionally, since $\sigma^2 > \sigma^2_{+} > \sigma^2_{++}$, then $R(\tau_i, \sigma^2)$ is steeper than $R(\tau_i, \sigma^2_+)$, which is steeper than $R(\tau_i, \sigma^2_{++})$ by the assumption that \footnotesize $\frac{\partial^2 R}{\partial \sigma^2 \partial \abs{\tau_i}}$  \normalsize $ > 0$.
    \end{quote}
\end{figure}

\clearpage
 
\clearpage \section{Information Experiments and the TSLS Estimator}
\label{app_tsls}

This appendix provides discussion of the interpretation of TSLS estimators in information provision experiments. The challenges with obtaining unconditional monotonicity motivate the representative specifications discussed in Section \ref{sec:lit_weights}, which have non-negative weights under a weaker conditional monotonicity assumption. While the weighted average interpretation of TSLS estimands is well-established \citep{angristTwoStageLeast95}, this section examines the specific implications for information experiments and relates them to a workhorse learning rate updating assumptions. Section \ref{sec:no_priors} provides a novel strategy to ensure non-negative weights when priors are not elicited.

\subsection{The Reduced Form Effect of Information Provision} \label{app_info_reduced_form}

In active designs, the reduced form effect of treatment is the effect of being assigned to see the signal in arm $A$ rather than the signal in arm $B$. In passive designs, this is the effect of being assigned to see new information. Consider the simple OLS regression of the outcome $Y_i$ on the treatment indicator $T_i \equiv \1\bc{Z_i = A}.$
\begin{align}
    \beta^{RF} &\equiv \frac{\cov \bs{T_i,Y_i}}{\var \bs{T_i}} \label{eq:RF_regression} \\
    &= \E \bs{\tau_i\bp{X_i(A) - X_i(B)}} \label{eq:RF_weights}
\end{align}
The reduced form effect of assignment to arm $A$ on the outcome is the expectation of the individual effect of beliefs on behaviors $\tau_i$ scaled by the individual effect of the information treatment on beliefs $X_i(A) - X_i(B)$. If all $\tau_i$ have the same sign, the reduced form effect of treatment assignment on the outcome will be informative of the sign of the effect of beliefs on behaviors only if the $X_i(A) - X_i(B)$ are all positive or all negative. If the first stage effect on beliefs is positive for some people and negative for others, then the average effect of the information treatment on behaviors can be close to zero, even if the effect of beliefs on behaviors is large and the individual first stage effects of the information treatment on beliefs are large.

\subsubsection{From the Effect of Information to the Effect of Beliefs}

As \citet{giaccobassoWhereMy22} note, reduced form estimates can be difficult to interpret since they combine the causal effects of beliefs on behaviors with the first stage effects of the information provision on beliefs. The reduced form can therefore be small if beliefs have only a weak effect on behavior, or if the information provision has only a weak effect on beliefs.

The reduced form is most directly policy-relevant when the counterfactual of interest concerns information provision per se rather than belief changes more generally. However, when the relationship of interest is the effect of beliefs on behavior, researchers typically normalize the reduced form effect by the first stage effect and report TSLS estimates.

\subsubsection{Constructing TSLS Estimates} \label{app_wald_tsls}

To motivate the specifications in Section \ref{sec:lit_weights}, we consider the simplest TSLS estimand as that directly uses treatment assignment $T_i$ to instrument for beliefs.
\begin{align}
    \beta^{TSLS} &\equiv \frac{\beta^{RF}}{\beta^{FS}} = \frac{\cov \bs{T_i,Y_i}}{\cov\bs{T_i,X_i}}   \label{eq:Ti_estimand} \\
    \intertext{where $\beta^{FS} \equiv \cov\bs{T_i,X_i}/\var\bs{T_i}$. For the binary treatment indicator, this becomes}
     \beta^{TSLS} &= \frac{\E \bs{Y_i \mid T_i= 1} - \E \bs{Y_i \mid T_i= 0}}{\E \bs{X_i \mid T_i= 1} - \E \bs{X_i \mid T_i= 0}} \label{eq:Ti_wald} \\
    \intertext{Substituting the linear outcome equation \eqref{eq:outcome} yields}
    \beta^{TSLS} &= \frac{\E \bs{\tau_i\bp{X_i(A) - X_i(B)}}}{\E{\bs{X_i(A) - X_i(B)}}} \label{eq:tsls_wald}
\end{align}

In the presence of heterogeneous effects, TSLS does not generally recover the average of the individual treatment effects. The TSLS coefficient depends on the covariance between individual belief effects $\tau_i$ and the first stage variation $X_i(A) - X_i(B)$:
\begin{align}
    \frac{\E \bs{\tau_i \bp{X_i(A) - X_i(B)}}}{\E \bs{\bp{X_i(A) - X_i(B)}}} &=  \E \bs{\tau_i}  + \frac{\cov \bs{\tau_i, \bp{X_i(A) - X_i(B)}}}{\E \bs{\bp{X_i(A) - X_i(B)}}} \label{eq:tsls_wald_covariance}
\end{align}

The covariance term is the \q{bias} relative to the APE $\E \bs{\tau_i}$ and motivates the LLS estimator developed in Section \ref{sec:lls}.

\subsection{Unconditional Instrument Monotonicity and Bayesian Updating} \label{app_tsls_bayes}

The weights derived in Section \ref{app_wald_tsls} are non-negative when unconditional monotonicity holds. This section examines when Bayesian updating ensures monotonicity across different experimental designs.

\subsubsection{Monotonicity in Active Designs} \label{app_active_monotonicity}

In active designs, monotonicity follows directly from Bayesian updating when signals are ordered such that $S_i(A) \geq S_i(B)$. Since $X_i(A) - X_i(B) = \alpha_i(S_i(A) - S_i(B))$ and $\alpha_i \in (0,1)$ under Bayesian updating, the sign of the first stage is determined by $\text{sign}(S_i(A) - S_i(B))$. The immediacy of monotonicity in active designs should be considered one advantage of this design relative to passive designs.

\subsubsection{Monotonicity in Passive Designs}

In passive designs, unconditional monotonicity requires that $S_i(A) - X^0_i$ has the same sign for all participants—either the signal is above everyone's prior or below everyone's prior. This is often empirically implausible; in all six empirical examples considered in this paper, we observe participants with priors both above and below the signal. This is why the simple specification \eqref{eq:Ti_estimand} is not widely used in practice; instead researchers use one of two main strategies to ensure positive weights.

\subsection{Strategies for Ensuring Non-Negative Weights in Passive Designs}

When unconditional monotonicity fails, researchers can construct specifications with non-negative weights by incorporating information about priors and signals.

\subsubsection{Sample Splitting Approach} \label{app_tsls_split}

Researchers can split the sample based on whether the signal is above or below each participant's prior, then estimate separate TSLS regressions within each subsample. For participants with $S_i(A) - X^0_i > 0$:
\begin{align}
    \beta^{\text{split}}_+ &= \frac{\cov \bs{T_i, Y_i \mid S_i(A) - X^0_i > 0}}{\cov \bs{T_i, X_i \mid S_i(A) - X^0_i > 0}} \\
    &= \E\left[\tau_i \cdot \frac{\alpha_i |S_i(A) - X^0_i|}{\E[\alpha_i |S_i(A) - X^0_i| \mid S_i(A) - X^0_i > 0]} \mid S_i(A) - X^0_i > 0\right]
\end{align}

A symmetric expression applies for $S_i(A) - X^0_i < 0$. Both specifications yield non-negative weights under Bayesian updating since $\alpha_i > 0$.

\subsubsection{Exposure-Weighted Instruments} \label{app_tsls_exposure}

An example of the exposure-weighted instrument is presented in Section \ref{sec:main_passive_exp_weights}.

\[ \wt{T}^{ex}_i \equiv (T_i - \E[T_i])(S_i(A) - S_i(B)) \]

The recentering is implicit since in practice researchers use the interaction as an instrument and control for the uninteracted exposure. Recall the notational device that in the passive design $S_i(B) = X_i^0$. These weights proportional to $\alpha_i (S_i(A)-X^0_i)^2$ are non-negative under Bayesian updating and in a general class of updating models when the monotonicity assumption holds: $\text{sign}(X_i(A)-X_i(B)) = \text{sign}(S_i(A)-S_i(B))$.

\citet{vz_aeri} show that implementation of these specifications requires care, as including both the exposure-weighted instrument and the treatment indicator can result in misspecification.

\subsection{Implementation When Priors Are Unobserved} \label{sec:no_priors}

Some experiments do not elicit prior beliefs directly. Under Bayesian updating, the direction of the belief update can be inferred from the posterior belief and the signal. If the posterior lies between the prior and signal, then $\text{sign}(S_i(A) - X_i) = \text{sign}(S_i(A) - X^0_i)$, allowing sample splitting even when  priors are unobserved.
This assumption identifies the same causal parameters that are targeted by $\beta^{\text{split}}_+$ and $\beta^{\text{split}}_-$ in Appendix \ref{app_tsls_split}. 

Since the control group that is not shown a signal, we directly observe their prior: recall that $X_i(B) = X^0_i$ in passive designs. Since the signal is known, we can directly condition on the sign of $(S_i(A) - X^0_i)$. The prior for the treated group is unknown and we observe only $X_i(A)$. But since we can rewrite the potential outcome equation in \ref{eq:beliefs_PO} as

$$S_i(A) - X_i(A)  = (1 - \alpha_i)(S_i(A) - X^0_i)$$

and since $\alpha \in \bp{0,1}$ then

$$ S_i(A) - X_i(A) > 0 \iff (S_i(A) - X^0_i)  > 0$$

We used the Bayesian updating structure, but note this could be relaxed to include any model of updating such that the posterior lies between the prior and the signal.

Thus, although the regressions in Section \ref{app_tsls_split} are not feasible since they use the prior to split the sample, the following regressions are feasible and equivalent.

\begin{align}
    \beta^{\text{split}}_+ = \wt{\beta}^{\text{split}}_+  &\equiv   \frac{\cov \bs{T_i,Y_i \mid S_i(A)-X_i > 0}}{\cov \bs{T_i,X_i \mid S_i(A)-X_i > 0}} \label{eq:split_wald_noprior_reg_up} \\
    \beta^{\text{split}}_- = \wt{\beta}^{\text{split}}_-  &\equiv   \frac{\cov \bs{T_i,Y_i \mid S_i(A)-X_i < 0}}{\cov \bs{T_i,X_i \mid S_i(A)-X_i < 0}} \label{eq:split_wald_noprior_reg_down}
\end{align}

\end{document}